\documentclass[11pt,a4paper]{article}

\hfuzz 3pt  
\flushbottom 

%

\usepackage[english]{babel}
\usepackage[utf8]{inputenc}
\usepackage{lmodern}

\usepackage{hyperref}
\usepackage{cite}

\usepackage{amsmath}
\usepackage{amsfonts,amsthm,amssymb}
\usepackage{physics}
\usepackage[dvips]{graphicx}
\usepackage{mathrsfs}
\usepackage{makeidx}
\usepackage[nottoc]{tocbibind}
\usepackage{pstricks}
\usepackage{bbm}
\usepackage{amssymb}
\usepackage{epic}
\usepackage{eepic}
\usepackage{epsfig}
\usepackage{hyperref} 
	\hypersetup{
	    colorlinks,
	    linkcolor={red!50!black},
	    citecolor={blue!50!black},
	    urlcolor={blue!80!black}
	}
\usepackage{youngtab}
\usepackage{ytableau}
\usepackage[font=small,labelfont=bf,labelsep=space]{caption}
\usepackage{doi}
\usepackage{authblk} 

\numberwithin{equation}{section}

\newcommand{\beqa}{\begin{eqnarray}}
\newcommand{\eeqa}{\end{eqnarray}}
\newcommand{\rf}[1]{(\ref{#1})}

\newtheorem{theorem}{Theorem}[section]
\newtheorem{proposition}{Proposition}[section]
\newtheorem{lemma}{Lemma}[section]

\newtheorem{corollary}{Corollary}[section]

\newtheorem{conjecture}{Conjecture}[section]

\newcommand{\C}{\ensuremath{\mathbb{C}}}
\newcommand{\M}{\ensuremath{\mathcal{M}}}
\newcommand{\N}{\ensuremath{\mathcal{N}}}
{\theoremstyle{remark}
\newtheorem{rem}{Remark}[section]}
\newtheorem{identity}{Identity}
\setcounter{identity}{0}
%








\newcommand{\la}{\lambda}


\textheight 228mm
\textwidth 170mm
\topmargin -0.2cm
\oddsidemargin -4mm
\evensidemargin -3mm
\flushbottom
\numberwithin{equation}{section}

\title{Separation of variables bases for integrable $gl_{\mathcal{M}|\mathcal{N}}$ and Hubbard models}
\author[,1]{J. M. Maillet\thanks{maillet@ens-lyon.fr}}
\author[,1]{G. Niccoli\thanks{giuliano.niccoli@ens-lyon.fr}}
\author[,1]{L. Vignoli\thanks{louis.vignoli@ens-lyon.fr}}

\affil[1]{Univ Lyon, Ens de Lyon, Univ Claude Bernard, CNRS, Laboratoire de Physique, UMR 5672, F-69342 Lyon, France}

\begin{document}

\maketitle

\begin{abstract}
We construct quantum Separation of Variables (SoV) bases for both the fundamental inhomogeneous $%
gl_{\mathcal{M}|\mathcal{N}}$ supersymmetric integrable models and for the inhomogeneous
Hubbard model both defined with quasi-periodic twisted boundary conditions given by twist matrices having simple spectrum. The SoV bases are
obtained by using the integrable structure of these quantum models,
i.e. the associated commuting transfer matrices, following the general scheme
introduced in \cite{MaiN18}; namely, they are given by set of states generated by the multiple actions of the transfer matrices on  a generic co-vector. The existence of such SoV bases implies that the corresponding transfer matrices have non degenerate spectrum and that  
they are diagonalizable with simple spectrum if the twist matrices defining the quasi-periodic boundary conditions have that property. Moreover, in these SoV bases the resolution of the transfer matrix
eigenvalue problem leads to the resolution of the full spectral
problem, i.e. both eigenvalues and eigenvectors. Indeed, to any eigenvalue is associated the unique (up to a trivial overall normalization) eigenvector whose wave-function in the SoV bases 
is factorized into products of the corresponding  transfer matrix eigenvalue computed on the spectrum of
the separated variables. As an application, we characterize completely the transfer
matrix spectrum in our SoV framework for the fundamental $gl_{1|2}$ supersymmetric integrable model associated to a special class of twist matrices. From these results we also prove the completeness of the Bethe Ansatz for
that case. The complete solution of
the spectral problem for fundamental inhomogeneous $gl_{\mathcal{M}|\mathcal{%
N}}$ supersymmetric integrable models and for the inhomogeneous Hubbard model under
the general twisted boundary conditions will be addressed in a future publication.
\end{abstract}

\clearpage
\noindent\rule{\textwidth}{1pt}
\tableofcontents 
\vspace{10pt}
\noindent\rule{\textwidth}{1pt}
\clearpage

\section{Introduction}

In this paper, we generalize the construction introduced in \cite{MaiN18} to
generate quantum separation of variables (SoV) bases for the class of
integrable quantum lattice models associated to the $gl_{\mathcal{M}|%
\mathcal{N}}$ Yang-Baxter superalgebras \cite{KulS80,Kul85,kulish1996yang} and to the Hubbard
model \cite{Hub63,Hub64,Gut63,Kan63,Hub-book-1995,Hub-book} with quasi-periodic twisted boundary conditions given by twist matrices having simple spectrum. The quantum
version of the separation of variables and its development in the integrable
framework of the quantum inverse scattering method \cite%
{FadS78,FadST79,FadT79,Skl79,Skl79a,FadT81,Skl82,Fad82,Fad96} originate in
the pioneering works of Sklyanin \cite{Skl85,Skl90,Skl92,Skl92a,Skl95,Skl96}%
. Since then, the SoV method has been successfully applied to several
quantum integrable models \cite%
{BabBS96,Smi98a,Smi01,DerKM01,DerKM03,DerKM03b,BytT06,vonGIPS06,FraSW08,MukTV09,MukTV09a,MukTV09c,AmiFOW10,NicT10,Nic10a,Nic11,FraGSW11,GroMN12,GroN12,Nic12,Nic13,Nic13a,Nic13b,GroMN14,FalN14,FalKN14,KitMN14,NicT15,LevNT15,NicT16,KitMNT16,MarS16,KitMNT17,MaiNP17,MaiNP18}%
.

Integrable quantum models define the natural background to look for exact
non-perturbative results toward the complete solution of some 1+1
dimensional quantum field theories or some equivalent two-dimensional
systems in statistical mechanics. They have found natural applications in
the exact description of several important phenomena in condensed matter and
have provided exact results to be compared with experiments. A prominent example  is the quantum Heisenberg spin chain \cite{H28} introduced as a
model to study phase transitions and critical points of  magnetic systems.
First exact results for the Hamiltonian's spectrum (eigenvalues and
eigenvectors) have been obtained by Bethe for the spin 1/2 XXX chain, thanks
to his famous coordinate ansatz \cite{B31}. Then it has been extended to the
anisotropic spin 1/2 XXZ chain in \cite{H28c,W59}, while Baxter \cite%
{Ba72,Bax73-tot} has obtained first exact results for the Hamiltonian's
spectrum of the fully anisotropic spin 1/2 XYZ chain. In statistical mechanics,
these quantum models correspond to the six-vertex and eight-vertex models.
The ice-type (six-vertex) models \cite{Pau35}, accounting for the residual
entropy of water ice for crystal lattices with hydrogen bonds, has first
been described in the Bethe Ansatz framework in \cite{Lie67c}. For
the eight-vertex model \cite{Sut70,Fan70} first exact results in the
two-dimensional square lattice are due to Baxter \cite%
{Bax71a,Bax73-tot,BaxBook}. The integrable structure\footnote{%
See also \cite{Lie67,Lie67a,Lie67b,Lie67c,Sut67,McC68,Sut70} for some
previous partial understating.} of these spin chains and statistical
mechanics models has been revealed in the Baxter's papers \cite%
{Ba72bis,Ba72,BaxBook}. There, the one-parameter family of eight-vertex
transfer matrices have been shown to be commutative and the XYZ Hamiltonian
to be proportional to its logarithmic derivative, when computed at a particular 
value of its spectral parameter. The subsequent development of a systematic
description of quantum integrable models has been achieved through the development of the  quantum
inverse scattering framework \cite%
{FadS78,FadST79,FadT79,Skl79,Skl79a,FadT81,Skl82,Fad82,Fad96}. In
particular, the work of Faddeev, Sklyanin and Takhtajan \cite{FadST79} has
set the basis for the classification of the Yang-Baxter algebra
representations and the natural framework for the discovering of quantum
groups \cite{KuRe83,KR86,J85,D87}. The paper \cite{FadST79} has also introduced the Algebraic Bethe Ansatz (ABA), an algebraic
version of the original coordinate Bethe Ansatz.

Exact results are also available for the quantum dynamics, \emph{i.e.} form factors
and correlations functions, of some integrable quantum models. This is for
example the case for XXZ quantum spin 1/2 chain under special boundary
conditions whose correlation functions admits multiple integral
representations \cite%
{JIMBO1992256,JM95,JimKKKM95-1,JimKKMW95-2,KitMT99,MaiT00,KMT00,KMST02,KMST05,KMST05b,KitKMNST07,KitKMNST08,BasK2014}%
.

In this context, the form factor expansion has proven to be a very powerful
tool: on the one hand, to compute dynamical structure factors \cite%
{CHM05,CM05}, quantities directly accessible experimentally through neutron
scattering \cite{Ken2012}; on the other hand, to have access to the
asymptotic behavior of correlation functions of these XXZ chains in the
thermodynamic limit and explicit contact with conformal field theory \cite%
{KMST02,KitMST02a,KMST05,KMST05b,KKMST09,KKMST09b,KitKMST11,KozMS11,KitKMST12,KozM2015}.

Integrable quantum models also led to non-perturbative results in the
out-of-equilibrium physics context, ranging from the relaxation behaviour of
some classical stochastic processes to quantum transport. The XXZ quantum
spin chains, under general integrable boundary conditions, appear for
example both in the description of the asymmetric simple exclusion processes 
\cite{D98,Sch00,-ShiW97b, AlcDHR94,-Baj,deGierE05,deGierE06} and the
description of transport properties of quantum spin systems \cite%
{SirPA09,Pro11}.

The new experiments allowing ultra-cold atoms to be trapped in optical
lattices have produced concrete realizations of quantum integrable lattices,
like the Heisenberg spin chains but also more sophisticated models like the
Hubbard model \cite{JorSGME08,Nature_Hub13}. They provide a further natural
context for direct comparison of exact theoretical predictions with
experiments.

The Hubbard model is of fundamental importance in physics. It is a
celebrated quantum model in condensed matter theory, defining a first
generalization beyond the band approach for modelling the solid state
physics. It manages to describe interacting electrons in narrow energy bands
and allows to account for important physical phenomena of different physical
systems. Relevant examples are the high temperature superconductivity, band
magnetism and the metal-insulator transitions. We refer to the book 
\cite{Hub-book} for a more detailed description of its physical applications
and of the known exact results and relevant literature. Here, let us recall
that the  Hubbard chain is integrable in the quantum inverse
scattering framework and it has been first analysed by Bethe Ansatz
techniques in a famous paper by Lieb and Wu  \cite{LieW68,LiebW03}. Coordinate Bethe Ansatz
wave-functions for the Hubbard Hamiltonian eigenvectors have been obtained
in \cite{Woy82} and subsequent papers. The quantum inverse scattering
formulation has been achieved thanks to the Shastry's derivation of the $R$%
-matrix \cite{Sha86,Sha86a,Sha88}. From this, the one-parameter family of
commuting transfer matrices, generating the Hubbard Hamiltonian by standard
logarithmic derivative, can be introduced. The proof that this $R$-matrix
satisfies the Yang-Baxter equation has been given in \cite{ShiW95}. In \cite%
{MarR97,MarR98,YueD97} a Nested Algebraic Bethe Ansatz for the Hubbard model
has been introduced while the quantum transfer matrix approach to study the
thermodynamics of the Hubbard model has been considered in \cite{JutKS98}.
Interestingly, the Hubbard Hamiltonian is invariant under the direct sum of
two $Y(sl_2)$ Yangians, as derived in \cite{UglK94,GohI96,MurG97,MurG98},
while the structure of the fusion relations for the Hubbard model have been
studied in \cite{Bei15}, see also \cite{CavCMT15} for the finite temperature case. It is also relevant to remark that under a strong
coupling limit and some special choice of the remaining parameters \cite%
{Hub-book}, the Hubbard Hamiltonian leads to the Hamiltonian of the $t$-$J$%
$gl_{1|2}$-supersymmetric model, another well known model for the description
of the high-temperature superconductivity, see \cite{EssK92,PfaF96,Goh02} and references
therein.

While integrable quantum models naturally emerge in the description of 1+1
quantum or 2-dimensional statistical mechanics phenomena, they are not
really confined to this realm. For example, they play a fundamental role in
deriving exact results also for four-dimensional quantum field theory like
the planar $N=4$ Supersymmetric Yang-Mills (SYM) gauge theory, see the
review paper \cite{Bei12} and references therein. In this context,
integrability tools have been used to derive exact results, like
characterizations of the scaling dimensions of local operators for general
values of the coupling constant. Notably, such results can be used also as a
test of the AdS/CFT correspondence in this planar limit. Indeed, holding for
arbitrary values of the coupling, they allow for a verification of the agreement
both at weak and strong couplings with the perturbative results obtained
respectively in gauge and string theory contexts. Integrability is also
becoming relevant in the exact computation of observables. Interestingly, quantum integrable higher rank spin and super-spin
chains have found applications in the computation of correlation functions
in $N=4$ SYM, see \emph{e.g.} \cite%
{EscGSV11,GroV13,VieW14,CaeF14,BasKV15,JiaKKS16}. The same 
integrable Hubbard model enters in the description of the planar $N=4$ SYM
gauge theory \cite{Bei12,Rej12} in the large volume
asymptotic regime. Indeed, relevant examples are the
connection between its dilatation generator at weak coupling and the Hubbard Hamiltonian
derived in \cite{RejSS06} and the equivalence shown\footnote{One should remark that the spin chain approaches, as those in \cite{Bei07,MarM07}, miss the so-called wrapping corrections of the AdS/CFT spectrum while a full description for this spectral problem has been proposed in \cite{GroKLV14} and thereafter extensively tested, see e.g. \cite{GroKLV15} and the reviews \cite{Gro17} and \cite{Lev20} for further developments.} in \cite{Bei07,MarM07} between the bound state $S$%
-matrix for $AdS_{5}\times S^{5}$ superstring \cite{Bei08,AruFPZ07,Bei06}
and two copies of the Shastry's $R$-matrix of the Hubbard model multiplied by a nontrivial dressing phase. Moreover, this $S$-matrix enjoys the Yangian symmetry
associated to the centrally extended $su(2|2)$ superalgebra \cite{Bei08,AruFPZ07,Bei06,BeiL14} and an Analytic Bethe Ansatz description of the
spectrum has been introduced on this basis in \cite{Bei07,MarM07,RagS07}.

The large spectrum of applications of these higher rank quantum integrable
models and of the Hubbard model, clearly motivate our interest in their
analysis by quantum separation of variables. 
Let us mention that the first interesting analysis toward the SoV description of higher rank models have been presented in \cite{Skl96,Smi01}, see also \cite{MarS16}. More recently, in  \cite{GroLMS17}, by the exact analysis of quantum chains of small sizes, the spectrum of the Sklyanin’s B-operator has been conjectured together with its diagonalizability for fundamental representations of $gl_3$ Yang-Baxter algebra associated to some classes of twisted boundary conditions. While in  \cite{DerV18} the SoV basis has been constructed for non-compact representations.
In \cite{MaiN18,MaiN19,MaiN19a,MaiN18e} we have solved the transfer matrix spectral problem\footnote{While in \cite{MaiN18b}, we have described in detail how our approach works beyond fundamental representations for $Y (gl_2)$.} for a large class of higher rank quantum integrable models. That is for integrable quantum models associated to the fundamental representations of the $Y (gl_n)$ and $U_q(\widehat{gl_n})$ Yang-Baxter algebra and of the $Y (gl_n)$ reflection algebra. This has been done by introducing and developing a new SoV approach relying only on the integrable structure of the model, i.e. the commutative algebra of conserved charges. In \cite{RyaV18,RyaV20} our construction of SoV bases has been extended to general finite dimensional representations of $gl_n$ Yang-Baxter algebra with twisted boundary conditions. In \cite{MaiN18}, we have proven for the $gl_2$ representations and for small size $gl_3$ representations that our SoV bases can be made coinciding with the Sklyanin’s ones, if Sklyanin’s construction can be applied. In \cite{RyaV18,RyaV20} this statement has been extended to the higher rank cases, in this way providing an SoV proof\footnote{Indeed, the first proof of this conjecture has been given in \cite{LiaS18} in the nested Bethe Ansatz framework.} of the non-nested Bethe Ansatz representation conjectured in \cite{GroLMS17} of the transfer matrix eigenstates as Sklyanin’s $B$-operator multiple action in the zeros of a polynomial $Q$-function on a given reference state.

Here, we construct the quantum Separation of Variables (SoV) bases in the 
representation spaces of both the fundamental inhomogeneous $gl_{\mathcal{M}|\mathcal{N}}$ Yang-Baxter superalgebras and the inhomogeneous Hubbard model
under general quasi-periodic twisted boundary conditions defined by twist matrices having simple spectrum. Let us mention here  an interesting  proposal for a representation of the eigenvectors using a single $B$ operator in \cite{GroF2018} for $gl_{\mathcal{M}|\mathcal{N}}$ models inspired by the SoV related methods \cite{GroLMS17}. In our approach, the SoV bases are constructed by
using the known integrable structure of these quantum models, \emph{i.e.} the
associated commuting transfer matrices, following our general ideas introduced in \cite{MaiN18}. The SoV bases are generated by the multiple actions of the transfer matrices on a generic co-vector of the Hilbert space.  The fact that we are able to prove that such sets of co-vectors indeed form bases of the space of states implies important
consequences on the spectrum of the transfer matrices. In fact, it
follows that the transfer matrices have non degenerate (simple) spectrum or that they are diagonalizable with simple spectrum if
the twist matrix respectively has simple spectrum or is diagonalizable with simple
spectrum. Moreover, in our SoV bases the resolution of the transfer matrix
eigenvalue problem is equivalent to the resolution of the full transfer
matrix spectrum (eigenvalues and eigenvectors). Indeed, our SoV bases allow
us to associate uniquely to any eigenvalue an eigenvector whose
wave-function has the factorized form in terms of product of the transfer matrix 
eigenvalues on the spectrum of the separated variables.

It is worth pointing out that for these classes of higher rank quantum integrable
models, fewer exact results are available when compared to those described
for the best known examples of the XXZ spin 1/2 quantum integrable chains.
Exact results are mainly confined to the spectral problem and only recently
some breakthrough has been achieved toward the dynamics in the framework of
the Nested Algebraic Bethe Ansatz (NABA) \cite{KulR81, KulR83,BelR08,PakRS18}%
, for some higher rank spin and super-spin chains \cite{ProS20,BelPRS12,BelPRS12a,BelPRS13,BelPRS13a,PakRS14,PakRS14a,PakRS15,PakRS15a,PakRS15b,LiaS18}%
.

More in detail, in the supersymmetric case, the associated spectral problem
has been analysed by using the transfer matrix functional relations,
generated by fusion \cite{KulRS81,KulR82,Res83} of irreducible
representations in the auxiliary space of the representation. The Analytic
Bethe Ansatz \cite{Res83,OgiW86,ResW87} developed in this functional
framework has been applied to the spectral problem of these
supersymmetric models. An important step in the systematic description and
analysis of these functional equations has been done by rewriting them in 
Bazhanov and Reshetikhin's determinant form in \cite{BazR90}, and in a
Hirota bilinear difference equation form in \cite{KluP92,KunN92,KunNS94}.
These so-called $T$-systems appear both in classical and quantum
integrability. An interesting account for their relevance and different
application areas can be found in \cite{KunNS11}. The validity of these
fusion rules and of Analytic Bethe Ansatz description in the supersymmetric
case have been derived in \cite{Tsu97,Tsu98,Tsu98a,Tsu06}. In \cite%
{KazSZ08,Zab08} a method has been developed and applied to the
supersymmetric case based on the use of B{\"a}cklund transformations on the
Hirota-type functional equations \cite{KriLWZ97,Zab97,Zab98}. It allows a
systematic classification of the different Nested Algebraic Bethe Ansatz
equations and $TQ$-functional equations, which emerge naturally in the
supersymmetric case, due to different possible choices of the systems of
simple roots. It also allows the identification of $QQ$-functional equations
of Hirota type for the Baxter's $Q$-functions, see for example \cite{KazLV16,MarV17,MarV18}. Nested Algebraic Bethe Ansatz
\cite{Goh02,FraG2010,PakRS18} has been successfully used to get Bethe vectors
representations for fundamental representations of $Y(gl_{\mathcal{M}|%
\mathcal{N}})$ and $U_q(\widehat{gl_{\mathcal{M}|\mathcal{N}}})$, see also the recent result \cite{MukHL18}, while
determinant formulae for Bethe eigenvector norms, scalar products and some
computations of form factors have been made accessible in \cite%
{HutLPRS16,HutLPRS17,PakRS17,HutLPRS17a} for the $Y(gl_{1|2})$ and $%
Y(gl_{2|1})$ case. The completeness of the Nested Algebraic Bethe Ansatz approach for supersymmetric Yangian representations has been shown in the following papers\footnote{Both the papers \cite{MukL19,CheLV20} appeared after the present paper and they are not directly related to our SoV approach.} \cite{MukL19,CheLV20}, respectively for the representations of $Y(gl_{1|1})$  and of the general $Y(gl_{\mathcal{M}|\mathcal{N}})$, in the setup of the so-called $QQ$-Wronskian equations introduced in \cite{MukTV13} for the non-supersymmetric case $Y(gl_{\mathcal{M}})$ and more recently in \cite{MukHL18} for the $Y(gl_{\mathcal{M}|\mathcal{N}})$  models.

In this paper we start to develop the quantum separation of variables method for these supersymmetric integrable quantum models. The natural advantage of the SoV method is that it is not an Ansatz method and then the completeness of the spectrum description is mainly a built-in feature of it, as proven for a large class of quantum integrable models \cite{NicT10,Nic10a,Nic11,GroMN12,GroN12,Nic12,Nic13,Nic13a,Nic13b,GroMN14,FalN14,FalKN14,KitMN14,NicT15,LevNT15,NicT16,KitMNT16,MarS16,KitMNT17,MaiNP17}. More in detail, no Ansatz is done on the SoV representation of transfer matrix eigenvectors\footnote{In the Bethe Ansatz framework, the fact that the form of the eigenvectors is fixed by the Ansatz implies that to prove the completeness of the spectrum description one has to define first admissibility conditions which generate nonzero vectors and then one has to count the number of these solutions and show that it coincides with the dimension of the representation space, in absence of Jordan blocks. This first step is for example done in the papers \cite{MukTV13,MukHL18,CheLV20} through the introduction of the isomorphism to the QQ-Wronskian equations.}. Indeed, their factorized wave-functions in terms of the eigenvalues of the transfer matrix, or of the Baxter's $Q$-operator \cite{Bax73-tot,Bax73,Bax76,Bax77,Bax78,PasG92,BatBOY95,YunB95,BazLZ97,AntF97,BazLZ99,Der99,Pro00,Kor06,Der08,BazLMS10,Man14,BooGKNR10,BooGKNR14,BooGKNR16,BooGKNR17,MenT15,FraLMS11} are just a direct consequence of the form of the SoV basis. Moreover, these SoV representations are extremely simple and universal and should lead to determinant formulae for scalar products\footnote{Indeed, our recent results on higher rank scalar products \cite{MaiNV20} show the appearance of simple determinant formulae once the SoV basis are appropriately chosen, see also \cite{CavGLM19,GroLMRV19} for some interesting SoV analysis of the higher rank scalar products.}. The SoV representation of transfer matrix eigenvectors also brings to Algebraic Bethe Ansatz rewriting of non-nested type for the eigenvectors\footnote{Note that this SoV versus ABA rewriting of transfer matrix eigenvectors was first observed in a rank 1 case in \cite{DerKM03,DerKM03b} and it can be extended in general for polynomial Q-operators, as e.g. argued in \cite{MaiN18}. One has to mention that these non-nested forms were first proposed in \cite{GroLMS17,GroF2018} together with the form of the $B$-operator for rank higher than 2.}, \emph{i.e.} as the action on a SoV induced "reference vector" of a single monomial of SoV induced "$B$-operators" over the zeros of an associated $Q$-operator eigenvalue \cite{MaiN18}. It is worth to point out that this represents a strong simplification \emph{w.r.t.} the eigenvector representation in NABA approach, where the holding different representations \cite{PakRS18} are equivalent to an
"explicit representation" which is written in the form of a sum over
partitions. This type of results in the SoV framework is even more important
in the case of the Hubbard model. Indeed, there, algebraic approaches like
NABA are mainly limited to the two particle case \cite{MarR97} and other
exact results are accessible only via coordinate Bethe Ansatz\footnote{%
In fact, to our knowledge, the generic N-particle transfer matrix
eigenvalues are well verified guesses \cite{Hub-book}, no Bethe vectors
representation is achieved for the corresponding eigenvectors and the
conjectured norm formula \cite{Hub-book} has to be proven yet.}.

\bigskip

The paper is organized as follows. In section 2, we first
shortly present the graded formalism for the superalgebras $gl_{\mathcal{M}|%
\mathcal{N}}$ and their fundamental representations, we sum up the main
properties of the hierarchy of the fused transfer matrices and their
reconstruction in terms of the fundamental\footnote{%
That is the transfer matrix associated to Lax operators on isomorphic
auxiliary and quantum spaces, \emph{i.e.} in our fundamental representations the
transfer matrices with Lax operators coinciding with the R-matrix.} one. The
SoV basis is then constructed in subsection 2.4 by using the integrable
structure of these models. In subsection 2.5, we make some general statement
about the closure and admissibility conditions to fix the transfer matrix
spectrum for these quantum integrable models. In section 3, we specialize the
discussion on the $gl_{1|2}$ model. We state our conjecture on the
corresponding closure conditions and we present some first arguments in
favour of it in subsection 3.1. Then, we treat in detail a special twisted
case in subsection 3.2, for which we prove that the entire spectrum of the
transfer matrix is characterized by our conjecture. Then, we give a
reformulation of the spectrum in terms of the solutions to a quantum
spectral curve equation. Moreover, for these representations, we show the
completeness of the Nested Algebraic Bethe Ansatz, by proving that any
eigenvalue can be rewritten in a NABA form using our $Q$-functions. In
section 4, we derive an SoV basis for the Hubbard model with general integrable twist matrix having simple spectrum. Finally, in
appendix \ref{subsec:SoV_NABA} for the $gl_{1|2}$ model with general
integrable twist matrices, we verify that the NABA form of eigenvalues
satisfies the closure and admissibility conditions, implying its
compatibility with our conjecture in the SoV framework. In appendix \ref%
{Ver-Num}, we present the proof that our conjecture indeed 
completely characterizes the transfer matrix spectrum for any integrable 
twist matrix having simple spectrum for the model defined on two sites while we verify this property by numerical computations for three
sites. In appendix \ref{app:coderivative_proof}, we give a derivation of the closure
relation for the $gl_{\mathcal{M}|\mathcal{N}}$ case.

\bigskip

\section{\texorpdfstring{Separation of
variables
for integrable 
$gl_{\mathcal{M}|\mathcal{N}}$ fundamental models}{Separation
of
variables for integrable gl(M|N) fundamental models}%
}

Graded structures and Lie superalgebras are treated in great details in \cite{KAC19778,musson2012lie}. The quantum inverse scattering construction for graded models was introduced in \cite{KulS80,Kul85,kulish1996yang}, and summarized in many articles, see e.g. \cite{Gru00,Hub-book,RagS07}. Details on Yangians structures for Lie superalgebras can be found in \cite{Naz91,Gow07}.

For the article to be self contained, we introduce the graded algebra $gl_{\mathcal{M}|\mathcal{N}}$ and its fundamental Yangian model in the following, and make explicit the notations and rules for graded computations. 

\subsection{\texorpdfstring{Graded formalism and integrable $gl_{\mathcal{M|\mathcal{N}}}$ fundamental models}{Graded formalism and integrable gl(M|N) fundamental models}}

A super vector space $V$ is a $\mathbb{Z}_2$-graded vector space, \emph{ie.} we have 
\begin{equation}
V = V_0 \oplus V_1.
\end{equation}
Vectors of $V_0$ are even, while vectors of $V_1$ are odd. Objects that have a well-defined parity, either even or odd, are called homogeneous. 
The parity map, defined on homogeneous objects, writes
\begin{equation}
p:A\in V \longmapsto p(A) = \bar{A}=\begin{cases}
0 \quad \text{ if } A\in V_0\\ 1 \quad \text{ if } A\in V_1
\end{cases}.
\end{equation}

Maps between $\mathbb{Z}_2$-graded objects are called even if they preserve the parity of objects, or odd if they flip it. An associative superalgebra is a super vector space with an even multiplication map that is associative and the algebra has a unit element for the multiplication. For a superalgebra $V$ we have $V_i V_j \subseteq V_{i+j(\operatorname{mod}2)}$.  A Lie superalgebra is a super vector space $\mathfrak{g}=\mathfrak{g_0}\oplus\mathfrak{g_1}$ equipped with an even linear map $[\, , \,]:\mathfrak{g}\otimes\mathfrak{g}\rightarrow\mathfrak{g}$ that is graded antisymmetric and satisfies the graded Jacobi identity.

The set of linear maps from $V$ to itself is noted $\operatorname{End}V$, and it is a $\mathbb{Z}_2$-graded vector space as well. It is an associative superalgebra with multiplication given by the composition. It is also a Lie superalgebra with the Lie super-bracket defined as the graded commutator between homogeneous objects for the multiplication
\begin{equation} \label{eq:graded_commutator}
[A,B] = AB - (-1)^{\bar{A}\bar{B}}BA,
\end{equation}
which extends linearly to the whole space. As a Lie superalgebra, it is denoted $gl(V)$.

\paragraph*{Tensor products}

The tensor product of two super vector spaces $V$ and $W$ is the tensor product of the underlying vectors spaces, with the $\mathbb{Z}_2$-grading structure given by 
\begin{equation}
\text{for } k=0\text{ or }1,\quad (V\otimes W)_k = \bigoplus_{i+j=k(\operatorname{mod}2)}V_i\otimes W_j.
\end{equation}

This also defines the tensor product of associative superalgebras, being defined on the underlying vector space structure, but then we have to define an associative multiplication compatible with the grading. 
For $A$, $B$ two associative superalgebras, the multiplication rule on $A\otimes B$ is given by 
\begin{equation} \label{eq:sign_rule}
(a_1\otimes b_1) (a_2\otimes b_2) = (-1)^{\bar{b}_1\bar{a}_2} a_1 a_2 \otimes b_1 b_2,
\end{equation} 
for $a_1,a_2\in A$ and $b_1,b_2\in B$ homogeneous, and extends linearly to $A\otimes B$.

This rule of sign also appears in the action of $A\otimes B$ on $V\otimes W$, where $V$ is an $A$-module and $W$ is a $B$-module. We have 
\begin{equation}
(a\otimes b)\cdot(v\otimes w) = (-1)^{\bar{b}\bar{v}} a\cdot v\otimes b \cdot w.
\end{equation}
for $a\in A$, $b\in B$, $v\in V$ and $w\in W$.

This rule extends naturally to $N$-fold tensor product, $N>2$. For example, 
\begin{equation}
(a_1\otimes b_1\otimes c_1)(a_2\otimes b_2\otimes c_2) = (-1)^{\bar{a}_2(\bar{b}_1+\bar{c}_1)+\bar{b}_2\bar{c}_1} a_1a_2\otimes b_1b_2\otimes c_1c_2.
\end{equation}

For some authors, the above construction goes explicitely by the name of \emph{super} or \emph{graded} tensor product. We will stick to the name tensor product. 
\footnote{Note also that some authors prefer to use a matrix formalism and by ``super tensor product'' denote a morphism between graded and non-graded structure, see \cite{kulish1996yang}, appendix A of \cite{Goh02} or appendix A \cite{KazSZ08}. 
 We will not make any extensive use of it in the following.}

\paragraph*{Lie superalgebra $gl_{\mathcal{M}|\mathcal{N}}$}
Let $V=\mathbb{C}^{\mathcal{M}|\mathcal{N}}$ be the complex vector superspace with even part of dimension $\mathcal{M}$ and odd part of dimension $\mathcal{N}$.
The general linear Lie algebra $gl_{\mathcal{M}|\mathcal{N}}=gl(\mathbb{C}^{\mathcal{M}|\mathcal{N}})$ is the $\mathbb{Z}_2$-graded vector space $\operatorname{End}\mathbb{C}^{\mathcal{M}|\mathcal{N}}$ with the Lie super-bracket defined by the graded commutator \eqref{eq:graded_commutator}.

We fix a homogeneous basis $\{v_1,\ldots,v_\mathcal{M},v_{\mathcal{M}+1},\ldots,v_{\mathcal{M}+\mathcal{N}}\}$ of $\mathbb{C}^{\mathcal{M}|\mathcal{N}}$, where $v_i$ is even for $i\leq \mathcal{M}$ and odd for $i\geq \mathcal{M}+1$ and we assign a parity to the index themselves for convenience: $\bar{i}=0$ for $i\leq \mathcal{M}$ and $\bar{i}=1$ for $i\geq \mathcal{M}+1$.

The elementary operators $e^j_i$ of $gl_{\mathcal{M}|\mathcal{N}}$ have parity $p(e^j_i) = \bar{i}+\bar{j} (\operatorname{mod}2$). They are defined by their action on the basis of $V$ by
\begin{equation}
e^j_i \cdot v_k =\delta^j_k\, v_i.
\end{equation}
Since they multiply as
\begin{equation}
e^j_i e^l_k = \delta^j_k \, e^l_i,
\end{equation}
it follows that the graded commutator is
\begin{equation}
\left[e^j_i, e^l_k\right] 
= e^j_i e^l_k - (-1)^{p(e^j_i)p(e^l_k)} e^l_k e^j_i
= \delta^j_k \, e^l_i - (-1)^{(\bar{i}+\bar{j})(\bar{k}+\bar{l})}\delta^l_i\, e^j_k.
\end{equation}
Elements of $gl_{\mathcal{M}|\mathcal{N}}$ decompose on the elementary operators as
\begin{equation}
a = \sum_{i,j=1}^{\mathcal{M}+\mathcal{N}} a^i_j e^j_i \equiv a^i_j e^j_i,
\end{equation} 
where in the last term the sum over repeated indexes is omitted, as we will do in the following.
Elements of $\left( gl_{\mathcal{M}|\mathcal{N}}\right)^{\otimes \mathsf{N}}$ writes
\begin{equation}
A = A^{i_1\ldots i_\mathsf{N}}_{j_1\ldots j_\mathsf{N}}\ e^{j_1}_{i_1}\otimes \ldots \otimes e^{j_\mathsf{N}}_{i_\mathsf{N}}.
\end{equation}
Note that due to the sign rule \eqref{eq:sign_rule}, the coordinates $A^{i_1\ldots i_\mathsf{N}}_{j_1\ldots j_\mathsf{N}}$ do not coincide with the components of the image of $v_{j_1}\otimes \ldots\otimes v_{j_\mathsf{N}}$ in the tensored basis of $V^{\otimes\mathsf{N}}$:
\begin{equation} \label{eq:signs_action}
A\cdot v_{j_1}\otimes \ldots\otimes v_{j_\mathsf{N}} 
= (-1)^{\sum_{k=1}^{\mathsf{N-1}}\bar{i}_k(\bar{i}_{k+1}+\ldots+\bar{i}_{\mathsf{N}})} A^{i_1\ldots i_\mathsf{N}}_{j_1\ldots j_\mathsf{N}} \ v_{i_1}\otimes\ldots\otimes v_{i_\mathsf{N}}.
\end{equation}
In the non-graded case, these two tensors would be identical.

One may use the coordinates expression to check the parity of a given operator of $\left( gl_{\mathcal{M}|\mathcal{N}}\right)^{\otimes \mathsf{N}}$. An operator $A$ is homogeneous of parity $p(A)$ if 
\begin{equation}
\forall\, i_1, \ldots, i_\mathsf{N}, j_1, \ldots, j_\mathsf{N}, \quad (-1)^{\bar{i}_1+\bar{j}_1+\ldots+\bar{i}_\mathsf{N}+\bar{j}_\mathsf{N}}\, A^{i_1\ldots i_\mathsf{N}}_{j_1\ldots j_\mathsf{N}} = (-1)^{p(A)} \, A^{i_1\ldots i_\mathsf{N}}_{j_1\ldots j_\mathsf{N}}.
\end{equation} 

The supertrace is defined on the elementary operators as $\operatorname{str}e^j_i = (-1)^{\bar{j}}\delta^j_i$. Elements of $gl_{\mathcal{M}|\mathcal{N}}$ may write as a block matrix
\begin{equation}
A=\begin{pmatrix}
A_{(\mathcal{M},\mathcal{M})} & A_{(\mathcal{M},\mathcal{N})} \\
A_{(\mathcal{N},\mathcal{M})} & A_{(\mathcal{N},\mathcal{N})}
\end{pmatrix} \in gl_{\mathcal{M}|\mathcal{N}},
\end{equation} 
where $A_{(\mathcal{M},\mathcal{M})}$ is an $\mathcal{M}$ by $\mathcal{M}$ square matrix, $A_{(\mathcal{M},\mathcal{N})} $ an $\mathcal{M}$ by $\mathcal{N}$ square matrix, etc.
Hence we have $\operatorname{str}A = \tr A_{(\mathcal{M},\mathcal{M})} - \tr A_{(\mathcal{N},\mathcal{N})}$.
Note that the supertrace vanishes on the graded commutator 
\begin{equation}
\operatorname{str}([A,B]) = 0.
\end{equation}

\paragraph*{Dual space}
Let us denote $\ket{i}\equiv v_i$. The dual basis $\{\bra{j}\}_{j=1,\ldots,\mathcal{M}+\mathcal{N}}$ is defined by 
\begin{equation}
\forall\, i, \quad \braket{j}{i} = \delta_{ji}.
\end{equation}
The covectors are graded by $p(\bra{j}) = \bar{j}$.
The dual of a vector $\ket{\psi} = \psi_i\ket{i}$ is 
\begin{equation}
\bra{\psi} = (\psi_i\ket{i})^\dagger = \psi_i^* \bra{i},
\end{equation}
where the star $*$ stands for the complex conjugation.
For $V^{\otimes\mathsf{N}}$, the dual basis covectors have an additional sign in their definition
\begin{equation} \label{eq:dual_tensor}
(\ket{i_1}\otimes\ldots\otimes\ket{i_N})^\dagger
\equiv \bra{i_1}\otimes\ldots\otimes\bra{i_N} (-1)^{\sum_{k=2}^N\bar{i}_k(\bar{i}_1+\ldots+\bar{i}_{k-1})},
\end{equation} 
such that it compensates for the permutation of vectors and covectors:
\begin{equation}
\begin{aligned}
(\ket{i_1}\otimes\ldots\otimes\ket{i_N})^\dagger (\ket{j_1}\otimes\ldots\otimes\ket{j_\mathsf{N}})
&= (-1)^{\sum_{k=2}^\mathsf{N} \bar{i}_k (\bar{i}_1+\ldots+\bar{i}_{k+1})
+ \sum_{k=2}^\mathsf{N} \bar{i}_k (\bar{j_1}+\ldots+\bar{j}_{k-1}) } 
\braket{i_1}{j_1}\ldots\braket{i_N}{j_N} \\
&= \delta_{i_1j_1}\ldots\delta_{i_Nj_N} .
\end{aligned}
\end{equation}
Similarly, for $\mathsf{N}$ even operators $A_1, \ldots, A_\mathsf{N}$ where each $A_j$ acts non-trivially only in the $j^{\text{th}}$ space of the tensor product $V^{\otimes\mathsf{N}}$, the following matrix element factorizes over the tensorands
\begin{equation}	\label{eq:evenK_factorizes}
\left(\ket{i_1}\otimes\ldots\otimes\ket{i_\mathsf{N}}\right)^\dagger A_1\ldots A_\mathsf{N}
\left(\ket{j_1}\otimes\ldots\otimes\ket{j_\mathsf{N}} \right)
= \bra{i_1}A_1\ket{j_1} \ldots \bra{i_\mathsf{N}} A_\mathsf{N} \ket{j_\mathsf{N}} .
\end{equation}
Indeed, through the matrix elements $\bra{i_a}K_a\ket{j_a}$ that arise from the calculation, the evenness of $K$ forces the grading $\bar{i}_a$ and $\bar{j}_a$ at site $a$ to be equal, so the signs compensate.

\paragraph*{Permutation operator}
The permutation operator has to take account of the grading when flipping vectors $\mathbb{P} \cdot (v\otimes w) = (-1)^{\bar{v}\bar{w}}w\otimes v$.
Thus we have 
\begin{align}
\mathbb{P} &= (-1)^{\bar{\beta}} e^\beta_\alpha \otimes e^\alpha_\beta,\\
\mathbb{P} \cdot (v_i\otimes v_j) &= (-1)^{\bar{i}\bar{j}} v_j\otimes v_i.
\end{align}
Remark the additional signs in the action as discussed in \eqref{eq:signs_action}.
For two homogeneous operators $A$ and $B$, 
\begin{equation}
\mathbb{P}\left(A\otimes B\right) \mathbb{P} = (-1)^{p(A)p(B)}B\otimes A.
\end{equation} 
On an $\mathsf{N}$-fold tensor product $V_1\otimes\ldots\otimes V_\mathsf{N}$, with $V_i\simeq \mathbb{C}^{\mathcal{M}|\mathcal{N}}$, the permutation operator $P_{ab}$ between spaces $V_a$ and $V_b$ writes
\begin{equation}
\mathbb{P}_{ab} = (-1)^{\bar{\beta}}\ \mathbb{I}\otimes\ldots\otimes\mathbb{I}\otimes \underbrace{e^\beta_\alpha}_\text{site $a$} \otimes\mathbb{I}\ldots\otimes\mathbb{I}\otimes \underbrace{e^\alpha_\beta}_\text{site $b$} \otimes\mathbb{I}\ldots\otimes\mathbb{I},
\end{equation}
where the number of identity operators is obvious by the context.
The permutation operator is globally even. 
We have $\mathbb{P}_{ab}^2=\mathbb{I}^{\otimes N}$, and the usual identities are verified
\begin{align}
& \mathbb{P}_{12} = \mathbb{P}_{21}, \\
& \mathbb{P}_{12}\mathbb{P}_{13} = \mathbb{P}_{13}\mathbb{P}_{23} = \mathbb{P}_{23}\mathbb{P}_{12}, \\
& \mathbb{P}_{13}\mathbb{P}_{24} = \mathbb{P}_{24}\mathbb{P}_{13},\quad \mathbb{P}_{12}\mathbb{P}_{34} = \mathbb{P}_{34}\mathbb{P}_{12},
\end{align}
which extend naturally to a $\mathsf{N}$-fold tensor product.

\paragraph*{The $\mathcal{Y}(gl_{\mathcal{M}|\mathcal{N}})$ fundamental model}

The $R$ matrix for the fundamental model of the Yangian $\mathcal{Y}(gl_{\mathcal{M}|\mathcal{N}})$ writes
\begin{equation} \label{eq:Rmatrix}
R(\lambda,\mu) = (\lambda-\mu)\, \mathbb{I}\otimes\mathbb{I}+\eta\,\mathbb{P} \quad\in \operatorname{End}(\mathbb{C}^{\mathcal{M}|\mathcal{N}}\otimes\mathbb{C}^{\mathcal{M}|\mathcal{N}}).
\end{equation}
It is of difference type and decomposes on elementary operators as $R(\lambda) = R^{ik}_{jl}(\lambda)\, e^j_i\otimes e^l_k$ with
\begin{equation} \label{eq:R_notation}
R^{ik}_{jl}(\lambda) \equiv \lambda\, \delta^i_j\delta^k_l + \eta\,(-1)^{\bar{j}} \delta^i_l\delta^k_j.
\end{equation}
It generalizes to a $\mathsf{N}$-fold tensor product : the matrix $R_{ab}(\lambda) = \lambda\, \mathbb{I}^{\otimes(\mathsf{N}+1)} + \eta\, \mathbb{P}_{ab}$ of $\operatorname{End}(V_1\otimes\ldots\otimes V_\mathsf{N})$ who acts non trivially only on $V_a$ and $V_b$ writes
\begin{equation}
R_{ab}(\lambda) = R^{ik}_{jl}(\lambda)
\ \mathbb{I}\otimes\ldots\otimes\mathbb{I}\otimes \underbrace{e^j_i}_\text{site $a$} \otimes\mathbb{I}\ldots\otimes\mathbb{I}\otimes \underbrace{e^l_k}_\text{site $b$} \otimes\mathbb{I}\ldots\otimes\mathbb{I},
\end{equation}
using the generic notation \eqref{eq:R_notation}.
The $R$ matrix is globally even and satisfies the Yang-Baxter equation
\begin{equation} \label{eq:YBE}
R_{12}(\lambda-\mu)R_{13}(\lambda)R_{23}(\mu) = R_{23}(\mu)R_{13}(\lambda)R_{12}(\lambda-\mu).
\end{equation}

Sometimes the Yang-Baxter equation is written in coordinates and is explicitely referred to as a ``graded'' version \cite{KulS80,Kul85}.
By equation \eqref{eq:signs_action}, 
\begin{equation}
R(\lambda) v_j\otimes v_l = \mathsf{R}^{ik}_{jl}(\lambda) v_i\otimes v_k,
\end{equation}
with $\mathsf{R}^{ik}_{jl}(\lambda) = \lambda\delta^i_j \delta^k_l+\eta(-1)^{\bar{i}\bar{k}}\delta^i_l\delta^k_j$.
Then we have
\begin{equation}
\mathsf{R}^{\alpha\beta}_{\alpha'\beta'}(\lambda,\mu)\mathsf{R}^{\alpha'\gamma}_{\alpha''\gamma'}(\lambda)\mathsf{R}^{\beta'\gamma'}_{\beta''\gamma''}(\mu) (-1)^{\bar{\beta}'(\bar{\alpha}'+\bar{\alpha}'')} 
= \mathsf{R}^{\beta\gamma}_{\beta'\gamma'}(\mu)\mathsf{R}^{\alpha\gamma'}_{\alpha'\gamma''}(\lambda)\mathsf{R}^{\alpha'\beta'}_{\alpha''\beta''}(\lambda-\mu) (-1)^{\bar{\beta}'(\bar{\alpha}+\bar{\alpha}')}.
\end{equation} 

One may check the $gl_{\mathcal{M}|\mathcal{N}}$ invariance of the $R$ matrix \eqref{eq:Rmatrix}
\begin{equation}
\forall x\in gl_{\mathcal{M}|\mathcal{N}}, \quad \left[R(\lambda),\,x\otimes\mathbb{I}+\mathbb{I}\otimes x\right] =0.
\end{equation}
If $K$ is an homogeneous even matrix of $gl_{\mathcal{M}|\mathcal{N}}$ of the form
\begin{equation}
K = \begin{pmatrix} \label{Super-block-twist}
K_\mathcal{M} & 0 \\ 0 & K_\mathcal{N}
\end{pmatrix},
\end{equation} 
we have 
\begin{equation}
R(\lambda) (K\otimes \mathbb{I})(\mathbb{I}\otimes K) = (\mathbb{I}\otimes K)(K\otimes\mathbb{I})R(\lambda).
\end{equation}
This is a scalar version of the Yang-Baxter equation \eqref{eq:YBE}, where we put a trivial representation on the third space. 

For a spin chain of length $\mathsf{N}$, we denote the Hilbert space by $\mathcal{H} = V_1\otimes\ldots\otimes V_\mathsf{N}$ and the auxiliary space by $V_0$, all the $V_j$ superspaces being isomorphic to $\mathbb{C}^{\mathcal{M}|\mathcal{N}}$.
Taking an even twist \eqref{Super-block-twist}, the monodromy is an element of $\operatorname{End}(V_0\otimes V_1\otimes \ldots\otimes V_\mathsf{N})$ and writes
\begin{equation} \label{eq:monodromy}
M_0^{(K)}(\lambda) = K_0 R_{0\mathsf{N}}(\lambda-\xi_\mathsf{N})\ldots R_{01}(\lambda-\xi_1),
\end{equation}
where $\xi_1,\ldots,\xi_\mathsf{N}$ are the inhomogeneities of the chain. The monodromy is globally even as a product and tensor product of even operators. 
In coordinates, using the notation $R^{ik}_{jl}$ of \eqref{eq:R_notation}, it writes
\begin{align}
M^{(K)}(\lambda) &= M^{i\,\alpha_1\ldots\alpha_\mathsf{N}}_{j\, \beta_1\ldots\beta_\mathsf{N}}(\lambda)\,e^j_i\otimes e^{\beta_1}_{\alpha_1}\otimes\ldots\otimes e^{\beta_\mathsf{N}}_{\alpha_\mathsf{N}}\\
&= K^i_{j_N}R^{j_\mathsf{N}\alpha_\mathsf{N}}_{j_{\mathsf{N}-1}\beta_\mathsf{N}}(\lambda-\xi_\mathsf{N})\ldots R^{j_1\alpha_1}_{j\beta_1}(\lambda-\xi_1)
\,e^j_i\otimes e^{\beta_1}_{\alpha_1}\otimes\ldots\otimes e^{\beta_\mathsf{N}}_{\alpha_\mathsf{N}},
\end{align}
where all the signs from the multiplication of the operators actually vanish because of the evenness of $R$.
We are dropping the superscript $(K)$ from the coordinates to make the notation less cluttered.
Writing $M^{(K)}(\lambda) = e^j_i\otimes M^i_j(\lambda)$, the above expression shows that the monodromy elements $M^i_j(\lambda)\in\operatorname{End}(\mathcal{H})$ are homogeneous of parity $p(M^i_j(\lambda))=\bar{i}+\bar{j}$.

The Yang-Baxter scheme generalizes to the monodromy thanks to the global eveness of the $R$ matrix and the form $\eqref{Super-block-twist}$ of the twist, and we have
\begin{equation}
R_{ab}(\lambda-\mu)M_a^{(K)}(\lambda)M_b^{(K)}(\mu) = M_b^{(K)}(\mu)M_a^{(K)}(\lambda)R_{ab}(\lambda-\mu).
\end{equation}
One can prove the $\mathcal{Y}(gl_{\mathcal{M}|\mathcal{N}})$ Yang-Baxter relations between the monodromy elements write as
\begin{equation}
\left[M_i^j(\lambda),M_k^l(\mu)\right] = (-1)^{\bar{i}\bar{k}+\bar{i}\bar{l}+\bar{k}\bar{l}} \left(M_k^j(\mu)M_i^l(\lambda) - M_k^j(\lambda)M_i^l(\mu)\right),
\end{equation}
where $\left[M_i^j(\lambda),M_k^l(\mu)\right]$ is the graded commutator \eqref{eq:graded_commutator}.

The transfer matrix is obtained by taking the supertrace over the auxiliary space $V_0$
\begin{equation} \label{eq:first_transfer_matrix}
T^{(K)}(\lambda) = \operatorname{str}_0 M_0^{(K)}(\lambda).
\end{equation}
It is an even operator of $\operatorname{End}(\mathcal{H})$ as a sum of diagonal elements of the monodromy.
Because the supertrace vanishes on the graded commutator, ones proves the commutation of the transfer matrices
\begin{equation}
\forall\, \lambda,\mu\in\mathbb{C}, \quad \left[T^{(K)}(\lambda),T^{(K)}(\mu)\right]=0.
\end{equation}

\subsection{The tower of fused transfer matrices}

Tensor products of fundamental representations of $gl_{\M|\N}$ decompose in
direct sum of irreducible subrepresentations (irreps). Young diagrams are
used to carry out calculations with a mechanic proper to superalgebras,
though very similar to the non graded case \cite%
{Bars1984,Bar85,alfaro1997orthogonality,KazSZ08}. Finite dimensional
irreducible representations are labelled in a unique way by Kac-Dynkin
labels, but the correspondence between Kac-Dynkin labels and Young diagrams
is not one-to-one \cite{bars1983kac}.

Admissible Young diagrams lie inside a \emph{fat hook} domain pictured in
figure \ref{fig:fat_hookMN}, defined in the $(a,b)$ bidimensional lattice as 
$H_{\M|\N}\equiv ( \mathbb{Z}_{\geq 1}\times \mathbb{Z}_{\geq1} )
\setminus ( \mathbb{Z}_{> \mathcal{M}}\times \mathbb{Z}_{> \mathcal{N}%
} )$. Young diagrams can expand infinitely in both $a$ and $b$
directions, but the box $(a \geq\N+1, b\geq \M+1)$ is forbidden, leading to the hook
shape.

\begin{figure}[ht]
\centering \includegraphics[width=0.35\textwidth]{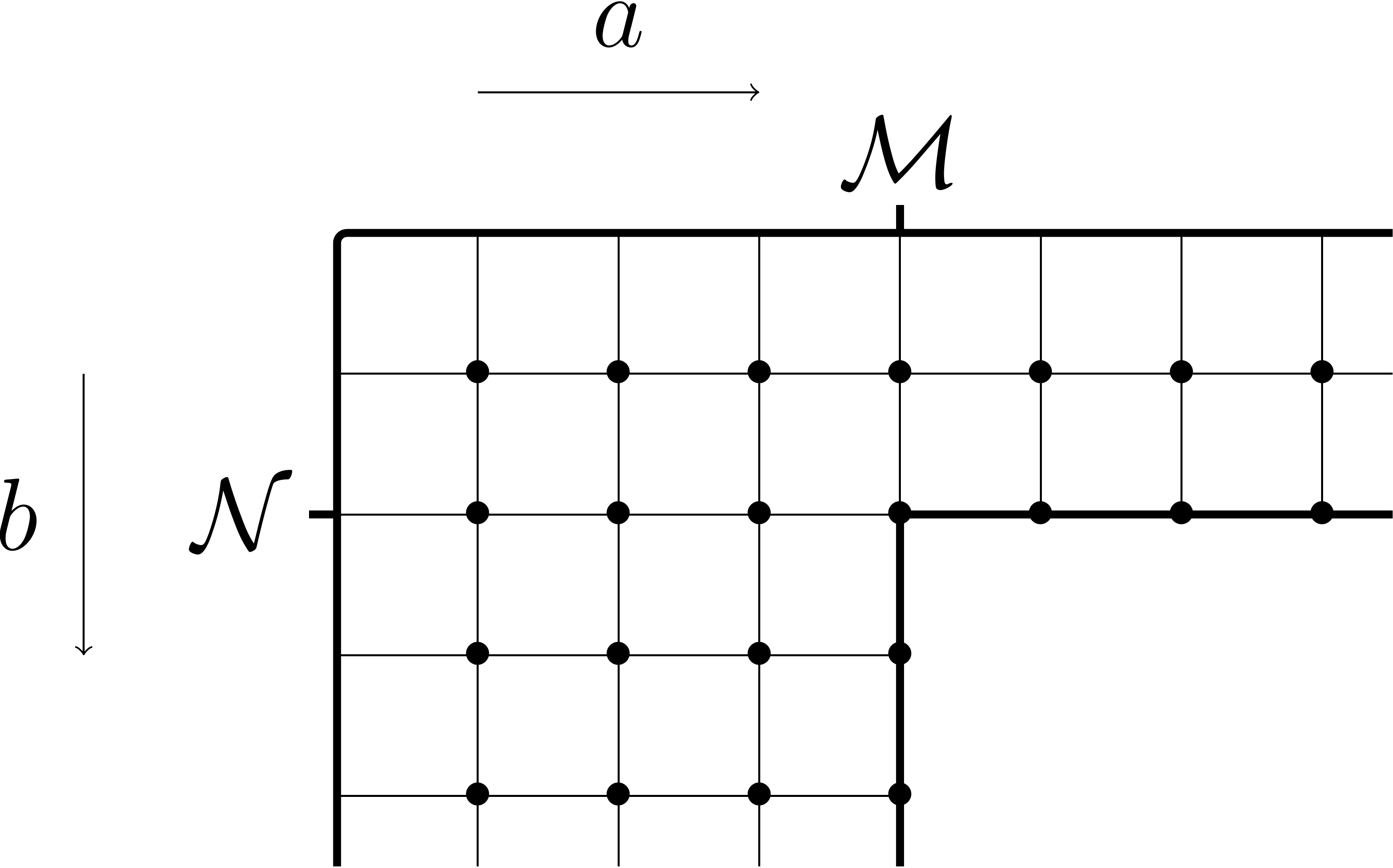}
\caption{The fat hook domain $H_{\mathcal{M}|\mathcal{N}}$ of admissible
Young diagrams for the superalgebra $gl_{\mathcal{M}|\mathcal{N}}$. Bullet
points correspond to admissible coordinates $(a,b)$ defining rectangular
Young diagrams.}
\label{fig:fat_hookMN}
\end{figure}

\begin{rem}
For $\mathcal{N}=0$, the fat hook degenerates to a vertical strip, forcing $%
a\leq \mathcal{M}$. We recover the usual Young diagram indexation of $gl(%
\mathcal{M})$ irreducible representations, though the diagrams are here
displayed vertically, corresponding to the transposition of the usual $gl(\M)$
ones.
This is consistent, for example, with the convention of \cite{KazSZ08}, if one rotate the diagrams found there by $-\pi/2$.
\end{rem}

The tensoring procedure is called fusion in the context of the quantum
inverse scattering method \cite{KulRS81,KulR82}. It is used to generate
higher dimensional $gl_{\M|\N}$-invariant $R$-matrices. The Yang-Baxter
scheme is preserved, as degeneracy points of the fundamental $R$-matrix
allow to construct the projectors $P_\lambda : (\C^{\M|\N})^{\otimes n}
\rightarrow V_\lambda$, that extract the wanted subrepresentation $\lambda$,
as a product of $R$-matrices. Fusing on the auxiliary space from $M_0^{(K)}$%
, we obtain new monodromy operators and thus new transfer matrices of $%
\mathrm{End}(\mathcal{H})$.

For a rectangular Young tableau corresponding to the point $(a,b)\in H_{\M|\N%
}$, with $b$ rows and $a$ columns, the monodromy matrix reads 
\begin{equation}
M_{b}^{(a),\left( K\right) }(\lambda ) \equiv P_{b}^{(a)}\left[%
\bigotimes^{\longleftarrow }{}_{\substack{ 1\leq s\leq a  \\ 1\leq r\leq
b}}\ M^{\left( K\right) }(\lambda +\eta (r-s))\right]P_{b}^{(a)},
\label{eq:monodromy_fusion}
\end{equation}
and the transfer matrix 
\begin{equation}
T_b^{(a),(K)}(u) \equiv \mathrm{str}_{V_b^{(a)}} M_b^{(a),(K)}(u)
\end{equation}
is obtained by taking the supertrace over the $a\times b$ auxiliary spaces $%
V\simeq \mathbb{C}^{\mathcal{M}+\mathcal{N}}$, with $V^{(a)}_b \equiv
V\otimes\ldots\otimes V$, $ab$ times.

The shifts in \eqref{eq:monodromy_fusion} are given by filling the fat hook
as in figure \ref{fig:young_tableau_shifts}. We then read the rectangular
Young diagram column by column, top to bottom from left to right, and tensor
the shifted monodromy \eqref{eq:monodromy} corresponding to the current box
to the left of the previous ones.

\begin{figure}[tbp]
\centering    
\ytableausetup{boxsize=2em} {\scriptsize 
\begin{ytableau}
	0 & -\eta & -2 \eta & -3 \eta & -4 \eta \\
	\eta & 0 & -\eta & -2\eta & -3\eta\\
	2\eta & \eta & 0 \\
	3\eta & 2\eta & \eta
	\end{ytableau}}
\caption{The domain $H_{\mathcal{M}|\mathcal{N}}$ is filled with multiples
of the deformation parameter $\protect\eta$. Starting from $0$ in box $(1,1)$%
, we add $\protect\eta$ when moving down and $-\protect\eta$ when going
right.}
\label{fig:young_tableau_shifts}
\end{figure}

As said earlier, the projectors $P_\lambda$ can be constructed as a product
of $R$-matrices, like in the non graded case \cite{KulRS81,Zab98,KazSZ08}.
For rectangular diagrams $(a,b)$, we have \footnote{%
Note that in fact one has to use Cherednik regularization to extract the
wanted projectors when the diagrams are not purely of row or column type 
\cite{Zab98}.} 
\begin{equation}  \label{eq:projectors_Rmatrices}
P_{b}^{(a)}\propto \prod_{i<j}R_{ij}(s_{j}-s_{i}),
\end{equation}
where $i,j$ run on the boxes of the diagram of figure \ref%
{fig:young_tableau_shifts} column by column, top to bottom from left to
right, and $s_i, s_j$ are the shift contained in the boxes.

All these transfer matrices commute with each other, as consequence of the
Yang-Baxter equation \eqref{eq:YBE} being true for any irreps taken in the
spaces 1, 2, and 3 
\begin{equation}
\forall\, (a,b), (c,d)\in H_{\M|\N},\, \forall \lambda,\mu\in\C, \quad
\left[\,T_b^{(a),(K)}(\lambda),T_d^{(c),(K)}(\mu)\,\right]=0.
\end{equation}
Let us comment that through the nice coderivative formalism \cite%
{KazV08,KazLT12}, in a different but equivalent
manner, these fused transfer matrices and the following fusion properties
can be also derived. In particular, we make use of this formalism
in appendix \ref{app:coderivative_proof} to verify \rf{Closure-gl(m,n)}.

Among many others, the fused transfer matrices satisfy the following
important properties \cite{Tsu97,Tsu98,Tsu98a,Tsu06,KazSZ08}:

\paragraph*{Polynomial structure}

The generic fused transfer matrix $T_{b}^{(a),(K)}(\lambda )$ is polynomial
in $\lambda$ of degree $ab\mathsf{N}$, with $(ab-1)\mathsf{N}$ central zeros
given by 
\begin{equation}
Z_{b}^{(a)}(\lambda) \equiv \prod_{n=1}^{\mathsf{N}}\left[ (\lambda
-\xi_{n})^{-1}\prod_{l=1}^{b}\prod_{m=1}^{a}(\lambda -\xi _{n}+\eta (l-m))%
\right],
\end{equation}%
and therefore factorizes as 
\begin{equation}
T_{b}^{(a),(K)}(\lambda )=\widetilde{T}_{b}^{(a),(K)}(\lambda)Z_{b}^{(a)}(%
\lambda ),
\end{equation}
where $\widetilde{T}_{b}^{(a),(K)}(\lambda)$ is polynomial in $\lambda$ of
degree $\mathsf{N}$.

\paragraph*{Fusion equations}

There is a bilinear relation between transfer matrices associated to
adjacent rectangular diagrams 
\begin{equation}
T_{b}^{(a),(K)}(\lambda-\eta)T_{b}^{(a),(K)}(\lambda)=
T_{b+1}^{(a),(K)}(\lambda-\eta)T_{b-1}^{(a),(K)}(\lambda)+T_{b}^{(a-1),(K)}(%
\lambda-\eta)T_{b}^{(a+1),(K)}(\lambda),  \label{Fusion}
\end{equation}
where in our normalization, the following boundary conditions are imposed for consistency 
\begin{equation}
T_{b\geq 1}^{(0),(K)}(\lambda )=T_{0}^{(a\geq 1),(K)}(\lambda )=1.
\end{equation}
All the fused transfer matrices outside the \emph{extended fat hook} $\bar{H}%
_{\mathcal{M}|\mathcal{N}} \equiv \left( \mathbb{Z}_{\geq 0}\times \mathbb{Z}%
_{\geq 0}\right)/\left( \mathbb{Z}_{> \mathcal{M}}\times \mathbb{Z}_{> 
\mathcal{N}}\right)$ are identically zero, i.e. 
\begin{equation}
\forall (a,b)\notin \bar{H}_{\mathcal{M}|\mathcal{N}}, \quad
T_{b}^{(a),(K)}(\lambda )=0.  \label{Null-Boundary}
\end{equation}
These relations come from the Jacobi identity applied on the determinant form of the transfer matrices given by the Bazhanov-Reshetikhin formula \eqref{Det-Formula-1}, \eqref{Det-Formula-2}.
We exclude the case $(a,b)=(0,0)$ from the system of the $T^{(a)}_b(\lambda)$, as it cannot be defined uniquely from the boundary conditions.

\paragraph*{Inner-boundary condition}

As the correspondence between Young diagrams and irreps is not bijective,
there exist non-trivial relations linking transfer matrices coming from
distinct Young diagrams. This is especially the case for rectangular
diagrams saturating one of the branch of the fat hook $\bar{H}_{\M|\N}$ \cite%
{bars1983kac,KazSZ08,Zab08}. We shall call the first of these relation the 
\emph{inner-boundary} condition, which writes 
\begin{equation}
(-1)^{\mathcal{N}}\operatorname{Ber}(\lambda) T_{\mathcal{N}}^{(\mathcal{M}%
+1),(K)}(\lambda +\eta )=T_{\mathcal{N}+1}^{(\mathcal{M}),(K)}(\lambda ),
\label{Closure-gl(m,n)}
\end{equation}%
where  
\begin{equation}
\operatorname{Ber}(\lambda )=\frac{\det K_{\mathcal{M}}}{\det K_{\mathcal{N}}}\frac{%
a(\lambda )\prod_{k=1}^{\mathcal{M}-1}d(\lambda -k\eta )}{\prod_{l=1-%
\mathcal{M}}^{\mathcal{N}-\mathcal{M}}d(\lambda +l\eta )}\ \mathbb{I}_{%
\mathcal{H}},
\end{equation}%
and
\begin{equation}
a(\lambda -\eta )=d(\lambda )\equiv \prod_{n=1}^{\mathsf{N}}(\lambda -\xi
_{n}).
\end{equation}%
$\operatorname{Ber}(\lambda)$ coincides with the central element called the quantum Berezinian, for $%
\mathcal{N}\neq \mathcal{M}$, and it plays a role similar to the quantum
determinant in the non graded case \cite{Naz91,RagS07}. As anticipated, we
verify this relation in appendix \ref{app:coderivative_proof}.

\subsection{\label{Recon-Tr}Reconstruction of fused transfer matrix in terms
of the fundamental one}

Here we want to recall that all these fused transfer matrices are completely
determined in terms of the transfer matrix $T_{1}^{\left( K\right) }(\lambda
)$ obtained in \eqref{eq:first_transfer_matrix}. Indeed, the Bazhanov and
Reshetikhin's determinant formulae \cite{BazR90} allows us to write all the $%
T_{b}^{(a),(K)}(\lambda )$\ in terms of those of column type $%
T_{r}^{(1),(K)}(\lambda )$ and row type $T_{1}^{(r),(K)}(\lambda )$ by:%
\begin{eqnarray}
T_{b}^{(a),(K)}(\lambda ) &=&\det_{1\leq i,j\leq
a}T_{b+i-j}^{(1),(K)}(\lambda -(i-1)\eta )  \label{Det-Formula-1} \\
&=&\det_{1\leq i,j\leq b}T_{1}^{(a+i-j),(K)}(\lambda +(i-1)\eta ),
\label{Det-Formula-2}
\end{eqnarray}%
then our statement is proven once we prove it for the transfer matrix of
type $T_{r}^{(1),(K)}(\lambda )$ and $T_{1}^{(r),(K)}(\lambda )$. Let us use
in the following the simpler notations%
\begin{equation}
T_{a}^{(K)}(\lambda) \equiv T_{a}^{(1),(K)}(\lambda ),\quad
T_{(a)}^{(K)}(\lambda) \equiv T_{1}^{(a),(K)}(\lambda ),
\end{equation}
and similarly 
\begin{equation}
\widetilde{T}_{a}^{(K)}(\lambda) \equiv \widetilde{T}_{a}^{(1),(K)}(%
\lambda),\quad \widetilde{T}_{(a)}^{(K)}(\lambda) \equiv \widetilde{T}%
_{1}^{(a),(K)}(\lambda).
\end{equation}
The fused transfer matrices $T_a^{(K)}(\lambda)$ and $T_{(a)}^{(K)}(\lambda)$
are polynomials of degree $a\mathsf{N}$ in $\lambda$, while $\widetilde{T}%
_a^{(K)}(\lambda)$ and $\widetilde{T}_{(a)}^{(K)}(\lambda)$ are of degree $%
\mathsf{N}$. We have the following properties for these matrices:

\begin{lemma}
The following asymptotics holds:%
\begin{eqnarray}
T_{\infty ,a}^{\left( K\right) } &\equiv &\lim_{\lambda \rightarrow \infty
}\lambda ^{-a\mathsf{N}}T_{a}^{\left( K\right) }(\lambda )=\mathrm{str}%
_{1...a}P_{1...a}^{+}K_{1}....K_{a}P_{1...a}^{+}, \\
T_{\infty ,(a)}^{\left( K\right) } &\equiv &\lim_{\lambda \rightarrow \infty
}\lambda ^{-a\mathsf{N}}T_{(a)}^{\left( K\right) }(\lambda )=\mathrm{str}%
_{1...a}P_{1...a}^{-}K_{1}....K_{a}P_{1...a}^{-},
\end{eqnarray}%
where, in agreement with \rf{eq:projectors_Rmatrices}, the projectors admit
the following iterative representations in terms of the $R$-matrix:%
\begin{align}
P_{a}^{(1)}& \equiv P_{1\ldots a}^{+}=\frac{1}{a\eta }P_{1\ldots
a-1}^{+}R((a-1)\eta )P_{2\ldots a}^{+}, \\
P_{1}^{(a)}& \equiv P_{1\ldots a}^{-}=-\frac{1}{a\eta }P_{1\ldots
a-1}^{-}R(-(a-1)\eta )P_{2\ldots a}^{-},
\end{align}%
In the inhomogeneities the following fusion relations holds:%
\begin{eqnarray}
T_{n+1}^{(K)}(\xi_{a}) &=& T_{1}^{\left( K\right) }(\xi
_{a})T_{n}^{(K)}(\xi_{a}+\eta ),  \label{eq:Fusion-} \\
T_{(n+1)}^{(K)}(\xi _{a}) &=&T_{1}^{\left( K\right) }(\xi
_{a})T_{(n)}^{(K)}(\xi _{a}-\eta ),  \label{eq:Fusion+}
\end{eqnarray}%
for any positive integer $n$.
\end{lemma}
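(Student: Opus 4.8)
The plan is to treat the three assertions separately, since the asymptotics and the projector recursions are essentially self-contained, whereas the fusion relations at the inhomogeneities will be deduced from them. For the asymptotics I would just extract the leading coefficient in $\lambda$ of the fused monodromy \eqref{eq:monodromy_fusion}. Each elementary factor $M^{(K)}(\lambda+s)$ is a product of $\mathsf{N}$ copies of $R(\lambda+s-\xi_n)=(\lambda+s-\xi_n)\mathbb{I}+\eta\,\mathbb{P}$, whose leading term is $\lambda\,\mathbb{I}$, so $M^{(K)}(\lambda+s)=\lambda^{\mathsf{N}}K+O(\lambda^{\mathsf{N}-1})$, with $K$ acting on the corresponding auxiliary space and trivially on $\mathcal{H}$. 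Because the twists on distinct auxiliary spaces commute and act trivially on the quantum space, the product of the $a$ leading terms is $\lambda^{a\mathsf{N}}K_1\cdots K_a$; conjugating by the projector and taking the supertrace gives $\lambda^{-a\mathsf{N}}T^{(K)}_a(\lambda)\to \str_{1\ldots a}P^+_{1\ldots a}K_1\cdots K_a P^+_{1\ldots a}$, and likewise for the row type with $P^-$. The shifts $s=\pm(r-1)\eta$ are subleading and drop out, so this step is immediate.

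For the projector recursions I would argue by induction on $a$, taking the product representation \eqref{eq:projectors_Rmatrices} as known. Writing $Q\equiv\frac{1}{a\eta}P^+_{1\ldots a-1}R_{1a}((a-1)\eta)P^+_{2\ldots a}$, the induction hypothesis makes $P^+_{1\ldots a-1}$ and $P^+_{2\ldots a}$ the graded symmetrizers on their blocks; using the Yang--Baxter equation \eqref{eq:YBE} to move $R_{1a}((a-1)\eta)$ through the projectors and the idempotency of the latter, one checks that $Q$ is idempotent and that its image is the intersection of the two graded-symmetric subspaces, namely the fully graded-symmetric subspace of $(\C^{\M|\N})^{\otimes a}$, since the two overlapping symmetrizations generate all of $S_a$. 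The constant $\frac{1}{a\eta}$ is then fixed by evaluation on the extremal vector $v_1^{\otimes a}$: both projectors act as the identity and $R_{1a}((a-1)\eta)=(a-1)\eta\,\mathbb{I}+\eta\,\mathbb{P}_{1a}$ acts by the scalar $(a-1)\eta+\eta=a\eta$, so $Q\,v_1^{\otimes a}=v_1^{\otimes a}$. The row-type identity follows \emph{mutatis mutandis}: $P^-$ projects onto the graded-antisymmetric subspace, where on an extremal odd vector $v_{\M+1}^{\otimes a}$ the permutation contributes a sign and $R(-(a-1)\eta)$ acts by $-a\eta$, which is compensated by the prefactor $-\frac{1}{a\eta}$.

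The fusion relations \eqref{eq:Fusion-} and \eqref{eq:Fusion+} are the genuinely model-dependent part, and where the work lies. Here I would evaluate the column-type fused monodromy at $\lambda=\xi_m$, exploiting that the factor $R_{1m}(\xi_m-\xi_m)=\eta\,\mathbb{P}_{1m}$ inside $M^{(K)}_1(\xi_m)$ collapses to a pure permutation between the first auxiliary space and the $m$-th quantum site (the Sklyanin reconstruction point). Inserting the recursion of the previous paragraph, in the form that exposes a factor $P^+_{2\ldots n+1}$ on the last $n$ auxiliary spaces, into $T_{n+1}^{(K)}(\xi_m)=\str_{1\ldots n+1}P^+_{1\ldots n+1}M^{(K)}_1(\xi_m)\cdots M^{(K)}_{n+1}(\xi_m+n\eta)P^+_{1\ldots n+1}$, I would use the permutation $\mathbb{P}_{1m}$ together with the cyclicity of the supertrace to peel the first auxiliary space off the trace: the supertrace over space $1$ reconstructs the factor $T_1^{(K)}(\xi_m)$ acting on $\mathcal{H}$ (cf.\ \eqref{eq:first_transfer_matrix}), while the remaining $n$ auxiliary spaces, carrying the shifted arguments $\xi_m+\eta,\ldots,\xi_m+n\eta$ and sandwiched by $P^+_{2\ldots n+1}$, reassemble $T_n^{(K)}(\xi_m+\eta)$. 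The row-type relation \eqref{eq:Fusion+} follows identically with $P^-$ and the opposite shift $-\eta$.

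The main obstacle is precisely this last bookkeeping: controlling the graded signs produced by the permutation $\mathbb{P}_{1m}$ and by commuting operators of definite parity under $\str$, and checking that the projector telescopes so that no cross-terms coupling the reconstructed site $m$ to the remaining block survive. Once the signs are organized, the factorization is forced by the degeneracy $R(0)\propto\mathbb{P}$ and the idempotency of the fusion projectors; the asymptotics and the recursions, by contrast, I expect to be routine.
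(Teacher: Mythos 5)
Your first two parts are not where the difficulty lies, and they are also not where the paper invests any effort: the paper dismisses the asymptotics as ``an easy corollary of the definition'' (your leading-coefficient extraction is that computation), and it never proves the projector recursions at all — they are only recorded as being in agreement with \eqref{eq:projectors_Rmatrices} — so your induction there is extra, harmless work. The real divergence is in the fusion relations \eqref{eq:Fusion-}, \eqref{eq:Fusion+}. The paper proves these without ever touching $R(0)=\eta\,\mathbb{P}$ or the projectors: it takes the bilinear fusion (Hirota) equations \eqref{Fusion} and the central zeros $Z_b^{(a)}$ as known inputs from the literature, specializes \eqref{Fusion} for $(a,b)=(1,n)$ and $(n,1)$ at the points $\xi_a+\eta$ and $\xi_a$, observes that the central zeros kill one of the two terms on the right-hand side (e.g.\ $T_{(2)}^{(K)}(\xi_a+\eta)=0$ since $Z_1^{(2)}(\lambda)=d(\lambda-\eta)$, and $T_{n}^{(2),(K)}(\xi_a+\eta)=0$ since $Z_n^{(2)}(\lambda)\propto d(\lambda-\eta)$), and closes an induction on $n$. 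Your plan instead tries to re-derive these relations directly from the $R$-matrix structure, and there it has a genuine gap.

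The gap is in the ``peeling'' step: the supertrace does not factorize the way you describe. Since $P^+_{1\ldots n+1}\neq \mathbb{I}_1\otimes P^+_{2\ldots n+1}$, extracting auxiliary space $1$ via $\mathbb{P}_{1j}$ and cyclicity inevitably produces, besides the wanted product, an \emph{additive} cross term coming from the complementary part of the projector, and that term does not vanish by sign bookkeeping, idempotency and $R(0)\propto\mathbb{P}$ alone — you flag ``cross-terms'' but attribute their disappearance to exactly these ingredients. Already at $n=1$, writing $P^+_{12}=\mathbb{I}-P^-_{12}$ and pushing $P^-_{12}$ through the product of the two shifted monodromies by the Yang--Baxter intertwining, one finds for all $\lambda$
\begin{equation*}
T_2^{(K)}(\lambda)\;=\;T_1^{(K)}(\lambda)\,T_1^{(K)}(\lambda+\eta)\;-\;T_{(2)}^{(K)}(\lambda+\eta)\,,
\end{equation*}
i.e.\ you recover the bilinear fusion equation, not a factorization; the lemma holds at $\lambda=\xi_j$ only because the cross term then sits at one of its central zeros, $T_{(2)}^{(K)}(\xi_j+\eta)=0$. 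Proving that vanishing is a separate, substantive step: one must first insert $P^-_{12}$ between the site pairs $R_{1k}R_{2k}$ (legitimate because each pair leaves $\mathrm{Im}\,P^-_{12}$ invariant, again by Yang--Baxter), and then use the \emph{sandwiched} local identity $P^-_{12}R_{1j}(\eta)R_{2j}(0)P^-_{12}=2\eta^{2}\,P^-_{12}\mathbb{P}_{2j}P^+_{12}P^-_{12}=0$; note that $P^-_{12}R_{1j}(\eta)R_{2j}(0)$ \emph{without} the right projector is nonzero, so the vanishing is a trace/projector-structure effect, not a local cancellation of signs. For general $n$ the cross term is $T_n^{(2),(K)}(\xi_j+\eta)$ and the same issue recurs with the fused projectors. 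So your route can in principle be completed, but only by identifying these cross terms and proving their vanishing — which is precisely the central-zero input that the paper's short functional proof invokes, and which your sketch never names as something to be proven.
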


\begin{proof}
The asymptotics are an easy corollary of the definition of the fused
transfer matrices of type $T^{(K)}_{r}(\lambda )$ and $T^{(K)}_{(r)}(\lambda
)$. Let us now prove the fusion relations in the inhomogeneities, for $n=1$
the identity:%
\begin{eqnarray}
T_{2}^{(K)}(\xi _{a}) &=&T_{1}^{\left( K\right) }(\xi _{a})T_{1}^{\left(
K\right) }(\xi _{a}+\eta ), \\
T_{(2)}^{(K)}(\xi _{a}) &=&T_{1}^{\left( K\right) }(\xi _{a})T_{1}^{\left(
K\right) }(\xi _{a}-\eta ),
\end{eqnarray}%
are obtained by the fusion equations \rf{Fusion} just remarking that being:%
\begin{equation}
Z_{1}^{(2)}(\lambda )=d(\lambda -\eta ),\text{ \ }Z_{2}^{(1)}(\lambda
)=d(\lambda +\eta ),
\end{equation}%
it holds%
\begin{equation}
T_{(2)}^{(K)}(\xi _{a}+\eta )=0,\text{ \ } T_{2}^{(K)}(\xi _{a}-\eta )=0.
\end{equation}%
Then we can proceed by induction to prove the identity, let us assume that
it holds for $n\geq 1$ and let us prove it for $n+1$, the relevant fusion
identities reads:

\begin{align}
T_{n}^{(K)}(\xi _{a}+\eta )T_{n}^{(K)}(\xi _{a})&=T^{(K)}_{n+1}(\xi
_{a})T^{(K)}_{n-1}(\xi _{a}+\eta )  \notag \\
&+T^{(0),(K)}_{n}(\xi_{a})T^{(2),(K)}_{n}(\xi _{a}+\eta )
\end{align}%
and%
\begin{align}
T^{(K)}_{(n)}(\xi _{a}-\eta )T^{(K)}_{(n)}(\xi _{a})&=T^{(n),(K)}_{2}(\xi
_{a}-\eta )T^{(n),(K)}_{0}(\xi _{a})  \notag \\
&+T^{(K)}_{(n-1)}(\xi _{a}-\eta )T^{(K)}_{(n+1)}(\xi _{a}),
\end{align}%
which being:%
\begin{equation}
Z_{n}^{(2)}(\lambda )\propto d(\lambda -\eta ),\text{ \ }Z_{2}^{(n)}(\lambda
)\propto d(\lambda +\eta ),
\end{equation}%
read:%
\begin{equation}
T^{(K)}_{n}(\xi _{a}+\eta )T^{(K)}_{n}(\xi _{a})=T^{(K)}_{n+1}(\xi
_{a})T^{(K)}_{n-1}(\xi _{a}+\eta ),
\end{equation}%
and%
\begin{equation}
T^{(K)}_{(n)}(\xi _{a}-\eta )T^{(K)}_{(n)}(\xi _{a})=T^{(K)}_{(n-1)}(\xi
_{a}-\eta )T^{(K)}_{(n+1)}(\xi _{a}),
\end{equation}%
which lead to our identities for $n+1$ once we use the induction hypothesis for $n$%
\begin{align}
T^{(K)}_{n}(\xi _{a})& =T_{1}^{\left( K\right) }(\xi _{a})T^{(K)}_{n-1}(\xi
_{a}+\eta ), \\
T^{(K)}_{(n)}(\xi _{a})& =T_{1}^{\left( K\right) }(\xi
_{a})T^{(K)}_{(n-1)}(\xi _{a}-\eta ).
\end{align}
\end{proof}

The known central zeros and asymptotic imply that the interpolation formula
in the $\mathsf{N}$ special points defined by the fusion equations to write $%
T^{(K)}_{n}(\xi _{a})$ and $T^{(K)}_{(n)}(\xi _{a})$ completely characterize
these transfer matrices. Let us introduce the functions%
\begin{align}
f_{a}^{\left( m\right) }(\lambda )& =\prod_{b\neq a,b=1}^{\mathsf{N}}\frac{%
\lambda -\xi _{b}}{\xi _{a}-\xi _{b}}\prod_{b=1}^{\mathsf{N}%
}\prod_{r=1}^{m-1}\frac{1}{\xi _{a}-\xi _{b}^{(r)}},\text{ \ \ }\xi
_{b}^{(r)}=\xi _{b}-r\eta , \\
g_{a}^{\left( m\right) }(\lambda )& =\prod_{b\neq a,b=1}^{\mathsf{N}}\frac{%
\lambda -\xi _{b}}{\xi _{a}-\xi _{b}}\prod_{b=1}^{\mathsf{N}%
}\prod_{r=1}^{m-1}\frac{1}{\xi _{a}-\xi _{b}^{(-r)}},
\end{align}%
and%
\begin{equation}
T_{\infty ,a}^{\left( K\right) }(\lambda )=T_{\infty ,a}^{\left( K\right)
}\prod_{b=1}^{\mathsf{N}}(\lambda -\xi _{b}),\text{ \ \ \ \ }T_{\infty
,(a)}^{\left( K\right) }(\lambda )=T_{\infty ,(a)}^{\left( K\right)
}\prod_{b=1}^{\mathsf{N}}(\lambda -\xi _{b}),
\end{equation}%
then the following corollary holds:

\begin{corollary}
\label{Interpolation}Under the following conditions on the inhomogeneity parameters $\xi_i$
\begin{equation}
\forall a,b\in \{1,\ldots \mathsf{N}\}, a\neq b, \quad \xi_{a}\neq \xi_{b} \mod \eta ,
\label{Inhomog-cond}
\end{equation}%
the transfer matrix $T^{(K)}_{n+1}(\lambda )$ and $T^{(K)}_{(n+1)}(\lambda )$
are completely characterized in terms of $T_{1}^{\left( K\right) }(\lambda )$
by the fusion equations and the following interpolation formulae:%
\begin{align}
T^{(K)}_{n+1}(\lambda )& =\prod_{r=1}^{n}d(\lambda +r\eta ) \left[ T_{\infty
,n+1}^{\left( K\right) }(\lambda )+\sum_{a=1}^{\mathsf{N}}f_{a}^{\left(
n+1\right) }(\lambda )T_{n}^{\left( K\right) }(\xi _{a}+\eta )T_{1}^{\left(
K\right) }(\xi _{a})\right] ,  \label{T-Func-form-m} \\
T^{(K)}_{(n+1)}(\lambda )& =\prod_{r=1}^{n}d(\lambda -r\eta )\left[
T_{\infty ,(n+1)}^{\left( K\right) }(\lambda )+\sum_{a=1}^{\mathsf{N}%
}g_{a}^{\left( n+1\right) }(\lambda )T_{(n)}^{\left( K\right) }(\xi
_{a}-\eta )T_{1}^{\left( K\right) }(\xi _{a})\right] .
\label{T-Func-form-m-}
\end{align}
\end{corollary}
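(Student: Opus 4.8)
The plan is to reduce both \eqref{T-Func-form-m} and \eqref{T-Func-form-m-} to a single Lagrange interpolation of the degree-$\mathsf{N}$ stripped transfer matrix, the row-type case being obtained from the column-type one by the replacements $\xi_a+\eta\mapsto\xi_a-\eta$, $\eta\mapsto-\eta$ and $d(\lambda+r\eta)\mapsto d(\lambda-r\eta)$. First I would invoke the polynomial-structure property to factor $T^{(K)}_{n+1}(\lambda)=\widetilde{T}^{(K)}_{n+1}(\lambda)\,Z^{(1)}_{n+1}(\lambda)$, where $\widetilde{T}^{(K)}_{n+1}$ has degree $\mathsf{N}$ in $\lambda$. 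A short evaluation of the central-zero polynomial on the column diagram $(a,b)=(1,n+1)$ gives $Z^{(1)}_{n+1}(\lambda)=\prod_{r=1}^{n}d(\lambda+r\eta)$, which is monic of degree $n\mathsf{N}$ and is exactly the prefactor in \eqref{T-Func-form-m}; the degree count $\deg\widetilde{T}^{(K)}_{n+1}=(n+1)\mathsf{N}-n\mathsf{N}=\mathsf{N}$ is then consistent. Thus it suffices to reconstruct the single polynomial $\widetilde{T}^{(K)}_{n+1}$ of degree $\mathsf{N}$.

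A polynomial of degree $\mathsf{N}$ is fixed by its leading coefficient together with its values at the $\mathsf{N}$ nodes $\xi_1,\dots,\xi_{\mathsf{N}}$. Because $Z^{(1)}_{n+1}$ is monic, the asymptotics of the Lemma give the leading coefficient of $\widetilde{T}^{(K)}_{n+1}$ as $T^{(K)}_{\infty,n+1}$, contributing the term $T^{(K)}_{\infty,n+1}(\lambda)=T^{(K)}_{\infty,n+1}\prod_{b=1}^{\mathsf{N}}(\lambda-\xi_b)$. For the nodal data I would evaluate the factorization at $\lambda=\xi_a$: under \eqref{Inhomog-cond} one has $\xi_a-\xi_b+r\eta\neq0$ for all $b$ and $1\le r\le n$, so $Z^{(1)}_{n+1}(\xi_a)=\prod_{r=1}^{n}d(\xi_a+r\eta)\neq0$, and hence $\widetilde{T}^{(K)}_{n+1}(\xi_a)=T^{(K)}_{n+1}(\xi_a)/Z^{(1)}_{n+1}(\xi_a)$. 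The fusion relation \eqref{eq:Fusion-} rewrites $T^{(K)}_{n+1}(\xi_a)=T_1^{(K)}(\xi_a)T^{(K)}_n(\xi_a+\eta)$, so the nodal data are expressed purely through $T_1^{(K)}$ and the lower transfer matrix $T^{(K)}_n$.

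The remaining step is to assemble the interpolation formula
\[
\widetilde{T}^{(K)}_{n+1}(\lambda)=T^{(K)}_{\infty,n+1}(\lambda)+\sum_{a=1}^{\mathsf{N}}\widetilde{T}^{(K)}_{n+1}(\xi_a)\prod_{b\neq a}\frac{\lambda-\xi_b}{\xi_a-\xi_b},
\]
and to recognise, using $\xi_b^{(r)}=\xi_b-r\eta$ so that $\prod_{b=1}^{\mathsf{N}}\prod_{r=1}^{n}(\xi_a-\xi_b^{(r)})^{-1}=Z^{(1)}_{n+1}(\xi_a)^{-1}$, that $\prod_{b\neq a}\frac{\lambda-\xi_b}{\xi_a-\xi_b}\cdot Z^{(1)}_{n+1}(\xi_a)^{-1}$ is precisely the kernel $f^{(n+1)}_a(\lambda)$. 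Multiplying through by $Z^{(1)}_{n+1}(\lambda)=\prod_{r=1}^{n}d(\lambda+r\eta)$ reproduces \eqref{T-Func-form-m}, and the identical manipulation with $Z^{(n+1)}_1(\lambda)=\prod_{r=1}^{n}d(\lambda-r\eta)$, the kernel $g^{(n+1)}_a$, and the fusion relation \eqref{eq:Fusion+} yields \eqref{T-Func-form-m-}.

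The ``complete characterization'' claim then follows by induction on $n$: the right-hand sides of \eqref{T-Func-form-m} and \eqref{T-Func-form-m-} involve only $T_1^{(K)}$ and $T^{(K)}_n$ (resp. $T^{(K)}_{(n)}$) evaluated at the shifted nodes $\xi_a+\eta$ (resp. $\xi_a-\eta$), so starting from the given $T_1^{(K)}(\lambda)$ each higher transfer matrix is determined as a polynomial in $\lambda$ and may be evaluated at $\xi_a\pm\eta$ to feed the next step. The computation is essentially routine once the interpolation viewpoint is adopted; the single genuine subtlety — and the only place the hypothesis \eqref{Inhomog-cond} enters — is the nonvanishing of $Z^{(1)}_{n+1}(\xi_a)$ (equivalently of all $\xi_a-\xi_b+r\eta$), which both licenses dividing the fused value by the central-zero factor and is exactly what is absorbed into the second product defining $f^{(n+1)}_a$; one must also keep careful track that the $\mathsf{N}+1$ interpolation data used are the $\mathsf{N}$ nodal values plus the asymptotic leading coefficient, matching $\deg\widetilde{T}^{(K)}_{n+1}=\mathsf{N}$.
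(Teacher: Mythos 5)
Your proof is correct and follows exactly the route the paper intends (and leaves largely implicit in the sentence preceding the corollary): factor off the central-zero polynomial $Z^{(1)}_{n+1}$, fix the degree-$\mathsf{N}$ remainder $\widetilde{T}^{(K)}_{n+1}$ by its leading coefficient $T^{(K)}_{\infty,n+1}$ and its values at the $\xi_a$ supplied by the fusion relations, and observe that the Lagrange kernels divided by $Z^{(1)}_{n+1}(\xi_a)$ reproduce $f^{(n+1)}_a$ (resp. $g^{(n+1)}_a$), with \eqref{Inhomog-cond} guaranteeing the nonvanishing of those denominators. Your identification $Z^{(1)}_{n+1}(\lambda)=\prod_{r=1}^{n}d(\lambda+r\eta)$, the degree count, and the inductive reading of ``completely characterized'' all match the paper's construction, so there is nothing to add.
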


\subsection{\label{sec:SoV-basis} \texorpdfstring{SoV covector basis for $gl_{\mathcal{M}|\mathcal{N}}$ Yang-Baxter superalgebra}{SoV
covector basis for gl(M|N) Yang-Baxter superalgebra}}

In the next section, we construct a separation of variables basis for the
integrable quantum model associated to the fundamental representations of
the $gl_{\mathcal{M}|\mathcal{N}}$-graded Yang-Baxter algebra. The
construction follows the general ideas presented in the Proposition 2.4 of \cite{MaiN18}. 

As in the non-graded case, the proof rely mainly on the reduction of the $R$ matrix to the permutation at a particular point, and on the centrality of the asymptotics of the transfer matrix.


Let $K$ be a $(\mathcal{M}+\mathcal{N})\times (\mathcal{M}+\mathcal{N})$
square matrix solution of the $gl_{\mathcal{M}|\mathcal{N}}$-graded
Yang-Baxter equation of the block form \eqref{Super-block-twist}, then we use the following notation%
\begin{equation}
K=W_{K}K_{J}W_{K}^{-1},
\end{equation}%
where $K_{J}$ is the Jordan form of the matrix $K$ 
\begin{equation}
K_{J}=\left( 
\begin{array}{cc}
K_{J,\mathcal{M}} & 0 \\ 
0 & K_{J,\mathcal{N}}%
\end{array}%
\right) ,
\end{equation}%
where for $\mathcal{X}=\mathcal{M}$ or $\mathcal{N}$%
\begin{equation}
K_{J,\mathcal{X}}=\left( 
\begin{array}{cccc}
K_{J,\mathcal{X}}^{\left( 1\right) } & 0 & \cdots & 0 \\ 
0 & K_{J,\mathcal{X}}^{\left( 2\right) } & \ddots & 0 \\ 
0 & \ddots & \ddots & 0 \\ 
0 & 0 & \cdots & K_{J,\mathcal{X}}^{\left( m_{\mathcal{X}}\right) }%
\end{array}%
\right) \text{,}
\end{equation}%
and $K_{J,\mathcal{X}}^{\left( i\right) }$ is a $d_{i,\mathcal{X}}\times
d_{i,\mathcal{X}}$ upper triangular Jordan block for any $i$ in $\{1,...,m_{%
\mathcal{X}}\}$ with eigenvalue $k_{i,\mathcal{X}}$, where $\sum_{a=1}^{m_{%
\mathcal{X}}}d_{a,\mathcal{X}}=\mathcal{X}$. Moreover, it is interesting to
point out that the invertible matrix $W_{K}$ defining the change of basis
for the twist matrix is itself a $(\mathcal{M}+\mathcal{N})\times (\mathcal{M%
}+\mathcal{N})$ square matrix solution of the $gl_{\mathcal{M}|\mathcal{N}} $%
-graded Yang-Baxter equation. Indeed, it is of the same block form $\left( %
\ref{Super-block-twist}\right) $:%
\begin{equation}
W_{K}=\left( 
\begin{array}{cc}
W_{K,\mathcal{M}} & 0 \\ 
0 & W_{K,\mathcal{N}}%
\end{array}%
\right) .
\end{equation}%
Then, the following similarity relation holds for the fundamental transfer
matrices:%
\begin{equation}
T_{1}^{\left( K\right) }(\lambda )=\mathcal{W}_{K}T_{1}^{\left( K_{J}\right)
}(\lambda )\mathcal{W}_{K}^{-1},\text{ \ with }\mathcal{W}_{K}=W_{K,\mathsf{N%
}}\otimes\cdots \otimes W_{K,1},
\end{equation}
i.e. they are isospectral. We can now state our main result on the form of
the SoV basis:

\begin{theorem}
\label{th:SoV-basis} Let $K$ be $(\mathcal{M}+\mathcal{N})\times (\mathcal{M}%
+\mathcal{N})$ a square matrix with simple spectrum of block form $\left( \ref%
{Super-block-twist}\right) $, i.e. we assume that:%
\begin{equation}
k_{i,\mathcal{X}}\neq k_{j,\mathcal{X}^{\prime }}\text{ for }(i,\mathcal{X}%
)\neq(j,\mathcal{X}^{\prime})\ \ \forall (i,j)\in \{1,...,m_{\mathcal{X}%
}\}\times \{1,...,m_{\mathcal{X}^{\prime }}\},\mathcal{X},\mathcal{\mathcal{X%
}^{\prime }}\in \{\mathcal{M},\mathcal{\mathcal{N}}\},
\end{equation}%
then for almost any choice of the covector $\langle S|$ and of the
inhomogeneities under the condition $(\ref{Inhomog-cond})$, the following
set of covectors:%
\begin{equation}
\langle h_{1},...,h_{\mathsf{N}}|\equiv \langle S|\prod_{n=1}^{\mathsf{N}%
}\left(T_{1}^{(K)}(\xi _{n})\right)^{h_{n}}\text{ \ for any }\{h_{1},...,h_{\mathsf{N}%
}\}\in \{0,...,\mathcal{M}+\mathcal{\mathcal{N}}-1\}^{\times \mathsf{N}},
\label{SoV-L-gl(M,N)-basis}
\end{equation}%
forms a covector basis of $\mathcal{H}$. 
In particular, let us take a one-site state $\ket{S,a} = S_i^{(a)}\ket{i}$, $S_i^{(a)}\in\C$.
Its dual covector in the single space $V_a$ is $\bra{S,a} = \ket{S,a}^\dagger = S_i^{(a)*} \bra{i}$.
When acting on it by the $W_K^{-1}$ isomorphism, it is noted in coordinates
\begin{equation}
\langle S,a|W_{K,a}^{-1}
= \left(S^{(a)*}_1, \ldots, S^{(a)*}_{\mathcal{M}+\mathcal{N}}\right) W_{K,a}^{-1}
= \left(x_{1,\mathcal{M}}^{\left( 1\right) },...,x_{d_1,%
\mathcal{M}}^{\left( 1\right) },...,x_{1,\mathcal{N}}^{\left( m_{\mathcal{N}%
}\right) },...,x_{d_{m_{\mathcal{N}}},\mathcal{N}}^{\left( m_{\mathcal{N}%
}\right) } \right)\in V_{a}^* .
\end{equation}%
Following \eqref{eq:dual_tensor}, we have $\bra{S} = (\ket{S,1} \ldots \ket{S,\mathsf{N}})^\dagger$ as 
\begin{equation} \label{eq:S_form}
\bra{S} = \sum_{p_1,\ldots,p_\mathsf{N}=1}^{\mathcal{M}+\mathcal{N}} S_{p_1}^{(1)*}\ldots S_{p_\mathsf{N}}^{(\mathsf{N})*} 
\left(\ket{p_1}\ldots\ket{p_\mathsf{N}}\right)^\dagger .
\end{equation}
Then, it is sufficient that
\begin{equation}
\prod_{k=1}^{m_{\mathcal{M}}}x_{1,\mathcal{M}}^{\left( k\right)
}\prod_{k=1}^{m_{\mathcal{N}}}x_{1,\mathcal{N}}^{\left( k\right) }\neq 0\,,
\label{Non-zero-state}
\end{equation}%
for the family of covectors \eqref{SoV-L-gl(M,N)-basis} to form a basis.
Furthermore, the $T_{1}^{\left( K\right) }(\lambda )$ transfer matrix
spectrum is simple.
\end{theorem}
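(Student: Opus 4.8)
The plan is to prove that the family of covectors \eqref{SoV-L-gl(M,N)-basis} is linearly independent (hence a basis, since it has cardinality $(\mathcal{M}+\mathcal{N})^{\mathsf{N}} = \dim\mathcal{H}$), and then deduce simplicity of the spectrum as a corollary. The central tool, as the authors hint, is the reduction of the $R$-matrix to the permutation operator: at $\lambda = \xi_n$ one has $R_{0n}(\xi_n - \xi_n) = R_{0n}(0) = \eta\,\mathbb{P}_{0n}$, so that the fundamental transfer matrix $T_1^{(K)}(\xi_n)$ acquires a particularly simple local action on the $n^{\text{th}}$ site. The strategy is to pass to the Jordan basis via the isospectrality $T_1^{(K)}(\lambda) = \mathcal{W}_K T_1^{(K_J)}(\lambda)\mathcal{W}_K^{-1}$, so that it suffices to prove the statement for $K = K_J$; the covector $\langle S|$ is correspondingly transformed by $\mathcal{W}_K$, which is why the nonvanishing condition \eqref{Non-zero-state} is naturally phrased in the $W_K^{-1}$-rotated coordinates $x^{(k)}_{1,\mathcal{X}}$.

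First I would compute the action of $T_1^{(K_J)}(\xi_n)$ explicitly. Using $R_{0n}(0)=\eta\,\mathbb{P}_{0n}$ together with the factorization \eqref{eq:evenK_factorizes} for even local operators, the transfer matrix at $\xi_n$ should act as a one-site operator at site $n$ (dressed by the twist and by the remaining $R$-matrices evaluated away from zero, whose precise form I would record but not belabor), taking the site-$n$ basis covector $\bra{p_n}$ into a combination governed by the entries of $K_J$ and the interpolation data of Corollary \ref{Interpolation}. The key structural fact to extract is that, because $K_J$ has simple spectrum with eigenvalues $k_{i,\mathcal{X}}$, the successive powers $(T_1^{(K)}(\xi_n))^{h_n}$ for $h_n = 0,\dots,\mathcal{M}+\mathcal{N}-1$ generate, at each site independently, a set of vectors whose transition matrix is (up to the nonzero prefactor controlled by \eqref{Non-zero-state}) a generalized Vandermonde in the distinct eigenvalues. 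Since the transfer matrices at distinct $\xi_n$ commute and act on distinct tensor factors in the leading term, the full wave-function matrix factorizes into a tensor product of $\mathsf{N}$ single-site Vandermonde-type matrices.

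The decisive step is then the nonvanishing of this Vandermonde determinant. I would argue that the determinant of the change of basis from $\{\langle S|\prod_n (T_1^{(K)}(\xi_n))^{h_n}\}$ to the standard tensor basis is, up to sign and the harmless scalar factors, a product over sites of Vandermonde determinants in the $\mathcal{M}+\mathcal{N}$ distinct eigenvalues, weighted by the components $x^{(k)}_{1,\mathcal{X}}$ of the rotated $\langle S|$. Simplicity of the spectrum of $K$ guarantees the Vandermonde factors are nonzero, while condition \eqref{Non-zero-state} guarantees the relevant components of $\langle S|$ do not vanish; together these give a nonzero determinant for almost any $\langle S|$ and almost any inhomogeneities satisfying \eqref{Inhomog-cond}, which is exactly the "almost any" genericity in the statement. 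Finally, the simplicity of the $T_1^{(K)}(\lambda)$ spectrum follows by the standard SoV argument: if the covectors \eqref{SoV-L-gl(M,N)-basis} form a basis, then no nontrivial Jordan block can occur for the commuting family $\{T_1^{(K)}(\xi_n)\}$, for a Jordan block would force a linear relation among the generated covectors; hence $T_1^{(K)}(\lambda)$ is diagonalizable with nondegenerate spectrum.

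\textbf{The main obstacle} I anticipate is controlling the precise one-site action of $T_1^{(K)}(\xi_n)$ in the graded setting, where the sign rule \eqref{eq:sign_rule} and the supertrace contribute extra factors of $(-1)^{\bar{j}}$ and $(-1)^{\bar{i}\bar{k}}$ that must be tracked carefully through the permutation reduction; these signs do not spoil the Vandermonde structure but must be shown to be absorbed consistently so that the factorization into single-site blocks—and thus the determinant computation—goes through exactly as in the non-graded case of \cite{MaiN18}. The second delicate point is making the "almost any" precise: one must exhibit that the determinant is a nonzero polynomial (or analytic function) in the entries of $\langle S|$ and the $\xi_n$, so that its vanishing locus is a proper closed subvariety, which is where the conditions \eqref{Inhomog-cond} and \eqref{Non-zero-state} do their work.
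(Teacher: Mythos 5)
Your main argument follows essentially the same route as the paper's own proof: the reduction $R_{0n}(0)=\eta\,\mathbb{P}_{0n}$ giving the dressed one-site form of $T_1^{(K)}(\xi_n)$, the observation that the determinant of the SoV wave-function matrix is polynomial in the inhomogeneities and in the components of $\langle S|$, its evaluation in the large-inhomogeneity regime where the central asymptotics of the $R$-matrix reduce each $T_1^{(K)}(\xi_n)$ to a central multiple of $K_n$ acting at site $n$ alone, the consequent site-by-site factorization of the determinant into generalized Vandermonde blocks (Proposition 2.2 of \cite{MaiN18}), and nonvanishing from the simple spectrum of $K$ together with \eqref{Non-zero-state}. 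The graded subtlety you flag is handled exactly as you anticipate, via the evenness of $K$ and the factorization \eqref{eq:evenK_factorizes}. One caveat of phrasing: the factorization is not a consequence of the commutativity of the $T_1^{(K)}(\xi_n)$, as you suggest, but purely of the leading asymptotics; at finite inhomogeneities the determinant is \emph{not} a product of one-site blocks, which is precisely why the polynomiality-plus-limit argument is needed.

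However, your deduction of the last claim is genuinely flawed. You argue that if the covectors form a basis then no nontrivial Jordan block can occur, "for a Jordan block would force a linear relation among the generated covectors", and you conclude that $T_1^{(K)}(\lambda)$ is diagonalizable. This is false: the existence of a cyclic covector for an operator (or commuting family) is equivalent to the operator being non-derogatory, i.e.\ one Jordan block per eigenvalue, which is perfectly compatible with nontrivial Jordan blocks — a single Jordan block of size $n$ admits cyclic covectors. Note that the theorem allows $K$ itself to be non-diagonalizable (only the eigenvalues of its Jordan blocks are assumed pairwise distinct), and then $T_1^{(K)}$ can inherit Jordan blocks: for $\mathsf{N}=1$ the transfer matrix is essentially $(\lambda-\xi_1)\operatorname{str}K+\eta K_1$. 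What the basis property actually gives — and what the paper means by "simple spectrum" — is that any eigenvalue $t(\lambda)$ determines its eigenvector uniquely up to normalization, since necessarily $\langle h_1,\ldots,h_{\mathsf{N}}|t\rangle=\prod_{n=1}^{\mathsf{N}}t^{h_n}(\xi_n)\,\langle S|t\rangle$ with $\langle S|t\rangle\neq 0$. Diagonalizability is a strictly stronger statement, reserved in the paper for Proposition \ref{prop:diagonalizability}, which requires the additional hypothesis that $K$ be diagonalizable and a separate argument establishing the non-orthogonality $\langle t|t\rangle\neq 0$.
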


\begin{proof}
As in the non graded case, the identity:%
\begin{equation}
T_{1}^{\left( K\right) }(\xi _{n})=R_{n,n-1}(\xi _{n}-\xi _{n-1})\cdots
R_{n,1}(\xi _{n}-\xi _{1})K_{n}R_{n,\mathsf{N}}(\xi _{n}-\xi _{\mathsf{N}%
})\cdots R_{n,n+1}(\xi _{n}-\xi _{n+1}),  \label{T_xi}
\end{equation}%
holds true as a direct consequence of the definition of the transfer matrix $%
T_{1}^{\left( K\right) }(\lambda )$ and the properties%
\begin{equation}
R_{0,n}(0)=\eta \mathbb{P}_{0,n},\text{ \ }\mathrm{str}_{V_{0}}\mathbb{P}%
_{0,n}=1.
\end{equation}
From this point we can essentially follow the proof of Proposition 2.4 of  \cite{MaiN18}. Indeed, the condition that the set $(\ref%
{SoV-L-gl(M,N)-basis})$ form a covector basis of $\mathcal{H}$ is equivalent
to%
\begin{equation}
\text{det}_{(\mathcal{M}+\mathcal{N})^{\mathsf{N}}}||\mathcal{R}\left(
\langle S|,K,\{\xi \}\right) ||\neq 0,  \label{L-indep-SoV-gl(M,N)}
\end{equation}%
where we have defined:%
\begin{equation}
\mathcal{R}\left( \langle S|,K,\{\xi \}\right) _{i,j}\equiv \langle
h_{1}(i),...,h_{N}(i)|e_{j}\rangle ,\text{ \ }\forall i,j\in \{1,...,(%
\mathcal{M}+\mathcal{N})^{\mathsf{N}}\}.
\end{equation}%
We are uniquely enumerating the $N$-tuple $(h_{1}(i),...,h_{\mathsf{N}%
}(i))\in \{0,...,\mathcal{M}+\mathcal{N}-1\}^{\times \mathsf{N}}$ by:%
\begin{equation}
1+\sum_{a=1}^{\mathsf{N}}h_{a}(i)(\mathcal{M}+\mathcal{N})^{a-1}=i\in
\{1,...,(\mathcal{M}+\mathcal{N})^{\mathsf{N}}\},
\end{equation}%
and for any $j\in \{1,...,(\mathcal{M}+\mathcal{N})^{\mathsf{N}}\}$, $%
|e_{j}\rangle = \ket*{e_{1+h_{1}(j)}(1)}\otimes\ldots\otimes\ket*{e_{1+h_{\mathsf{N}}(j)}(\mathsf{N})} \in \mathcal{H}$ is the corresponding element of the
canonical basis in $\mathcal{H}$, where  $|e_{r}(a)\rangle $ stands for the element $r\in \{1,...,\mathcal{M}+\mathcal{N}\}$ of
the canonical basis in the local quantum space $V_{a}$. 
Now, being the transfer matrix $T_{1}^{\left( K\right) }(\lambda )$ a polynomial in the inhomogeneities $%
\{\xi _{i}\}$ and in the parameters of the twist matrix $K$, the same
statement holds true for the determinant on the l.h.s. of $(\ref%
{L-indep-SoV-gl(M,N)})$, which is moreover a polynomial in the coefficients $%
\langle S|e_{j}\rangle $ of the covector $\langle S|$.

Then it follows that the condition $(\ref{L-indep-SoV-gl(M,N)})$ holds true
for almost any value of these parameters if one can show it under the special limit of large inhomogeneities. Using this argument, the form of the transfer matrix in the
inhomogeneities $(\ref{T_xi})$ and the central asymptotics of the $gl_{%
\mathcal{M}|\mathcal{N}}$-graded $R$-matrix one can show that a sufficient
criterion is that the following determinant is non-zero:%
\begin{equation}
\text{det}_{(\mathcal{M}+\mathcal{N})^{\mathsf{N}}}||\left( \langle S|K_{%
1}^{h_1(i)}\cdots K_{\mathsf{N}}^{h_{\mathsf{N}}(i)}|e_{j}\rangle \right)
_{i,j\in \{1,...,(\mathcal{M}+\mathcal{N})^{\mathsf{N}}\}}||\neq 0.
\end{equation}%
Let us compute this matrix element precisely: from \eqref{eq:S_form}, it decomposes as the following sum
\begin{equation}
\bra{S}K_1^{h_1(i)}\ldots K_\mathsf{N}^{h_\mathsf{N}(i)}\ket{e_j} 
= \sum_{p_1,\ldots,p_\mathsf{N}=1}^{\mathcal{M}+\mathcal{N}}
S_{p_1}^{(1)*}\ldots S_{p_\mathsf{N}}^{(\mathsf{N})*} \ket{p_1\ldots p_\mathsf{N}}^\dagger 
K_1^{h_1(i)}\ldots K_\mathsf{N}^{h_\mathsf{N}(i)} 
\ket*{e_{1+h_1(j)}(1)} \ldots \ket*{e_{1+h_\mathsf{N}(j)}(N)}
\end{equation}
Now, the $K_a^{h_a(i)}$ being even, the matrix element factorizes by \eqref{eq:evenK_factorizes} as a product over the one site matrix elements 
\begin{multline}
\ket{p_1\ldots p_\mathsf{N}}^\dagger 
K_1^{h_1(i)}\ldots K_\mathsf{N}^{h_\mathsf{N}(i)} 
\ket*{e_{1+h_1(j)}(1)} \ldots \ket*{e_{1+h_\mathsf{N}(j)}(N)} \\
= \bra{p_1} K_1^{h_1(i)} \ket*{e_{1+h_1(j)}(1)} \ldots \bra{p_\mathsf{N}} K_\mathsf{N}^{h_\mathsf{N}(i)} \ket*{e_{1+h_\mathsf{N}(j)}(\mathsf{N})}.
\end{multline}
Therefore the sum over $p_1, \ldots, p_\mathsf{N}$ decouples as a product of $\mathsf{N}$ sums, and identifying $\bra{S,a} = S_{p_a}^{(a)*}\bra{p_a}$ in the expression leaves us with
\begin{equation}
\bra{S}K_1^{h_1(i)}\ldots K_\mathsf{N}^{h_\mathsf{N}(i)}\ket{e_j} 
= \bra{S,1}K^{h_1(i)}\ket*{e_{1+h_{1}(j)}(1)} \ldots \bra{S,\mathsf{N}}K^{h_\mathsf{N}(i)}\ket*{e_{1+h_{\mathsf{N}}(j)}(\mathsf{N})} .
\end{equation}
Hence, the determinant factorizes and the criterion amounts to 
\begin{equation}
\prod_{a=1}^{\mathsf{N}}\text{det}_{\mathcal{M}+\mathcal{N}}||\left( \langle
S,a|K_{a}^{i-1}|e_{j}(a)\rangle \right) _{i,j\in \{1,...,\mathcal{M}+%
\mathcal{N}\}}||\neq 0.
\end{equation}%
Finally, by
Proposition 2.2 of \cite{MaiN18}, it holds for the factor corresponding to site $n$ in the above product
\begin{align}
& \text{det}_{\mathcal{M}+\mathcal{N}}||\langle
S,n|K_{n}^{i-1}|e_{j}(n)\rangle _{i,j\in \{1,...,\mathcal{M}+\mathcal{N}%
\}}||\left. =\right.  \notag \\
& \text{ \ \ \ \ \ \ \ \ \ \ \ }\prod_{a=1}^{m_{\mathcal{M}}}\left( x_{1,%
\mathcal{M}}^{\left( a\right) }\right) ^{d_{a,\mathcal{M}}}\prod_{a=1}^{m_{%
\mathcal{N}}}\left( x_{1,\mathcal{N}}^{\left( a\right) }\right) ^{d_{a,%
\mathcal{N}}}\prod_{a=1}^{m_{\mathcal{M}}}\prod_{b=1}^{m_{\mathcal{N}%
}}\left( k_{a,\mathcal{M}}-k_{b,\mathcal{N}}\right) ^{d_{a,\mathcal{M}}d_{b,%
\mathcal{N}}}  \notag \\
& \text{ \ \ \ \ \ \ \ \ \ \ }\times \prod_{1\leq a < b\leq m_{\mathcal{M}%
}}\left( k_{a,\mathcal{M}}-k_{b,\mathcal{M}}\right) ^{d_{a,\mathcal{M}}d_{b,%
\mathcal{M}}}\prod_{1\leq a < b\leq m_{\mathcal{N}}}\left( k_{a,\mathcal{N}%
}-k_{b,\mathcal{N}}\right) ^{d_{a,\mathcal{N}}d_{b,\mathcal{N}}},
\end{align}%
which is clearly nonzero under the condition that the twist $K$ has 
simple spectrum and that $(\ref{Non-zero-state})$ holds. The simplicity of the
transfer matrix spectrum is then a trivial consequence of the fact that the
set of covectors $(\ref{SoV-L-gl(M,N)-basis})$ is proven to be a basis.
Indeed, it implies that given a generic eigenvalue $t(\lambda )$ of $%
T_{1}^{(K)}(\lambda )$, the associated eigenvector $|t\rangle $ is unique,
being characterized uniquely (up to normalization) by the eigenvalue as%
\begin{equation}
\langle h_{1},...,h_{\mathsf{N}}|t\rangle =\prod_{a=1}^{\mathsf{N}%
}t^{h_{a}}(\xi _{a}),\text{ \ }\forall (h_{1},...,h_{\mathsf{N}})\in \{0,...,%
\mathcal{M}+\mathcal{N}-1\}^{\times \mathsf{N}}\text{.}
\end{equation}
\end{proof}

\begin{rem}
Note that $\bra{S} \neq \bra{S,1}\ldots \bra{S,\mathsf{N}}$.
\end{rem}

The norm of $\ket{S}$ is $\braket{S}{S} = \prod_{a=1}^\mathsf{N} \sum_{i=1}^{\mathcal{M}+\mathcal{N}} |S_i|^2$ and can be set to convenience. In particular, it may be taken to one.

Let us observe that some stronger statement can be done about the transfer
matrix diagonalizability and spectrum simplicity according to the following

\begin{proposition} 	\label{prop:diagonalizability}
Let the twist matrix $K$ be diagonalizable and with simple spectrum on $%
\mathbb{C}^{\mathcal{M}|\mathcal{N}}$, then $T_{1}^{(K)}(\lambda )$ is
diagonalizable and with simple spectrum, for almost any values of the
inhomogeneities satisfying the condition $(\ref{Inhomog-cond})$. Indeed,
taken the generic eigenvalue $t(\lambda )$ of $T_{1}^{(K)}(\lambda )$ it
holds:%
\begin{equation}
\langle t|t\rangle \neq 0,  \label{Non-Jordan-Block}
\end{equation}%
where $|t\rangle $ and $\langle t|$ are the unique eigenvector and
eigencovector associated to it.
\end{proposition}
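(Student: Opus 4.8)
The plan is to reduce the statement to the non-vanishing criterion $(\ref{Non-Jordan-Block})$ and then to establish that criterion by a limiting argument. First I would invoke the following elementary fact: if $t(\lambda)$ is an eigenvalue of $T_{1}^{(K)}(\lambda)$ whose right and left eigenspaces are one-dimensional — which is guaranteed for almost all parameters by Theorem \ref{th:SoV-basis} together with its transpose version — then $t$ carries no non-trivial Jordan block if and only if the unique right eigenvector $|t\rangle$ and the unique left eigenvector $\langle t|$ pair non-trivially, i.e. $\langle t|t\rangle\neq 0$. Indeed, for a single Jordan block of size $\geq 2$ the right eigenvector and the left eigenvector sit at opposite ends of the block and are necessarily orthogonal. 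Combined with the geometric simplicity already proven in Theorem \ref{th:SoV-basis}, the relation $(\ref{Non-Jordan-Block})$ holding for every eigenvalue branch forces each eigenvalue to have algebraic multiplicity one; hence $T_{1}^{(K)}(\lambda)$ is diagonalizable with simple spectrum. It therefore suffices to prove $(\ref{Non-Jordan-Block})$.

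To get hold of $\langle t|t\rangle$ I would construct, in parallel with the left SoV covector basis of Theorem \ref{th:SoV-basis}, a right SoV vector basis $|k_{1},\ldots,k_{\mathsf{N}}\rangle\equiv\prod_{n=1}^{\mathsf{N}}\big(T_{1}^{(K)}(\xi_{n})\big)^{k_{n}}|\bar S\rangle$ generated by a generic vector $|\bar S\rangle$; the very same determinant argument as in Theorem \ref{th:SoV-basis}, now using that $K$ is diagonalizable with simple spectrum, shows that this is a basis of $\mathcal{H}$ for almost all parameters, so that the eigencovector $\langle t|$ is indeed unique. In the respective dual bases both the eigenvector and the eigencovector acquire the factorized wave-functions
\begin{equation}
\langle h_{1},\ldots,h_{\mathsf{N}}|t\rangle\propto\prod_{a=1}^{\mathsf{N}}t^{h_{a}}(\xi_{a}),\qquad \langle t|k_{1},\ldots,k_{\mathsf{N}}\rangle\propto\prod_{a=1}^{\mathsf{N}}t^{k_{a}}(\xi_{a}),
\end{equation}
so that $\langle t|t\rangle$ may be written as a bilinear form in these wave-functions weighted by the transition matrix between the two SoV bases. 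Since $T_{1}^{(K)}(\lambda)$ is polynomial in the $\xi_{a}$ and in the twist parameters, the suitably normalized $\langle t|t\rangle$ is, exactly as in Theorem \ref{th:SoV-basis}, an analytic function of these data; hence the locus on which it vanishes lies in a proper analytic subset, and it is enough to exhibit a single limit in which it is non-zero.

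The convenient limit is that of large, widely separated inhomogeneities. Using the representation $(\ref{T_xi})$ of $T_{1}^{(K)}(\xi_{n})$ together with $R_{n,m}(\xi_{n}-\xi_{m})=(\xi_{n}-\xi_{m})\,\mathbb{I}+\eta\,\mathbb{P}_{n,m}$, the rescaled transfer matrix $\big[\prod_{m\neq n}(\xi_{n}-\xi_{m})\big]^{-1}T_{1}^{(K)}(\xi_{n})$ tends to the local twist $K_{n}$ acting on the $n$-th site only. Consequently the common eigenvectors (resp.\ eigencovectors) of the commuting family $\{T_{1}^{(K)}(\xi_{n})\}_{n}$ degenerate to tensor products of local right (resp.\ left) eigenvectors of $K$, with the separated-variable values $t(\xi_{n})$ tending to the corresponding twist eigenvalues. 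By the factorization $(\ref{eq:evenK_factorizes})$ the overlap collapses to a product of local pairings,
\begin{equation}
\langle t|t\rangle\longrightarrow\prod_{a=1}^{\mathsf{N}}\langle v_{i_{a}}|u_{i_{a}}\rangle,
\end{equation}
where $|u_{i}\rangle$ and $\langle v_{i}|$ denote the right and left eigenvectors of $K$ attached to the same eigenvalue branch. Because $K$ is diagonalizable with simple spectrum, its left and right eigenvectors are bi-orthogonal with $\langle v_{i}|u_{i}\rangle\neq 0$ — this is precisely where diagonalizability, rather than mere block-form simplicity, is used. Therefore $\langle t|t\rangle\neq 0$ in the limit, hence for almost all inhomogeneities satisfying $(\ref{Inhomog-cond})$, which proves $(\ref{Non-Jordan-Block})$.

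The main obstacle I anticipate is the control of this limit: one must check that the subleading $\mathbb{P}$-corrections to $T_{1}^{(K)}(\xi_{n})$ do not spoil the factorization of the eigenvectors — that is, that the eigenvalue-to-eigenvector correspondence stays single-valued along the way and that the normalization keeping $\langle t|t\rangle$ finite does not itself degenerate. This is exactly where the hypothesis that $K$ has simple spectrum enters: it renders the limiting local eigenvalues $k_{i}$ pairwise distinct, so that the factorized tensor-product eigenvectors remain well separated and the limit is non-degenerate, allowing the genericity argument to be closed.
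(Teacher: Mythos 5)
Your proposal is correct and takes essentially the same route as the paper: the paper likewise reduces the claim to the non-orthogonality condition $(\ref{Non-Jordan-Block})$ (which, given the spectrum simplicity from Theorem \ref{th:SoV-basis}, excludes non-trivial Jordan blocks) and obtains that condition by "following the proof of Proposition 2.5 of \cite{MaiN18}", i.e. precisely the argument you reconstruct — a right SoV vector basis dual to the covector one, polynomiality/genericity in the inhomogeneities and twist parameters, and the large-inhomogeneity limit in which the overlap factorizes, via $(\ref{eq:evenK_factorizes})$, into bi-orthogonal pairings of the left and right eigenvectors of $K$, nonzero because $K$ is diagonalizable with simple spectrum. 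Your write-up simply makes explicit what the paper's citation compresses.
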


\begin{proof}
Following the proof of Proposition 2.5 of \cite{MaiN18} the non-orthogonality
condition $(\ref{Non-Jordan-Block})$ can be derived. Such condition together
with the simplicity of the spectrum implies that we cannot have non-trivial Jordan
blocks in the transfer matrix spectrum so that it must be diagonalizable and
with simple spectrum.
\end{proof}

\subsection{On closure relations and SoV spectrum characterization}

\label{subsec:conjecture}In the previous two subsections, we have shown how
the transfer matrix $T_{1}^{(K)}(\lambda )$ associated to general
inhomogeneous representations of the $gl_{\mathcal{M}|\mathcal{N}}$-graded
Yang-Baxter algebra allows to reconstruct all the fused transfer matrices
(mainly by using the known fusion relations \eqref{eq:Fusion-} and %
\eqref{eq:Fusion+}). Moreover, we have shown that $T_{1}^{(K)}(\lambda )$
allows to characterize an SoV basis, which also implies its spectrum simplicity
or diagonalizability and spectrum simplicity if the twist matrix $K$ is,
respectively, with simple spectrum or diagonalizable with simple spectrum.

This analysis shows that the full integrable structure of the $gl_{\mathcal{M%
}|\mathcal{N}}$-graded Yang-Baxter algebra can be recasted in its
fundamental transfer matrix as well as the construction of quantum
separation of variables. However, there is still one missing information,
which is a functional equation, or a discrete system of equations, allowing
the complete characterization of the transfer matrix spectrum. As already
mentioned, the fusion relations $(\ref{Fusion})$ alone only give the
characterization of higher transfer matrices in terms of the first one. Some
further algebra and representation dependent rules are required in order to
complete them and extract a closure relation on the transfer matrix.

In the case of the quantum integrable models associated to the fundamental
representations of the $gl_{\mathcal{M}}$ and $U_{q}(\widehat{gl_{\mathcal{M%
}}})$ Yang-Baxter and reflection algebras, such a closure relation comes from
the quantum determinant \cite{Nic12,Nic13,MaiN18,MaiN19,MaiN19a,MaiN18e}.
Indeed, $P^-_{1\ldots\mathcal{M}}$ is a rank 1 projector in these cases,
implying that the corresponding transfer matrix $T^{(K)}_{(\mathcal{M}%
)}(\lambda )$ becomes a computable central element of the Yang-Baxter
algebra, namely the quantum determinant. Then, substituting the quantum
determinant in the fusion equation $(\ref{eq:Fusion-})$ for $n=\mathcal{M}-1$
and using the same interpolation formulae for the higher fused transfer
matrix eigenvalues, we produce a discrete system of polynomial equations
with $\mathsf{N}$ equations in $\mathsf{N}$ unknowns which was proven \cite%
{Nic12,Nic13,MaiN18,MaiN19,MaiN19a,MaiN18e} to completely characterize the
transfer matrix spectrum in quantum separation of variables. In the case of
non-fundamental representations\footnote{%
Here $q$ is not a root of unity for the quantum group case.} of the same
algebras the closure relation comes instead with the appearing of the first
central zeros in the fused transfer matrices of type $T^{(K)}_{n}(\lambda )$%
. In \cite{MaiN18b}, this analysis has been developed in detail in the case
of $\mathcal{M}=2$. There, it has been shown that imposing the central zeros
of the fused transfer matrix $T^{(K)}_{2s+1}(\lambda )$, for a spin $s\geq 1$
representation, a discrete system of polynomial equations with $\mathsf{N}$
equations in $\mathsf{N}$ unknowns is derived for the transfer matrix
eigenvalues. The set of its solutions completely characterizes the transfer
matrix spectrum in quantum separation of variables. In the nonfundamental
and cyclic representations of the $U_{q}(\widehat{gl_{\mathcal{M}}})$
Yang-Baxter algebra for $q$ a root of unit such closure relation comes from
the so-called truncation identities. For $\mathcal{M}=2$, it has been shown
in \cite{Nic10a} how these identities emerge and are proven in the framework
of the quantum separation of variables and how they are used to completely
characterize the transfer matrix spectrum. In \cite{GroN12} and \cite%
{MaiNP17,MaiNP18} these results have been extended, respectively, to the
most general cyclic representations of the $U_{q}(\widehat{gl_{2}})$
Yang-Baxter algebra and reflection algebra.

In the case of integrable quantum lattice models associated to the
fundamental representations of the $gl_{\mathcal{M}|\mathcal{N}}$-graded
Yang-Baxter algebra, the natural candidate for the closure relation is the 
\textit{inner-boundary} condition $(\ref{Closure-gl(m,n)})$. Indeed, once we
impose it on the eigenvalues $t_{\mathcal{N}}^{(\mathcal{M}+1),(K)}(\lambda
) $ and $t_{\mathcal{N}+1}^{(\mathcal{M}),(K)}(\lambda )$ of the transfer
matrices $T_{\mathcal{N}}^{(\mathcal{M}+1),(K)}(\lambda )$ and $T_{\mathcal{N%
}+1}^{(\mathcal{M}),(K)}(\lambda )$, we are left with one nontrivial
functional equation containing as unknowns the eigenvalues of the first
transfer matrix computed in the inhomogeneities $t_{1}^{\left( K\right)
}(\xi _{i\leq \mathsf{N}})$. This is the case as the eigenvalues $t_{%
\mathcal{N}}^{(\mathcal{M}+1),(K)}(\lambda )$ and $t_{\mathcal{N}+1}^{(%
\mathcal{M}),(K)}(\lambda )$ admit the same expansion in terms of the
transfer matrix eigenvalue $t_{1}^{\left( K\right) }(\lambda )$ as those
derived in subsection \ref{Recon-Tr} for the transfer matrices $T_{\mathcal{N%
}}^{(\mathcal{M}+1),(K)}(\lambda )$ and $T_{\mathcal{N}+1}^{(\mathcal{M}%
),(K)}(\lambda )$ in terms of the transfer matrix $T_{1}^{\left( K\right)
}(\lambda )$. Moreover, the \textit{inner-boundary} condition  $(\ref{Closure-gl(m,n)})$ involved the quantum Berezinian as a central element hence playing a role similar to the quantum determinant in the bosonic case.  More precisely, we can introduce the following polynomials:%
\begin{equation}
t_{1}(\lambda |\{x_{a}\})=T_{\infty ,1}^{\left( K\right) }(\lambda
)+\sum_{a=1}^{\mathsf{N}}f_{a}^{(1)}(\lambda )x_{a},  \label{Interp-1}
\end{equation}%
and from them recursively the following higher polynomials%
\begin{align}
t_{n+1}(\lambda |\{x_{a}\})& =\prod_{r=1}^{n}d(\lambda +r\eta )\left[
T_{\infty ,n+1}^{\left( K\right) }(\lambda )+\sum_{a=1}^{\mathsf{N}%
}f_{a}^{\left( n+1\right) }(\lambda )t_{n}(\xi _{a}+\eta |\{x_{a}\})x_{a}%
\right] ,  \label{Interp-n+} \\
t_{(n+1)}(\lambda |\{x_{a}\})& =\prod_{r=1}^{n}d(\lambda -r\eta )\left[
T_{\infty ,(n+1)}^{\left( K\right) }(\lambda )+\sum_{a=1}^{\mathsf{N}%
}g_{a}^{\left( n+1\right) }(\lambda )t_{(n)}(\xi _{a}-\eta |\{x_{a}\})x_{a}%
\right] ,  \label{Interp-n-}
\end{align}%
and%
\begin{eqnarray}
t_{b}^{(a)}(\lambda |\{x_{a}\}) &=&\det_{1\leq i,j\leq a}t_{b+i-j}(\lambda
-(i-1)\eta |\{x_{a}\})  \label{Interp-a-b} \\
&=&\det_{1\leq i,j\leq b}t_{(a+i-j)}(\lambda +(i-1)\eta |\{x_{a}\}).
\end{eqnarray}%
Then the following lemma holds

\begin{lemma}
Any transfer matrix $T_{1}^{(K)}(\lambda )$\ eigenvalue\footnote{%
The trivial solution $\{x_{1},...,x_{\mathsf{N}}\}=\{0,...,0\}$ has to be
excluded for invertible twist matrix.} admits the representation $%
t_{1}(\lambda |\{x_{a}\})$, where the $\{x_{a}\}$ are solutions of the
inner-boundary condition $\left( \ref{Closure-gl(m,n)}\right) $:%
\begin{equation}
(-1)^{\mathcal{N}}\text{Ber}(\lambda )t_{\mathcal{N}}^{(\mathcal{M}%
+1)}(\lambda +\eta |\{x_{a}\})=t_{\mathcal{N}+1}^{(\mathcal{M})}(\lambda
|\{x_{a}\}),\text{ \ }\forall \lambda \in \mathbb{C},
\label{Closure-gl(1,2)-fun}
\end{equation}%
and of the null out-boundary conditions $(\ref{Null-Boundary})$:%
\begin{equation}
t_{\mathcal{N}+m}^{(\mathcal{M}+n)}(\lambda |\{x_{a}\})=0,\text{ \ }\forall
\lambda \in \mathbb{C}\text{ and }n,m\geq 1.  \label{Null-Boundary-fun}
\end{equation}
\end{lemma}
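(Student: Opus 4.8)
The plan is to identify the abstractly defined polynomials $t_b^{(a)}(\lambda|\{x_a\})$ with the genuine eigenvalues of the fused transfer matrices on a fixed eigenvector, and then to transport the operator identities \eqref{Closure-gl(m,n)} and \eqref{Null-Boundary} down to the eigenvalue level. First I would fix an eigenvalue $t_1(\lambda)$ of $T_1^{(K)}(\lambda)$ with associated eigenvector $|t\rangle$, unique up to normalization by Theorem \ref{th:SoV-basis}, and define $x_a := t_1(\xi_a)$ for $a=1,\ldots,\mathsf{N}$. Evaluating \eqref{Interp-1} at $\lambda=\xi_c$ and using $T_{\infty,1}^{(K)}(\xi_c)=0$ together with $f_a^{(1)}(\xi_c)=\delta_{ac}$ gives $t_1(\xi_c|\{x_a\})=x_c=t_1(\xi_c)$. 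Since both $t_1(\lambda|\{x_a\})$ and $t_1(\lambda)$ are polynomials of degree $\mathsf{N}$ sharing the same leading coefficient $T_{\infty,1}^{(K)}$, the reconstruction of Corollary \ref{Interpolation} forces $t_1(\lambda|\{x_a\})=t_1(\lambda)$ for all $\lambda$.

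Next I would show, by induction on $n$, that $t_n(\lambda|\{x_a\})$ and $t_{(n)}(\lambda|\{x_a\})$ coincide with the eigenvalues of $T_n^{(K)}(\lambda)$ and $T_{(n)}^{(K)}(\lambda)$ on $|t\rangle$. Because all fused transfer matrices commute with $T_1^{(K)}(\lambda)$, whose spectrum is simple, the vector $|t\rangle$ is automatically a common eigenvector, so every fused operator acts on it as its scalar eigenvalue. The base case is the previous paragraph. For the step, the fusion relations in the inhomogeneities \eqref{eq:Fusion-} and \eqref{eq:Fusion+}, evaluated on $|t\rangle$, give the eigenvalue of $T_{n+1}^{(K)}$ at $\xi_a$ as $t_1(\xi_a)\,t_n(\xi_a+\eta)$; matching this against the recursive definition \eqref{Interp-n+} and invoking the interpolation formula \eqref{T-Func-form-m} identifies $t_{n+1}(\lambda|\{x_a\})$ with the true eigenvalue, and symmetrically for $t_{(n+1)}$ via \eqref{T-Func-form-m-}. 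The Bazhanov--Reshetikhin definitions \eqref{Interp-a-b} then reproduce at eigenvalue level the operator determinant formulae \eqref{Det-Formula-1} and \eqref{Det-Formula-2}, so that every $t_b^{(a)}(\lambda|\{x_a\})$ equals the eigenvalue of $T_b^{(a),(K)}(\lambda)$ on $|t\rangle$.

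Having realized each $t_b^{(a)}(\lambda|\{x_a\})$ as an honest eigenvalue, I would finish by acting with the operator relations on $|t\rangle$. Applying the inner-boundary identity \eqref{Closure-gl(m,n)} to $|t\rangle$, and using that $\operatorname{Ber}(\lambda)$ is central and hence a scalar on the eigenspace, yields exactly \eqref{Closure-gl(1,2)-fun}; applying the vanishing \eqref{Null-Boundary} for $(a,b)\notin\bar{H}_{\mathcal{M}|\mathcal{N}}$ yields \eqref{Null-Boundary-fun} for all $n,m\geq 1$. The trivial solution $\{x_a\}=\{0,\ldots,0\}$ is automatically excluded for invertible $K$, since then $T_{\infty,1}^{(K)}\neq 0$ forces $t_1(\lambda|\{x_a\})$ to be genuinely of degree $\mathsf{N}$.

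I expect the main obstacle to be the bookkeeping in the inductive step, namely checking that the recursively defined $t_{n+1}(\lambda|\{x_a\})$ carries the correct central-zero prefactors $\prod_{r=1}^{n}d(\lambda\pm r\eta)$ and the correct leading asymptotics $T_{\infty,n+1}^{(K)}$, so that the match with the true fused eigenvalue is exact at \emph{every} $\lambda$ and not merely at the interpolation nodes $\xi_a$. This rests on the degree count (the reduced polynomials having degree $\mathsf{N}$) and on the fact that the fusion relations in the inhomogeneities supply precisely the $\mathsf{N}$ data needed to pin down each fused eigenvalue uniquely; once this is secured, the descent of the algebraic relations to the eigenvalues is immediate.
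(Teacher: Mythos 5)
Your proof is correct and follows essentially the same route the paper (implicitly) relies on: since, by Corollary \ref{Interpolation} and the determinant formulae, every fused transfer matrix is a polynomial expression in the commuting operators $T_{1}^{(K)}(\xi_{a})$, the eigenvector $|t\rangle$ is automatically a common eigenvector, the functions $t_{b}^{(a)}(\lambda|\{x_{a}\})$ with $x_{a}=t_{1}(\xi_{a})$ are exactly the corresponding fused eigenvalues, and the operator identities \eqref{Closure-gl(m,n)} and \eqref{Null-Boundary} then descend verbatim to the eigenvalue level. One small correction to a side remark: your justification for excluding the trivial solution is faulty, since invertibility of $K$ does not imply $T_{\infty,1}^{(K)}=\operatorname{str}K\neq 0$ (the supertrace of an invertible even matrix can vanish); the correct reason is that for invertible $K$ the operators $T_{1}^{(K)}(\xi_{a})$ are invertible by the reconstruction \eqref{T_xi}, so no eigenvalue can vanish at all the inhomogeneities, whereas $\{x_{a}\}=\{0,\ldots,0\}$ does solve the functional system.
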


In the next section, we conjecture that the above system of functional
equations completely characterizes the transfer matrix spectrum in the case
of the $gl_{1|2}$-graded Yang-Baxter algebra. We prove this characterization
for some special class of twist matrices while we only give some first
motivations of it for general representations.
In appendix \ref{Ver-Num} we verify it for quantum chains with two and three sites. Let us also mention that this conjecture can be checked explicitly for the simple  $gl_{1|1}$ case. It would be interesting in this respect to elucidate the relation of our method with the one developed recently in \cite{MukHL18}.

\section{\texorpdfstring{On SoV spectrum description
of
$gl_{1|2}$
Yang-Baxter
superalgebra}{On SoV spectrum description
of
gl(1|2)
Yang-Baxter
superalgebra}}

Specialising to the $gl_{1|2}$ case, some results have already been obtained in the context of the NABA, see \cite{Goh02,HutLPRS16} for instance.

\subsection{General statements and conjectured closure relation for general
integrable twist}

We use this subsection to clarify and justify the following conjecture
\begin{conjecture}
Taken the general $gl_{1|2}$-graded Yang-Baxter algebra with twisted boundary conditions, the polynomial $t_{1}(\lambda |\{x_{a}\})$ defined above is an eigenvalue
of the transfer matrix $T_{1}^{(K)}(\lambda )$ (excluding the trivial solution $x_1=\ldots=x_\mathsf{N}=0$) iff. the higher polynomials associated to it satisfy, in addition to the fusion relations, the inner-boundary condition \eqref{Closure-gl(1,2)-fun}
and the null out-boundary conditions \eqref{Null-Boundary-fun} for $\mathcal{M}=1$ and $\mathcal{N}=2$.
\end{conjecture}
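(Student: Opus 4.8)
The statement is an equivalence whose two directions are of very different difficulty. The direction ``eigenvalue $\Rightarrow$ functional equations'' is precisely the content of the lemma preceding the conjecture: if $t_{1}(\lambda|\{x_{a}\})$ is an eigenvalue of $T_{1}^{(K)}(\lambda)$, then by the spectrum simplicity of Theorem~\ref{th:SoV-basis} there is a unique associated eigenvector on which all the commuting fused transfer matrices act by scalars; these scalars are exactly the polynomials $t_{b}^{(a)}(\lambda|\{x_{a}\})$ reconstructed in \eqref{Interp-a-b} from $x_{a}=t_{1}(\xi_{a}|\{x_{a}\})$, and they inherit the fusion relations \eqref{Fusion}, the inner-boundary condition \eqref{Closure-gl(1,2)-fun} and the null out-boundary conditions \eqref{Null-Boundary-fun} directly from the corresponding operator identities. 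Hence only the converse carries genuine content: I must show that \emph{any} nontrivial solution $\{x_{a}\}$ of the functional system actually produces an eigenvalue.

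The plan for the converse is to build the candidate eigenvector explicitly in the SoV basis of Theorem~\ref{th:SoV-basis}. Given a nontrivial solution $\{x_{a}\}$, I would define $|t\rangle\in\mathcal{H}$ by prescribing the factorized wave-function $\langle h_{1},\ldots,h_{\mathsf{N}}|t\rangle=\prod_{a=1}^{\mathsf{N}}x_{a}^{\,h_{a}}$ on the covector basis \eqref{SoV-L-gl(M,N)-basis}; since $f_{a}^{(1)}(\xi_{b})=\delta_{ab}$ and $T_{\infty,1}^{(K)}(\xi_{b})=0$ in \eqref{Interp-1}, one has $x_{a}=t_{1}(\xi_{a}|\{x_{a}\})$, so this is the unique vector carrying the shape dictated by $t_{1}$. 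I would then verify $T_{1}^{(K)}(\lambda)|t\rangle=t_{1}(\lambda|\{x_{a}\})|t\rangle$. As both sides are degree-$\mathsf{N}$ polynomials in $\lambda$ sharing the central leading coefficient $T_{\infty,1}^{(K)}$, their difference has degree $\leq\mathsf{N}-1$, so it is enough to check the identity at the $\mathsf{N}$ points $\lambda=\xi_{a}$. By the representation \eqref{T_xi}, the operator $T_{1}^{(K)}(\xi_{a})$ acts on the covector basis \eqref{SoV-L-gl(M,N)-basis} as the shift $h_{a}\mapsto h_{a}+1$, which reproduces the factor $x_{a}$ whenever $h_{a}<\mathcal{M}+\mathcal{N}-1$; the sole nontrivial test is at the top value $h_{a}=\mathcal{M}+\mathcal{N}-1$, where one further action leaves the basis range and must be re-expanded by a closure relation acting on site $a$.

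The heart of the argument, and the main obstacle, is to prove that this closure relation, i.e. the expression of $\bigl(T_{1}^{(K)}(\xi_{a})\bigr)^{\mathcal{M}+\mathcal{N}}$ as a combination of lower powers on site $a$, is exactly encoded by the inner-boundary and null out-boundary conditions. The re-expansion forces $x_{a}$ to obey a site-$a$ characteristic equation of degree $\mathcal{M}+\mathcal{N}=3$, and I would show that imposing \eqref{Closure-gl(1,2)-fun} and \eqref{Null-Boundary-fun}, after eliminating the higher polynomials $t_{b}^{(a)}$ through the Bazhanov--Reshetikhin determinants \eqref{Det-Formula-1}--\eqref{Det-Formula-2} and the fusion relations and then evaluating at $\lambda=\xi_{a}$, reproduces precisely that characteristic equation in the unknowns $\{x_{a}\}$. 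Here the quantum Berezinian $\operatorname{Ber}(\lambda)$, being central, plays the role that the quantum determinant plays in the $gl_{\mathcal{M}}$ case, supplying the product term of the characteristic polynomial, while the vanishing $t_{3}^{(2)}(\lambda|\{x_{a}\})=0$ fixes its degree and leading structure. Establishing this equivalence for an \emph{arbitrary} simple-spectrum twist is the delicate step that keeps the result at the level of a conjecture: I would first secure it for the special class of twists treated in subsection~3.2, where the higher polynomials degenerate enough to make the elimination explicit, and corroborate the general case through the two- and three-site computations of appendix~\ref{Ver-Num}.

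As a consistency check and an alternative route, one can argue by counting. The forward direction injects the $(\mathcal{M}+\mathcal{N})^{\mathsf{N}}$ simple eigenvalues into the solution set of the functional system, since an eigenvalue is recovered from the $\{x_{a}\}=\{t_{1}(\xi_{a})\}$ by interpolation. The converse is therefore equivalent to showing that the system admits no more than $(\mathcal{M}+\mathcal{N})^{\mathsf{N}}$ solutions. Matching this bound against the product over sites of the per-site cubic characteristic equations yields exactly $3^{\mathsf{N}}$ candidates, which closes the argument once each candidate is checked to be nontrivial and to generate, through the factorized wave-function above, a genuine eigenvector.
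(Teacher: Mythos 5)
Your proposal tracks the paper's own treatment of this statement almost step for step. The forward direction is exactly the lemma stated just before the conjecture, and your converse strategy --- defining $|t\rangle$ by the factorized wave-function $\prod_a x_a^{h_a}$ on the SoV basis of Theorem~\ref{th:SoV-basis}, reducing the operator identity to the $\mathsf{N}$ points $\lambda=\xi_a$ by polynomiality and the shared central asymptotics, using the shift action of $T_1^{(K)}(\xi_a)$ on the basis, and re-expanding the single out-of-range power through the closure relation, which is of fourth order in the $t_1(\xi_a)$ on one side and third order on the other --- is precisely the heuristic the paper gives in subsection 3.1, while your fallbacks (full proof for the non-invertible simple-spectrum twist $\hat K$, a B\'ezout-type count, the two- and three-site checks, NABA compatibility) correspond to subsection 3.2 and the appendices. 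Like the paper, you correctly leave the general-twist converse open: the statement is a conjecture, proven in the paper only for the class \eqref{K-hat}. One slip in your counting paragraph: for a general invertible twist the per-site equations obtained from \eqref{Closure-gl(1,2)} are quartic, not cubic, in the unknowns $x_a$; the paper's verification solves this quartic system and then selects the $3^{\mathsf{N}}$ admissible solutions by additionally imposing the null out-boundary conditions \eqref{Null-Boundary-gl1|2} for $n=m=0$. The per-site cubic system and the clean B\'ezout count of $3^{\mathsf{N}}$ hold as you state them only in the $\mathsf{k}_1=0$ case \eqref{Discre-Ch}, where $t_3(\lambda|\{x_a\})\equiv 0$ collapses the closure relation.
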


The fat hook domain for $gl_{1|2}$ is pictured in figure \ref{fig:fat_hook12}%
. 
\begin{figure}[ht]
\centering  \includegraphics[width=0.3\textwidth]{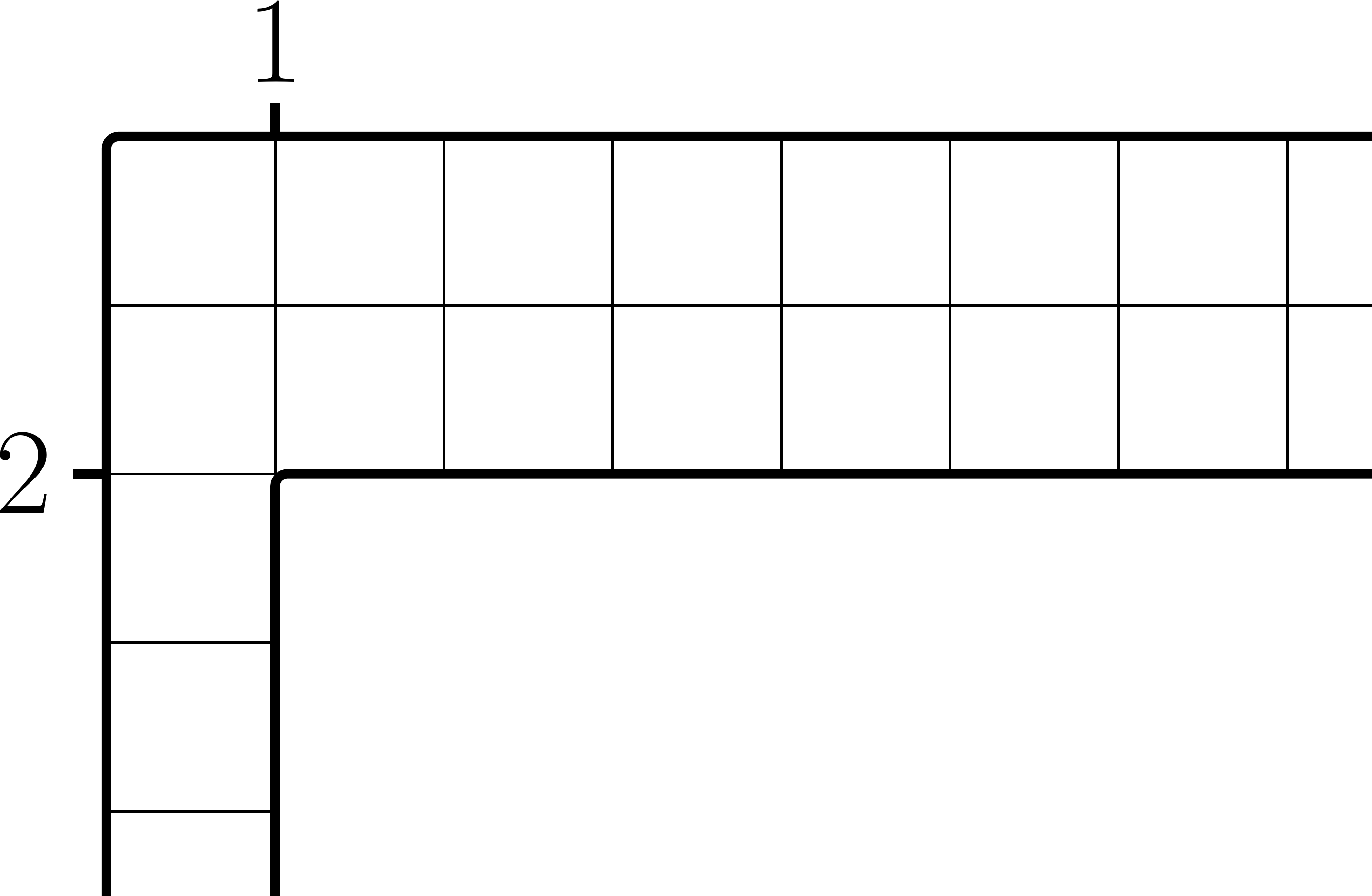}
\caption{Admissible domain for Young Diagrams of $gl_{1|2}$.}
\label{fig:fat_hook12}
\end{figure}
In the $gl_{1|2}$-graded case under consideration the inner-boundary
condition \eqref{Closure-gl(m,n)} reads:%
\begin{equation}
T_{2}^{(2),(K)}(\lambda +\eta )\mathsf{k}_{1}=\mathsf{k}_{2}\mathsf{k}_{3}d(\lambda )\,T_{3}^{(K)}(\lambda ),
\end{equation}%
where:%
\begin{equation}
\det K_{\mathcal{M}=1}=K_{\mathcal{M}=1}=\mathsf{k}_{1}\text{ and }\det K_{%
\mathcal{N}=2}=\mathsf{k}_{2}\mathsf{k}_{3}.
\end{equation}%
Moreover, we have the following expressions\footnote{%
They can be computed for example by induction starting from the explicit
formulae for $T_{\infty ,1}^{(K)}$ and $T_{\infty ,2}^{(K)}$, by using the
fusion equations and the null out-boundary conditions.} for the asymptotics
of the fused transfer matrices 
\begin{equation}
T_{\infty ,n}^{(K)}=\mathsf{k}_{1}^{n-2}(\mathsf{k}_{1}-\mathsf{k}_{3})(%
\mathsf{k}_{1}-\mathsf{k}_{2}),\text{ \ }\forall n\geq 2.
\end{equation}%
Then imposing it for the corresponding eigenvalues we get 
\begin{equation}
\mathsf{k}_{3}\mathsf{k}_{2}d(\lambda )t_{3}(\lambda )=\mathsf{k}%
_{1}t_{2}^{(2)}(\lambda +\eta ),  \label{Perodicity-eigen}
\end{equation}%
which, once we express the eigenvalues $t_{2}^{(2)}(\lambda +\eta )$ by the
use of the fusion relation $(\ref{Fusion})$:%
\begin{equation}
t_{2}^{(2)}(\lambda +\eta )=t_{2}(\lambda )t_{2}(\lambda +\eta
)-t_{1}(\lambda +\eta )t_{3}(\lambda ),
\end{equation}%
takes the following closure relation form:%
\begin{equation}
\mathsf{k}_{3}\mathsf{k}_{2}d(\lambda )t_{3}(\lambda )=\mathsf{k}%
_{1}(t_{2}(\lambda )t_{2}(\lambda +\eta )-t_{3}(\lambda )t_{1}(\lambda +\eta
)).  \label{Closure-gl(1,2)}
\end{equation}%
Now, using the interpolation formulae $(\ref{Interp-1})$ and $(\ref%
{Interp-n+})$ for the transfer matrix eigenvalues, we get that the closure
relation is indeed a functional equation whose unknowns coincide with the $%
x_{i\leq \mathsf{N}}\equiv t_{1}^{\left( K\right) }(\xi _{i\leq \mathsf{N}})$%
. The transfer matrix eigenvalues have to satisfy furthermore the null
out-boundary conditions $(\ref{Null-Boundary})$, which reads:%
\begin{equation}
t_{3+m}^{(2+n)}(\lambda )=0,\text{ \ }\forall \lambda \in \mathbb{C}\text{
and }n,m\geq 0,  \label{Null-Boundary-gl1|2}
\end{equation}%
in the $gl_{1|2}$-graded case under consideration.

According to our Conjecture the transfer matrix spectrum coincides with the
set of solutions to the functional equations $(\ref{Closure-gl(1,2)})$ and $(%
\ref{Null-Boundary-gl1|2})$ in the unknowns $t_{1}^{\left( K\right) }(\xi
_{i\leq \mathsf{N}})$. This spectrum characterization for general twist
matrix will be proven by direct action of the transfer matrices on our SoV
basis in our next publication. Here we present some arguments in favour of
it while in the next subsections 3.2 we prove it for a special choice of the
twist matrix.

Let us consider the case of a twist matrix $K$ invertible with simple spectrum%
\footnote{%
Indeed, the case of $K$ non-invertible but having simple spectrum of the form \rf{K-hat}
will be described in detail in the next subsection.}, then our SoV basis can be
written as follows:

\begin{equation}
\langle h_{1},...,h_{\mathsf{N}}|\equiv \langle \bar{S}|\prod_{n=1}^{\mathsf{%
N}}(T_{1}^{(K)}(\xi _{n}))^{h_{n}}\text{ \ for any }\{h_{1},...,h_{\mathsf{N}%
}\}\in \{1,2,3\}^{\times \mathsf{N}},
\end{equation}%
indeed in this case the transfer matrices $T_{1}^{(K)}(\xi _{n})$ are
invertible\footnote{The reconstructions of local operators, pioneered in \cite{KitMT99,MaiT00}, implies that the twisted transfer matrix computed in the in the inhomogeneities coincides with the local matrix K at the site n dressed by the product of shift operators along the chain. So, they are invertible as all these operators are invertible.} and so we can write our original vector $\langle S|$ as it
follows:%
\begin{equation}
\langle S|=\langle \bar{S}|\prod_{n=1}^{\mathsf{N}}T_{1}^{(K)}(\xi _{n}).
\end{equation}

If the $t_{1}(\xi _{i\leq \mathsf{N}})$ solve the closure relation $(\ref%
{Closure-gl(1,2)})$ and the null out-boundary conditions $(\ref%
{Null-Boundary-gl1|2})$, to prove that a vector $|t\rangle $ characterized by

\begin{equation}
\langle h_{1},...,h_{\mathsf{N}}|t\rangle \equiv \prod_{n=1}^{\mathsf{N}%
}t_{1}^{h_{n}}(\xi _{n})\text{ \ }\forall \{h_{1},...,h_{\mathsf{N}}\}\in
\{1,2,3\}^{\times \mathsf{N}},
\end{equation}%
is indeed a transfer matrix eigenvector, the main point is to be able to
reduce the covectors containing a fourth order power of $T_{1}^{(K)}(\xi
_{n})$ in those of the SoV basis, of maximal order three. Moreover, this
reduction must came from relations which are satisfied identically by both
the fused transfer matrices and the functions $t_{r}(\lambda|\{x_a\} )$
defined in $(\ref{Interp-n+})$, in terms of the given solution $t_{1}(\xi
_{i\leq \mathsf{N}})$. Indeed, this is exactly what it is done by the
closure relation for the transfer matrix:%
\begin{equation}
\mathsf{k}_{3}\mathsf{k}_{2}d(\lambda )T^{(K)}_{3}(\lambda )=\mathsf{k}
_{1}(T^{(K)}_{2}(\lambda )T^{(K)}_{2}(\lambda +\eta )-T^{(K)}_{3}(\lambda
)T_{1}^{(K)}(\lambda +\eta )),  \label{Closure-gl(1,2)-T}
\end{equation}%
and by the corresponding one $(\ref{Closure-gl(1,2)})$ for the eigenvalues.
As one can easily remark that $(\ref{Closure-gl(1,2)-T})$ and $(\ref%
{Closure-gl(1,2)})$ are both of fourth order, respectively, in the $%
T_{1}^{(K)}(\xi _{n})$ and $t_{1}(\xi _{n})$ on the right hand side while
they are both of third order on the left hand side.

In appendix \ref{subsec:SoV_NABA}, we will verify that Nested Algebraic and
Analytic Bethe Ansatz are indeed compatible with these requirements, i.e.
the functional ansatz for the eigenvalues $t_{1}(\lambda )$ indeed satisfies
the closure relation $(\ref{Closure-gl(1,2)})$ and the null out-boundary
conditions $(\ref{Null-Boundary-gl1|2})$. There, we moreover argue the
completeness of the Bethe Ansatz which is compatible with our Conjecture.

It is also worth to mention that we have verified our Conjecture on small
lattices, up to three sites. More in detail, we have solved the discrete
system of $\mathsf{N}$ equations in the $\mathsf{N}$ unknowns $t_{1}^{\left(
K\right) }(\xi _{i\leq \mathsf{N}})$ obtained particularizing $(\ref%
{Closure-gl(1,2)})$ in $\mathsf{N}$ distinct values of $\lambda $. Among
these solutions we have selected the solutions verifying the null
out-bounday conditions $(\ref{Null-Boundary-gl1|2})$ for $n=m=0$. This has
produced exactly $3^{\mathsf{N}}$ distinct solutions which are proven to
coincide with $T_{1}^{(K)}(\lambda )$ transfer matrix eigenvalues computed
by direct diagonalization, see appendix \ref{Ver-Num}.

\subsection{\label{N-InV}\texorpdfstring{SoV spectrum
characterization
for
non-invertible and simple spectrum twist matrix}{SoV spectrum
characterization
for
w-simple
non-invertible twist matrix}}

Let us study here the spectral problem for the transfer matrices associated
to the fundamental representations of the $gl_{1|2}$-graded Yang-Baxter
algebra in the following class of non-invertible but having simple spectrum $\hat{K}$
twist matrices:%
\begin{equation}
\hat{K}=\left( 
\begin{array}{cc}
\mathsf{k}_{1}=0 & 0 \\ 
0 & K_{2}%
\end{array}%
\right) _{3\times 3},  \label{K-hat}
\end{equation}%
with $K_{2}$ any invertible, diagonalizable and simple $2\times 2$ matrix,
i.e. it holds:%
\begin{equation}
\mathsf{k}_{2}\neq \mathsf{k}_{3},\mathsf{k}_{i}\neq 0\text{ }i=2,3.
\label{Twist-eigenvalue-cond}
\end{equation}%
Despite $K$ having a zero eigenvalue, the results of subsection \ref%
{sec:SoV-basis} imply that the set of covectors $(\ref{SoV-L-gl(M,N)-basis})$
still forms a covector basis of $\mathcal{H}$. Moreover, these are
non-trivial fundamental representations of the $gl_{1|2}$-graded Yang-Baxter
algebra for which our Conjecture is verified, as shown in the following:

\begin{theorem}
For almost any values of the inhomogeneities $\{\xi _{a\leq \mathsf{N}}\}$
satisfying the condition $(\ref{Inhomog-cond})$ and the twist matrix
eigenvalues satisfying $(\ref{Twist-eigenvalue-cond})$, the eigenvalue
spectrum of $T_{1}^{(\hat{K})}(\lambda )$ coincides with the following set
of polynomials 
\begin{equation}
\Sigma _{T^{(\hat{K})}}=\left\{ t_{1}(\lambda ):t_{1}(\lambda )=-(\mathsf{k}%
_{2}+\mathsf{k}_{3})\text{ }\prod_{a=1}^{\mathsf{N}}(\lambda -\xi
_{a})+\sum_{a=1}^{\mathsf{N}}f_{a}^{\left( 1\right) }(\lambda )x_{a},\text{
\ }\forall \{x_{1},...,x_{\mathsf{N}}\}\in S_{T^{(\hat{K})}}\right\} ,
\label{SET-T5}
\end{equation}%
where $S_{T^{(\hat{K})}}$ is the set of solutions to the following system of 
$\mathsf{N}$ cubic equations:%
\begin{equation}
x_{a}\left[ \mathsf{k}_{2}\mathsf{k}_{3}d(\xi _{a}+\eta )+\sum_{r=1}^{%
\mathsf{N}}f_{r}^{\left( 2\right) }(\xi _{a}+\eta )t_{1}(\xi _{r}+\eta )x_{r}%
\right] \left. =\right. 0,\text{ \ }\forall a\in \{1,...,\mathsf{N}\},
\label{Discre-Ch}
\end{equation}%
in $\mathsf{N}$ unknown $\{x_{1},...,x_{\mathsf{N}}\}$. Moreover, $T_{1}^{(%
\hat{K})}(\lambda )$ is diagonalizable and with simple spectrum. For any $%
t_{1}(\lambda )\in \Sigma _{T^{(\hat{K})}}$, the associated and unique
eigenvector $|t\rangle $ (up-to normalization) has the following
wave-functions in our SoV covector basis:%
\begin{equation}
\langle h_{1},...,h_{\mathsf{N}}|t\rangle =\prod_{n=1}^{\mathsf{N}%
}t_{1}^{h_{n}}(\xi _{n}).  \label{SoV-Ch-T-e-Vector}
\end{equation}
\end{theorem}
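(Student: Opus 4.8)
\emph{Overall strategy.} For the twist $\hat K$ the whole conjectural system collapses to the single cubic system \eqref{Discre-Ch}. The plan is to prove this collapse, then show that every eigenvalue solves \eqref{Discre-Ch}, and finally match a Bézout-type count of its solutions against $\dim\mathcal{H}=3^{\mathsf{N}}$, using the simplicity of the spectrum already granted by Theorem \ref{th:SoV-basis}.

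First I would exploit that $\mathsf{k}_1=\det K_{\mathcal{M}=1}=0$ while $\mathsf{k}_2\mathsf{k}_3\,d(\lambda)\not\equiv 0$, so that the operator inner-boundary relation \eqref{Closure-gl(1,2)-T} degenerates to the identity $T_3^{(\hat K)}(\lambda)=0$ on $\mathcal{H}$. I would then record that this lone relation already carries the full content of the Conjecture for $\hat K$: because $T_{\infty,n}^{(\hat K)}=0$ for $n\geq 3$, the interpolation formula \eqref{T-Func-form-m} gives by induction $t_n(\lambda|\{x\})=0$ for all $n\geq 3$ as soon as $t_3(\lambda|\{x\})=0$, and feeding this into the determinant formula \eqref{Det-Formula-1} (whose first column then vanishes identically) yields $t_{2+m}^{(1+n)}(\lambda|\{x\})=0$ for all $n,m\geq 1$, i.e.\ the null out-boundary conditions \eqref{Null-Boundary-gl1|2} hold automatically. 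Hence the only surviving constraint is $t_3(\lambda|\{x\})=0$.

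For necessity I take an eigenvalue $t_1(\lambda)$ with eigenvector $|t\rangle$ and set $x_a\equiv t_1(\xi_a)$. By Corollary \ref{Interpolation} and the fusion relations the eigenvalue of $T_3^{(\hat K)}(\lambda)$ on $|t\rangle$ is precisely $t_3(\lambda|\{x\})$, so $T_3^{(\hat K)}=0$ forces $t_3(\lambda|\{x\})\equiv 0$. Since $T_{\infty,3}^{(\hat K)}=0$, this polynomial equals $\prod_{r=1}^{2}d(\lambda+r\eta)$ times a polynomial of degree $\leq\mathsf{N}-1$ whose value at $\xi_a$ is, by the Lagrange property $f_b^{(3)}(\xi_a)=\delta_{ab}f_a^{(3)}(\xi_a)$ with $f_a^{(3)}(\xi_a)\neq 0$, proportional to $t_2(\xi_a+\eta|\{x\})\,x_a$. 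Thus $t_3(\lambda|\{x\})\equiv 0$ is equivalent to $t_2(\xi_a+\eta|\{x\})\,x_a=0$ for every $a$, and inserting the $n=1$ interpolation \eqref{Interp-n+} for $t_2$ together with $T_{\infty,2}^{(\hat K)}(\lambda)=\mathsf{k}_2\mathsf{k}_3\,d(\lambda)$ reproduces exactly \eqref{Discre-Ch}. So every eigenvalue yields a point of $S_{T^{(\hat K)}}$, and the map $t_1\mapsto\{x_a\}$ is injective because $t_1$ is recovered from the $x_a$ through \eqref{Interp-1}, i.e.\ through the parametrisation in \eqref{SET-T5}. For completeness I would invoke Theorem \ref{th:SoV-basis}: $\hat K$ has the simple spectrum $\{0,\mathsf{k}_2,\mathsf{k}_3\}$, so the covectors \eqref{SoV-L-gl(M,N)-basis} form a basis and $T_1^{(\hat K)}(\lambda)$ has exactly $3^{\mathsf{N}}$ distinct eigenvalues, whence $|S_{T^{(\hat K)}}|\geq 3^{\mathsf{N}}$ via the injection. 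For the reverse bound I count the solutions of \eqref{Discre-Ch}: each equation $x_a\,[\,\cdots\,]=0$ has total degree three, so the Bézout bound is $3^{\mathsf{N}}$; equivalently, decomposing the variety over the subsets $I$ on which $x_a=0$ while the quadratic bracket vanishes on the complement gives $\sum_{k=0}^{\mathsf{N}}\binom{\mathsf{N}}{k}2^{\mathsf{N}-k}=3^{\mathsf{N}}$ generic points, the branch $I=\{1,\dots,\mathsf{N}\}$ producing the trivial solution $\{x_a\}=\{0\}$, which here is an admissible genuine eigenvalue because $\hat K$ is non-invertible. The two bounds then force $S_{T^{(\hat K)}}$ to coincide with the spectrum, proving \eqref{SET-T5}. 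The factorised wave-function \eqref{SoV-Ch-T-e-Vector} follows immediately as in the proof of Theorem \ref{th:SoV-basis}, by pushing each $T_1^{(\hat K)}(\xi_n)$ onto $|t\rangle$ after fixing $\langle S|t\rangle=1$, while diagonalisability with simple spectrum follows from Proposition \ref{prop:diagonalizability}.

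The main obstacle is the completeness count: I must ensure the Bézout bound $3^{\mathsf{N}}$ for \eqref{Discre-Ch} is genuinely attained with all roots simple, finite, and lying on distinct branches, with no spurious solutions at infinity or on the overlaps $\{x_a=0\}\cap\{[\,\cdots\,]=0\}$. I expect this to require a genericity argument in $\{\xi_a\}$ and $\{\mathsf{k}_2,\mathsf{k}_3\}$ — for instance a degeneration sending the inhomogeneities far apart so that \eqref{Discre-Ch} decouples site by site into a single cubic whose three roots reflect the eigenvalues $0,\mathsf{k}_2,\mathsf{k}_3$ of $\hat K$ — after which the count is stable under deformation and is pinned to $3^{\mathsf{N}}$ by the lower bound supplied by the SoV basis.
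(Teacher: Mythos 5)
Your proposal does not reproduce the proof the paper actually gives; it reconstructs the alternative argument that the paper only sketches in a remark \emph{after} the theorem. The parts you share with the paper are correct: the collapse $T_{3}^{(\hat K)}(\lambda)\equiv 0$ from \eqref{Closure-gl(1,2)-T} at $\mathsf{k}_1=0$, the automatic validity of the null out-boundary conditions, and the necessity direction (your route via $t_3(\lambda|\{x_a\})\equiv 0$ and the Lagrange property of $f^{(3)}_a$ is an equivalent variant of the paper's use of the fusion relations $T_2^{(\hat K)}(\xi_a)=T_1^{(\hat K)}(\xi_a)T_1^{(\hat K)}(\xi_a+\eta)$ and $0=T_1^{(\hat K)}(\xi_a)T_2^{(\hat K)}(\xi_a+\eta)$). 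But for completeness the paper argues constructively: given any solution $\{x_a\}\in S_{T^{(\hat K)}}$, it shows that the vector $|t\rangle$ defined by \eqref{SoV-Ch-T-e-Vector} satisfies $\langle h_1,\dots,h_{\mathsf N}|T_1^{(\hat K)}(\lambda)|t\rangle = t_1(\lambda)\langle h_1,\dots,h_{\mathsf N}|t\rangle$ by direct action of the transfer matrix on the SoV covectors (following Theorem 5.1 of \cite{MaiN18}), the point being that the closure and fusion identities hold identically both for the operators and for the functions $t_n(\lambda|\{x_a\})$, which is what permits reducing the extra power of $T_1^{(\hat K)}(\xi_n)$ back onto the basis. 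That route never needs to count solutions of \eqref{Discre-Ch}.

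The genuine gap in your version is precisely the step you flag: the Bézout upper bound. Bézout caps the number of solutions at $3^{\mathsf N}$ only if the $\mathsf N$ cubics have no common component; if a positive-dimensional component exists, $S_{T^{(\hat K)}}$ is infinite and the set equality \eqref{SET-T5} fails outright — the theorem claims every element of $S_{T^{(\hat K)}}$ is an eigenvalue, so a lower bound by eigenvalues cannot rescue it. Your degeneration sketch (well-separated inhomogeneities, site-by-site decoupling) is not carried out, and note the system does not decouple naively: $t_1(\xi_r+\eta)$ inside the bracket of \eqref{Discre-Ch} is a linear combination of \emph{all} the $x_b$. The paper itself, in the remark where it outlines this counting proof, defers exactly this point to appendix B of \cite{MaiN19a} rather than proving it, which is presumably why its actual proof takes the constructive route. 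A second, smaller slip: you extract ``exactly $3^{\mathsf N}$ distinct eigenvalues'' from Theorem \ref{th:SoV-basis}, but that theorem only yields the SoV basis and simplicity (non-degeneracy) of the spectrum; with nontrivial Jordan blocks there would be fewer distinct eigenvalues. The count requires diagonalizability, i.e.\ Proposition \ref{prop:diagonalizability}, which you invoke only at the very end, whereas your lower bound $\lvert S_{T^{(\hat K)}}\rvert\geq 3^{\mathsf N}$ already depends on it.
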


\begin{proof}
The main identity to be pointed out here is the following one:%
\begin{equation}
T^{(\hat{K})}_{3}(\lambda )\equiv 0,
\end{equation}%
due to the closure relation $(\ref{Closure-gl(1,2)-T})$ being $\mathsf{k}%
_{1}=0$. So that the fusion equations $(\ref{eq:Fusion+})$ for $n=1$ and $%
n=2 $ read:%
\begin{eqnarray}
T_2^{(\hat{K})}(\xi _{a}) &=&T_{1}^{(\hat{K})}(\xi _{a})T_{1}^{(\hat{K}%
)}(\xi _{a}+\eta ), \\
0 &=&T_{1}^{(\hat{K})}(\xi _{a})T_2^{(\hat{K})}(\xi _{a}+\eta ).
\end{eqnarray}%
Now it is easy to verify that the system of equations $(\ref{Discre-Ch})$
just coincides with the above fusion conditions once imposed to functions
which have the analytic properties (polynomial form and asymptotics) of
eigenvalues. So that it is clear that any eigenvalue has to satisfy them and
one is left with the proof of the reverse statement. This proof can be done
just showing that the state $|t\rangle $ of the form $(\ref%
{SoV-Ch-T-e-Vector})$ is indeed an eigenvector of the transfer matrix, i.e.
that it holds:%
\begin{equation}
\langle h_{1},...,h_{\mathsf{N}}|T_{1}^{(\hat{K})}(\lambda )|t\rangle
=t_{1}(\lambda )\langle h_{1},...,h_{\mathsf{N}}|t\rangle ,
\end{equation}%
by direct action of the transfer matrix $T_{1}^{(\hat{K})}(\lambda )$ on the
SoV basis. The steps of the proof are indeed completely similar to those
described in the proof of Theorem 5.1 of \cite{MaiN18}.

Finally, let us point out that Proposition \ref{prop:diagonalizability} implies also that
the transfer matrix $T_{1}^{(\hat{K})}(\lambda )$ is diagonalizable and with
simple spectrum for general values of the inhomogeneities parameters.
\end{proof}

\begin{rem}
It is important to point out that the above theorem proves the validity of
our Conjecture for the representations considered here, as the system of
equations $(\ref{Discre-Ch})$ is equivalent to the conjectured
characterization given by the functional equations $(\ref{Closure-gl(1,2)})$
and $(\ref{Null-Boundary-gl1|2})$ in the unknowns $t_{1}^{\left( K\right)
}(\xi _{i\leq \mathsf{N}})$. Indeed, the system of equations $(\ref%
{Discre-Ch})$ is just equivalent to the functional equation%
\begin{equation}
t_{3}(\lambda |\{t_{1}^{\left( K\right) }(\xi _{i\leq \mathsf{N}})\})=0 ,
\label{t_3=zero}
\end{equation}%
which coincides with the closure relation $(\ref{Closure-gl(1,2)})$ for the $%
\mathsf{k}_{1}=0$ case under consideration. Then it is easy to verify that
the null out-boundary conditions $(\ref{Null-Boundary-gl1|2})$ are also
verified. Note for example that the condition $(\ref{t_3=zero})$ together
with the interpolation formula $(\ref{Interp-n+})$ for $\mathsf{k}_{1}=0$
implies:%
\begin{equation}
t_{n}(\lambda |\{t_{1}^{\left( K\right) }(\xi _{i\leq \mathsf{N}})\})=0, \quad \forall n\geq 3,
\end{equation}%
so that the interpolation formula $(\ref{Interp-a-b})$ implies:%
\begin{align}
	\begin{split}
	t_{3}^{( 2) }(\lambda |\{t_{1}^{(K) }(\xi _{i\leq \mathsf{N}})\}) 
	&= t_{3}(\lambda-\eta |\{t_{1}^{(K)}(\xi_{i\leq \mathsf{N}})\}) 
	   t_{3}(\lambda |\{t_{1}^{(K)}(\xi_{i\leq \mathsf{N}})\}) 
	- t_{4}(\lambda -\eta |\{t_{1}^{(K)}(\xi_{i\leq \mathsf{N}})\})   \\
	&\quad  \times t_{2}(\lambda |\{t_{1}^{(K)}(\xi _{i\leq \mathsf{N}})\}) 
	\end{split}\\
	&=0,
\end{align}
i.e. the null out-boundary conditions $(\ref{Null-Boundary-gl1|2})$ for $%
n=m=0$ and similarly for the others.
\end{rem}

\begin{rem}
Let us comment that a different proof of the above theorem can be given by
using the fact that the transfer matrix $T_{1}^{(\hat{K})}(\lambda )$ is
diagonalizable with simple spectrum. This in particular means that this
transfer matrix admits 3$^{\mathsf{N}}$ distinct eigenvalues anyone being a
solution of the system $(\ref{Discre-Ch})$ of $\mathsf{N}$ polynomial
equations of order three in $\mathsf{N}$ unknowns. The Theorem of B\'ezout%
\footnote{%
See for example William Fulton (1974). Algebraic Curves. Mathematics Lecture
Note Series. W.A. Benjamin.} states that the above system of polynomial
equations admits $3^{\mathsf{N}}$ solutions if the $\mathsf{N}$ polynomials,
defining the system, have no common components\footnote{%
Indeed, if there are common components the system admits instead an infinite
number of solutions.}. So, under the condition of no common components,
there are indeed exactly $3^{\mathsf{N}}$ distinct solutions to the above
system and each one is uniquely associated to a transfer matrix eigenvalue.
The proof of the condition of no common components can be done following
exactly the same steps presented in appendix B of \cite%
{MaiN19a}.
\end{rem}

\begin{rem}
\label{Rem-1}The fact that $T^{(\hat{K})}_{3}(\lambda )$ is identically zero
in these representations associated to non-invertible simple spectrum twist
matrix $\hat{K}$ means, in particular, that it is central so that the
algebra shows some strong resemblance to the twisted representations of the $%
gl_{3}$ Yang-Baxter algebra. In fact, taking the $gl_{3}$-representation
associated to the twist matrix $K^{\prime }=-\hat{K}$ and $\eta ^{\prime
}=-\eta $, then the SoV characterization of the spectrum implies that the
transfer matrix $T{}_{1}^{(\hat{K})}(\lambda |\eta )$, associated to the $%
gl_{1|2}$-representation, is isospectral to the transfer matrix $T_{1}^{(-%
\hat{K})}(\lambda |-\eta )$, associated to the $gl_{3}$-representation.
In the same way, we have that $T^{(\hat{K})}_{2}(\lambda |\eta )$, associated to the $%
gl_{1|2}$-representation, is isospectral to $T^{(-\hat{K})}_{(2)}(\lambda
|-\eta )$, associated to the $gl_{3}$-representation.

It is worth remarking that the same type of duality indeed holds between the 
$gl_{1|\mathcal{N}}$-graded and the $gl_{\mathcal{N}+1}$ non-graded
Yang-Baxter algebra when associated to the non-invertible but simple $(%
\mathcal{N}+1)\times (\mathcal{N}+1)$ twist matrix $\hat{K}$ with first
eigenvalue zero. More in detail, we have the isospectrality of the transfer
matrices $T_{m}^{(\hat{K})}(\lambda |\eta )$, associated to the $gl_{1|%
\mathcal{N}}$-representation, with the transfer matrices $T_{(m)}^{(-\hat{K}%
)}(\lambda |-\eta )$, associated to the $gl_{\mathcal{N}+1}$-representation,
for any $1\leq m\leq \mathcal{N}$. This in particular implies that we can
characterize completely as well the spectrum of the transfer matrices of the 
$gl_{1|\mathcal{N}}$-graded Yang-Baxter algebra for this special class of
twist matrices just using the results of \cite{MaiN19}.
Then the results of the next two subsections can be as well generalized to
these special classes of $gl_{1|\mathcal{N}}$-graded Yang-Baxter algebras.
\end{rem}

\subsubsection{The quantum spectral curve equation for non-invertible twist}

The transfer matrix spectrum in our SoV basis is equivalent to the quantum
spectral curve\footnote{To our knowledge, the quantum spectral curve terminology has been introduced by Sklyanin, see for example \cite{Skl96}. It comes natural as the transfer matrices can be seen as the quantum counterpart of the spectral inveriants of the monodromy matrix. In fact, in \cite{Skl96}, these are operatorial functional equation involving just one Q-operator, the canonical operators (i.e. the separate variable operators) and the exponential of their canonical conjugated operators (i.e. the shift operators) and the quantum spectral invariants of the monodromy matrix. In general, we write the quantum spectral curve in its coordinate form, i.e. our quantum spectral curve can be seen as the matrix element of the Sklyanin’s one between a transfer matrix eigenstate and an SoV basis element, when Sklyanin’s SoV applies, otherwise our results generalize Sklyanin’s ones. However, in general, the fact that we can prove that the transfer matrix has simple spectrum and it is diagonalizable allows us to rewrite these quantum spectral curves at the operator level.} functional reformulation as stated in the next theorem.

\begin{theorem}
\label{main-Th-F-Eq}Under the same conditions of the previous theorem, then
an entire functions $t_{1}(\lambda )$ is a $T_{1}^{(\hat{K})}(\lambda )$
transfer matrix eigenvalue iff. there exists a unique polynomial: 
\begin{equation}
\varphi _{t}(\lambda )=\prod_{a=1}^{\mathsf{M}}(\lambda -\nu _{a})\text{%
\ \ with }\mathsf{M}\leq \mathsf{N}\text{ and }\nu _{a}\neq \xi _{n}%
\text{ }\forall (a,n)\in \{1,...,\mathsf{M}\}\times \{1,...,\mathsf{N}\},
\label{Phi-form}
\end{equation}%
such that $t_{1}(\lambda )$, $t_{2}(\lambda |\{t_{1}(\xi _{a\leq \mathsf{N}%
})\})$ and $\varphi _{t}(\lambda )$ are solutions of the following quantum
spectral curve functional equation:%
\begin{equation}
\varphi _{t}(\lambda -\eta )t_{2}(\lambda -\eta )+\alpha (\lambda )\varphi
_{t}(\lambda )t_{1}(\lambda )+\beta (\lambda )\varphi _{t}(\lambda +\eta
)=0\, ,  \label{K^-QSC}
\end{equation}%
where we have defined%
\begin{equation}
\alpha (\lambda )=-\bar{\alpha}\prod_{a=1}^{\mathsf{N}}(\lambda -2\eta -\xi
_{a}),\text{ \ }\beta (\lambda )=\alpha (\lambda )\alpha (\lambda +\eta )
\end{equation}%
and $\bar{\alpha}$ is a nonzero solution of the characteristic equation:%
\begin{equation}
\bar{\alpha}\,T_{\infty ,2}^{(\hat{K})}-\bar{\alpha}^{2}\,T_{\infty ,1}^{(%
\hat{K})}+\bar{\alpha}^{3}\mathbb{I}_{\mathcal{H}}=0,  \label{Ch-Eq-K_n}
\end{equation}%
i.e. $\bar{\alpha}=-\mathsf{k}_{2}$ or $\bar{\alpha}=-\mathsf{k}_{3}$ is a
nonzero eigenvalue of the twist matrix $-\hat{K}$. Moreover, up to a
normalization, the common transfer matrix eigenvector $|t\rangle $ admits
the following separate representation\footnote{Note that \eqref{SoV-Wave.function} can be seen as a rewriting of the transfer matrix eigen-wavefunctions in terms of the eigenvalues of a Q-operator. In fact, for $ k_1=0$, the equation \eqref{eq-phi-Q-op} and \eqref{eq-phi-Q2-op} imply that the functions $\varphi_t(\la)$ are strictly related to the eigenvalues of the operator $Q_2(\la)$. Similarly for $ k_1\neq 0$, by using the NABA expression \eqref{Bethe-A-form} and the original SoV representation of the transfer matrix eigen-wavefunctions \eqref{SoV-Ch-T-e-Vector}, one can argue that \eqref{SoV-Wave.function} should be true with $\bar\alpha=k_1$ and $\varphi_t(\la)$ coinciding with the eigenvalues of the operator $Q_1(\la)$.}:
\begin{equation}
\langle h_{1},...,h_{N}|t\rangle =\prod_{a=1}^{\mathsf{N}}\alpha
^{h_{a}}(\xi _{a}+\eta )\varphi _{t}^{h_{a}}(\xi _{a}+\eta )\varphi
_{t}^{2-h_{a}}(\xi _{a}).  \label{SoV-Wave.function}
\end{equation}
\end{theorem}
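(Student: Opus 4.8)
The plan is to prove the stated equivalence by reducing the second-order Baxter-type functional equation \eqref{K^-QSC} to the discrete cubic system \eqref{Discre-Ch} that already characterizes the spectrum in the previous theorem, and then to read off the separate representation \eqref{SoV-Wave.function} from the very same reduction. Throughout I shall use that $T_{3}^{(\hat K)}\equiv 0$, that $t_{2}$ is the interpolation polynomial \eqref{Interp-n+} built from $\{x_{a}\equiv t_{1}(\xi_{a})\}$, and the fusion values $t_{2}(\xi_{a})=x_{a}\,t_{1}(\xi_{a}+\eta)$ together with $x_{a}\,t_{2}(\xi_{a}+\eta)=0$, which are exactly \eqref{Discre-Ch}. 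The parameter $\bar\alpha$ will be fixed at the outset by matching the leading $\lambda^{\mathsf{M}+2\mathsf{N}}$ coefficient of \eqref{K^-QSC}: since $\alpha(\lambda)\sim -\bar\alpha\,\lambda^{\mathsf{N}}$ and $\beta(\lambda)\sim \bar\alpha^{2}\lambda^{2\mathsf{N}}$, the top coefficient is $T_{\infty,2}^{(\hat K)}-\bar\alpha\,T_{\infty,1}^{(\hat K)}+\bar\alpha^{2}=(\bar\alpha+\mathsf{k}_{2})(\bar\alpha+\mathsf{k}_{3})$, whose vanishing is precisely the nonzero factor of the characteristic equation \eqref{Ch-Eq-K_n}; this explains both the appearance of \eqref{Ch-Eq-K_n} and the constraint $\bar\alpha\in\{-\mathsf{k}_{2},-\mathsf{k}_{3}\}$, the two choices corresponding to the two $Q$-functions of the nested system.

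For the sufficiency direction I would assume a polynomial $\varphi_{t}$ of the form \eqref{Phi-form} solving \eqref{K^-QSC} and evaluate the equation at the shifted inhomogeneities. At $\lambda=\xi_{a}+2\eta$ both $\alpha$ and $\beta$ vanish, leaving $\varphi_{t}(\xi_{a}+\eta)\,t_{2}(\xi_{a}+\eta)=0$, while at $\lambda=\xi_{a}+\eta$ one has $\beta(\xi_{a}+\eta)=0$, and inserting $t_{2}(\xi_{a})=x_{a}\,t_{1}(\xi_{a}+\eta)$ gives
\[
t_{1}(\xi_{a}+\eta)\,\bigl[\,x_{a}\,\varphi_{t}(\xi_{a})+\alpha(\xi_{a}+\eta)\,\varphi_{t}(\xi_{a}+\eta)\,\bigr]=0 .
\]
If $\varphi_{t}(\xi_{a}+\eta)\neq 0$ the first relation yields $t_{2}(\xi_{a}+\eta)=0$; if instead $\varphi_{t}(\xi_{a}+\eta)=0$ the second relation, using $t_{1}(\xi_{a}+\eta)\neq 0$ generically and $\varphi_{t}(\xi_{a})\neq 0$ from \eqref{Phi-form}, forces $x_{a}=0$. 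In either case $x_{a}\,t_{2}(\xi_{a}+\eta)=0$, i.e. \eqref{Discre-Ch}, so by the previous theorem $t_{1}\in\Sigma_{T^{(\hat K)}}$ is an eigenvalue. The same evaluation at $\xi_{a}+\eta$ also expresses $x_{a}=t_{1}(\xi_{a})$ as the ratio $\alpha(\xi_{a}+\eta)\,\varphi_{t}(\xi_{a}+\eta)/\varphi_{t}(\xi_{a})$ up to the sign fixed by conventions; substituting this into the wave-function $\langle h_{1},\ldots,h_{\mathsf{N}}|t\rangle=\prod_{n}t_{1}^{h_{n}}(\xi_{n})$ of \eqref{SoV-Ch-T-e-Vector} and collecting the powers of $\varphi_{t}(\xi_{a})$ and $\varphi_{t}(\xi_{a}+\eta)$ produces the factorized form \eqref{SoV-Wave.function} up to the overall normalization $\prod_{a}\varphi_{t}^{2}(\xi_{a})$.

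The substantial (necessity) direction is the construction, from a given eigenvalue, of a polynomial $\varphi_{t}$ solving \eqref{K^-QSC}. The cleanest route exploits the $gl_{1|2}\!\leftrightarrow\! gl_{3}$ isospectrality of Remark~\ref{Rem-1}: since $\mathsf{k}_{1}=0$ makes $T_{3}^{(\hat K)}\equiv 0$, the present curve is the image under $(\hat K,\eta)\mapsto(-\hat K,-\eta)$ of the $gl_{3}$ quantum spectral curve, for which the existence and uniqueness of a polynomial $Q$-function of degree $\le\mathsf{N}$ with roots off the inhomogeneity lattice were established in \cite{MaiN19}; identifying $\varphi_{t}$ with that $Q$-function (for each admissible $\bar\alpha$) transports the statement. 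Alternatively, one may construct $\varphi_{t}$ intrinsically by treating \eqref{K^-QSC} as a linear second-order recursion on each lattice $\xi_{a}+\eta\mathbb{Z}$, using the fusion collapse $t_{2}(\lambda)t_{2}(\lambda+\eta)=t_{2}^{(2)}(\lambda+\eta)$ to factor the difference operator into two first-order $TQ$ blocks and solving the resulting Bethe-type conditions, which are once more \eqref{Discre-Ch}. Uniqueness for fixed $\bar\alpha$ I would obtain by the standard quantum-Wronskian (Casoratian) argument, two admissible polynomial solutions being incompatible with the degree count and with the simplicity and diagonalizability of $T_{1}^{(\hat K)}(\lambda)$ from Proposition~\ref{prop:diagonalizability}.

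I expect the main obstacle to be exactly this necessity step: upgrading the finitely many special-point identities equivalent to \eqref{Discre-Ch} into a genuine global polynomial $\varphi_{t}$ satisfying \eqref{K^-QSC} as an identity in $\lambda$, with the sharp degree bound $\mathsf{M}\le\mathsf{N}$ and with all roots strictly off the inhomogeneities. Controlling the degree requires the leading-coefficient cancellation fixed by $\bar\alpha$ together with the subleading asymptotics of $t_{1}$ and $t_{2}$, and the condition $\varphi_{t}(\xi_{n})\neq 0$ is essential already in the degenerate branch ($x_{a}=0$) of the argument above. Once such a polynomial $\varphi_{t}$ is secured, matching the remaining non-special coefficients of \eqref{K^-QSC} is a routine interpolation and asymptotic count, and the two directions close the equivalence.
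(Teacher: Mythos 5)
Your overall architecture coincides with the paper's own proof: the hard (necessity) direction is handled, exactly as in the paper, by transporting the $gl_{3}$ quantum spectral curve theorem (Theorem 4.1 of \cite{MaiN19}, together with Theorem 5.2 of \cite{MaiN18}) through the isospectrality $(\hat{K},\eta)\mapsto(-\hat{K},-\eta)$ of Remark \ref{Rem-1}, while sufficiency rests on evaluating \eqref{K^-QSC} at the shifted inhomogeneities and recognizing the discrete system \eqref{Discre-Ch} of the previous theorem; the determination of $\bar{\alpha}$ from the leading asymptotics and the extraction of \eqref{SoV-Wave.function} from the resulting ratio relations also match the paper's reasoning. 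Your alternative ``intrinsic'' construction and the Wronskian uniqueness remark are additions, but your primary route is the published one.

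There is, however, one concrete hole in your sufficiency step. In the branch $\varphi_{t}(\xi_{a}+\eta)=0$ you conclude $x_{a}=0$ by ``using $t_{1}(\xi_{a}+\eta)\neq 0$ generically''. Genericity is available only for the parameters of the model (inhomogeneities, twist eigenvalues); it cannot be invoked for the unknown candidate function $t_{1}$, whose value at $\xi_{a}+\eta$ nothing in the hypotheses prevents from vanishing for particular solutions. Hence the sub-case $\varphi_{t}(\xi_{a}+\eta)=t_{1}(\xi_{a}+\eta)=0$ is not covered by your two evaluation points $\xi_{a}+\eta$ and $\xi_{a}+2\eta$. The paper closes this by using the third family of points $\lambda=\xi_{a}$: there the term $\varphi_{t}(\lambda-\eta)t_{2}(\lambda-\eta)$ drops out, since by \eqref{Interp-n+} the polynomial $t_{2}(\lambda-\eta)$ carries the factor $d(\lambda)$, and after dividing by $\alpha(\xi_{a})\neq 0$ the equation \eqref{K^-QSC} reduces to the relation \eqref{QSC-1} expressing $t_{1}(\xi_{a})$ through $\alpha(\xi_{a}+\eta)\varphi_{t}(\xi_{a}+\eta)/\varphi_{t}(\xi_{a})$. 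This immediately forces $x_{a}=t_{1}(\xi_{a})=0$ whenever $\varphi_{t}(\xi_{a}+\eta)=0$ (recall $\varphi_{t}(\xi_{a})\neq 0$ by \eqref{Phi-form}); equivalently, as the paper does, one multiplies $\varphi_{t}(\xi_{a}+\eta)t_{2}(\xi_{a}+\eta)=0$ by the finite nonzero ratio $\alpha(\xi_{a}+\eta)/\varphi_{t}(\xi_{a})$ and substitutes \eqref{QSC-1} to obtain $x_{a}\,t_{2}(\xi_{a}+\eta)=0$ with no case analysis at all. With that repair your argument is complete and is essentially the paper's.
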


\begin{proof} From the above Remark 3.3, we have that this theorem is a direct consequence
of the Theorem 5.2 of  \cite{MaiN18}. To make the comparison
easier, one has just to take the quantum spectral curve characterization of
Theorem 4.1 of \cite{MaiN19} and use it in the case $n=3$, 
$K\rightarrow -\hat{K}$, $\eta \rightarrow -\eta $ to get the quantum
spectral curve associated to the non-invertible simple spectrum twist $\hat{K}
$. Here, $t_{1}(\lambda )$ is the eigenvalue associated to the $gl_{1|2}$%
-transfer matrix $T{}_{1}^{(\hat{K})}(\lambda |\eta )$, isospectral to the $%
gl_{3}$-transfer matrix $T{}_{1}^{(-\hat{K})}(\lambda |-\eta )$, and $%
t_{2}(\lambda |\{t_{1}(\xi _{a\leq \mathsf{N}})\})$ is the eigenvalue of the 
$gl_{1|2}$-transfer matrix $T^{(\hat{K})}_{2}(\lambda |\eta )$, isospectral
to the $gl_{3}$-transfer matrix $T^{(-\hat{K})}_{(2)}(\lambda |-\eta )$.
Then, just removing the common zeros, in the three nonzero terms of the
equation and making the common shift $\lambda \rightarrow \lambda -2\eta $,
we get our quantum spectral curve equation $\left( \ref{K^-QSC}\right) $.
Let us recall that the main elements in the proof of the theorem rely on the
fact that the quantum spectral curve is equivalent to the following $3%
\mathsf{N}$ conditions:%
\begin{eqnarray}
\alpha (\xi _{a}+\eta )\frac{\varphi _{t}(\xi _{a}+\eta )}{\varphi _{t}(\xi
_{a})} &=&t_{1}(\xi _{a}),\text{ \ for }\lambda =\xi _{a},\text{ }\forall
a\in \{1,...,\mathsf{N}\},  \label{QSC-1} \\
\alpha (\xi _{a}+\eta )\frac{\varphi _{t}(\xi _{a}+\eta )}{\varphi _{t}(\xi
_{a})} &=&\frac{t_{2}(\xi _{a})}{t_{1}(\xi _{a}+\eta )},\text{ \ for }%
\lambda =\xi _{a}+\eta ,\text{ }\forall a\in \{1,...,\mathsf{N}\},
\label{QSC-2} \\
\varphi _{t}(\xi _{a}+\eta )t_{2}(\xi _{a}+\eta ) &=&0,\text{ \ for }\lambda
=\xi _{a}+2\eta ,\text{ }\forall a\in \{1,...,\mathsf{N}\},  \label{QSC-3}
\end{eqnarray}%
once the asymptotics are fixed as stated above in this theorem. It is then
easy to observe that the compatibility of this system of equations is
equivalent to impose that $t_{1}(\lambda )$ and $t_{2}(\lambda |\{t_{1}(\xi
_{a\leq \mathsf{N}})\})$ satisfy the fusion equations:%
\begin{eqnarray}
t_{1}(\xi _{a})t_{1}(\xi _{a}+\eta ) &=&t_{2}(\xi _{a}),\text{ \ }\forall
a\in \{1,...,\mathsf{N}\},  \label{QSC-Compt-1} \\
t_{1}(\xi _{a})t_{2}(\xi _{a}+\eta ) &=&0,\text{ \ }\forall a\in \{1,...,%
\mathsf{N}\}.  \label{QSC-Compt-2}
\end{eqnarray}%
Here, the equation $\left( \ref{QSC-Compt-1}\right) $ is derived as
compatibility conditions of $\left( \ref{QSC-1}\right) $ and $\left( \ref%
{QSC-2}\right) $. While, being%
\begin{equation}
\alpha (\xi _{a}+\eta )\neq 0,\text{ \ }\varphi _{t}(\xi _{a})\neq 0\text{ }%
\forall a\in \{1,...,\mathsf{N}\},
\end{equation}
the equation $\left( \ref{QSC-Compt-2}\right) $ is derived from $\left( \ref%
{QSC-3}\right) $ multiplying both sides of it for the finite nonzero ratio $%
\alpha (\xi _{a}+\eta )/\varphi _{t}(\xi _{a})$ and by using $\left( \ref%
{QSC-1}\right) $.
\end{proof}

\subsubsection{\label{Completeness-BA}Completeness of Bethe Ansatz solutions
by SoV for non-invertible twist}

As detailed in the introduction, Nested Algebraic and Analytic Bethe Ansatz
have been used to study the spectrum of the model associated to the
fundamental representation of the $gl_{\mathcal{M}|\mathcal{N}}$ Yang-Baxter
superalgebra. For the fundamental representation of the $gl_{1|2}$
Yang-Baxter superalgebra associated to a simple and diagonalizable twist
matrix $K$, let us recall here the form of the Bethe Ansatz equations \cite{Kul85,EssK92,PfaF96,Goh02}:
\begin{eqnarray}
\mathsf{k}_{1}Q_{2}(\lambda _{j})a(\lambda _{j}) &=&\mathsf{k}_{2}d(\lambda
_{j})Q_{2}(\lambda _{j}+\eta ),  \label{NBA-eq-1} \\
\mathsf{k}_{2}Q_{2}(\mu _{j}+\eta )Q_{1}(\mu _{j}-\eta ) &=&-\mathsf{k}%
_{3}Q_{2}(\mu _{j}-\eta )Q_{1}(\mu _{j}),  \label{NBA-eq-2}
\end{eqnarray}%
where 
\begin{equation}
Q_{1}(\lambda )=\prod_{l=1}^{L}(\lambda -\lambda _{l}),\text{ \ \ }%
Q_{2}(\lambda )=\prod_{m=1}^{M}(\lambda -\mu _{m}),
\end{equation}%
and the Bethe Ansatz form of the transfer matrix eigenvalue%
\begin{equation}
t_{1}(\lambda |\{\lambda _{j\leq L}\},\{\mu _{h\leq M}\})=\Lambda
_{1}(\lambda )-\Lambda _{2}(\lambda )-\Lambda _{3}(\lambda ),
\label{Bethe-A-form-0}
\end{equation}%
defined by%
\begin{eqnarray}
\Lambda _{1}(\lambda ) &=&a_{1}(\lambda )\frac{Q_{1}(\lambda -\eta )}{Q_{1}(\lambda
)} \\
\Lambda _{2}(\lambda ) &=&a_{2}(\lambda )\frac{Q_{1}(\lambda -\eta
)Q_{2}(\lambda +\eta )}{Q_{1}(\lambda )Q_{2}(\lambda )} \\
\Lambda _{3}(\lambda ) &=&a_{3}(\lambda )\frac{Q_{2}(\lambda -\eta )}{Q_{2}(\lambda
)},
\end{eqnarray}%
where%
\begin{eqnarray}
a_{1}(\lambda ) &=& \mathsf{k}_1 a(\lambda), \\
\frac{a_{2}(\lambda )}{\mathsf{k}_{2}} &=&\frac{a_{3}(\lambda )}{\mathsf{k}%
_{3}}=d(\lambda ).
\end{eqnarray}%
It is worth to observe that being%
\begin{equation}
t_{1}(\lambda |\{\lambda _{j\leq L}\},\{\mu _{h\leq M}\})=\mathsf{k}%
_{1}a(\lambda )\frac{Q_{1}(\lambda -\eta )}{Q_{1}(\lambda )}-d(\lambda
)\left( \mathsf{k}_{2}\frac{Q_{1}(\lambda -\eta )Q_{2}(\lambda +\eta )}{%
Q_{1}(\lambda )Q_{2}(\lambda )}+\mathsf{k}_{3}\frac{Q_{2}(\lambda -\eta )}{%
Q_{2}(\lambda )}\right) ,  \label{Bethe-A-form}
\end{equation}%
under the following \textit{pair-wise distinct} conditions%
\begin{equation}
\lambda _{l}\neq \lambda _{m},\text{ }\mu _{p}\neq \mu _{q},\text{ }\mu
_{q}\neq \lambda _{m},\text{ }\forall l\neq m\in \{1,...,L\},\text{ }p\neq
q\in \{1,...,M\},  \label{Non-deg-B-root}
\end{equation}%
it follows that the function $t_{1}(\lambda |\{\lambda _{j\leq L}\},\{\mu
_{h\leq M}\})$ has only apparent simple poles in the $\lambda _{j\leq L}$
and $\mu _{h\leq M}$. The regularity of $t_{1}(\lambda |\{\lambda _{j\leq
L}\},\{\mu _{h\leq M}\})$ for $\lambda =\lambda _{j\leq L}$ is implied by
the Bethe equation $\left( \ref{NBA-eq-1}\right) $ while the regularity of $%
t_{1}(\lambda |\{\lambda _{j\leq L}\},\{\mu _{h\leq M}\})$ for $\lambda =\mu
_{j\leq M}$ is implied by the Bethe equation $\left( \ref{NBA-eq-2}\right) $%
. Hence\footnote{%
One should also ask for some condition like $\{\mu _{h\leq M}\}\cap \{\xi
_{j\leq \mathsf{N}}\}=\emptyset $ from which the Bethe equation $\left( \ref%
{NBA-eq-1}\right) $ implies $\{\lambda _{h\leq L}\}\cap \{\xi _{j\leq 
\mathsf{N}}\}=\emptyset $ unless $\mathsf{k}_{1}=0.$} $t_{1}(\lambda
|\{\lambda _{j\leq L}\},\{\mu _{h\leq M}\})$ is a polynomial of degree $%
\mathsf{N}$ with the correct asymptotic for a transfer matrix eigenvalue,
i.e. it holds:%
\begin{equation}
\lim_{\lambda \rightarrow \infty }\lambda ^{-\mathsf{N}}t_{1}(\lambda
|\{\lambda _{j\leq L}\},\{\mu _{h\leq M}\})=\mathsf{k}_{1}-(\mathsf{k}_{2}+%
\mathsf{k}_{3}).
\end{equation}%
So that the above ansatz is indeed consistent with the analytic properties
enjoyed by the transfer matrix eigenvalues. Now, we show that the Bethe
ansatz solutions are complete as a corollary of the completeness of the
derived quantum spectral curve in the SoV framework, for the class of
representations considered in this section. More precisely it holds the next

\begin{corollary}
Let us consider the class of fundamental representations of the $gl_{1|2}$-graded Yang-Baxter algebra associated to non-invertible but simple spectrum $\hat{K}$ twist matrices, with eigenvalues satisfying \eqref{Twist-eigenvalue-cond}. 
Then, for almost any values of the inhomogeneities, $t_{1}(\lambda )$ is an eigenvalue of the transfer matrix $T{}_{1}^{(\hat{K})}(\lambda |\eta )$ iff. it exists a solution $\{\{\lambda _{j\leq L}\},\{\mu _{h\leq M}\}\}$ to the Bethe Ansatz equations \eqref{NBA-eq-1} and \eqref{NBA-eq-2} such that $t_{1}(\lambda )\equiv t_{1}(\lambda |\{\lambda _{j\leq L}\},\{\mu_{h\leq M}\})$, 
i.e. $t_1(\lambda)$ has the Bethe Ansatz form \eqref{Bethe-A-form} associated to the solutions $\{\lambda_{j\leq L}\}$, $\{\mu_{h\leq M}\}$. 
Moreover, for any $t_{1}(\lambda )\in \Sigma_{T^{(\hat{K})}}$ the associated Bethe Ansatz solution $\{\{\lambda _{j\leq L}\}$, $\{\mu _{h\leq M}\}\}$ is unique and satisfies the admissibility conditions:
\begin{equation}
\{\lambda _{j\leq L}\}\subset \{\xi _{j\leq \mathsf{N}}\},\text{ \ }\{\mu
_{h\leq M}\}\cap \{\{\xi _{j\leq \mathsf{N}}\}\cup \{\xi _{j\leq \mathsf{N}%
}+\eta \}\}=\emptyset .  \label{Admissible}
\end{equation}
\end{corollary}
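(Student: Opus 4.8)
The plan is to derive the corollary from the quantum spectral curve (QSC) characterization of Theorem \ref{main-Th-F-Eq}. That theorem already provides a bijection between the transfer matrix eigenvalues $t_1 \in \Sigma_{T^{(\hat K)}}$ and the polynomials $\varphi_t$ of \eqref{Phi-form} solving the third-order functional equation \eqref{K^-QSC}, while the preceding theorem guarantees that $T_1^{(\hat K)}(\lambda)$ has simple spectrum with exactly $3^{\mathsf{N}}$ eigenvalues. Hence the completeness statement reduces to showing that each QSC solution is precisely a Bethe Ansatz solution in the sense of \eqref{Bethe-A-form}, and conversely, with the claimed admissibility and uniqueness. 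Since by Remark \ref{Rem-1} the model is isospectral to the $gl_3$ representation with twist $-\hat K$ and shift $-\eta$, the whole argument can be made to parallel the $gl_3$ completeness proof of \cite{MaiN19,MaiN19a}, the counting against the known $3^{\mathsf{N}}$ eigenvalues furnishing completeness.

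For the forward direction, given an eigenvalue $t_1$ I would take the unique $\varphi_t$ and identify it with the second-level $Q$-function, $Q_2 := \varphi_t$, its roots furnishing the candidate roots $\{\mu_h\}$. The core step is to produce the first-level function $Q_1$: specialising the Bethe form \eqref{Bethe-A-form} to $\mathsf{k}_1 = 0$ and solving for the ratio $Q_1(\lambda-\eta)/Q_1(\lambda)$ in terms of the known data $t_1$, $t_2 = t_2(\lambda|\{t_1(\xi_a)\})$ and $Q_2$, I would check that the resulting rational function is the ratio of a shifted polynomial, thereby defining $Q_1$; here the vanishing $\mathsf{k}_1 = 0$ is what forces the roots of $Q_1$ to localise rather than float freely. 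The two Bethe equations \eqref{NBA-eq-1} and \eqref{NBA-eq-2} then appear as the no-spurious-pole conditions ensuring polynomiality of \eqref{Bethe-A-form} at the roots of $Q_1$ and of $Q_2$ respectively, and correspond, through the fusion relations \eqref{QSC-Compt-1}--\eqref{QSC-Compt-2}, to the $3\mathsf{N}$ conditions \eqref{QSC-1}--\eqref{QSC-3} equivalent to the QSC.

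The admissibility conditions \eqref{Admissible} then follow from the special structure $\mathsf{k}_1 = 0$ and constitute the delicate part of the argument. With $\mathsf{k}_1 = 0$ the first Bethe equation \eqref{NBA-eq-1} collapses to $\mathsf{k}_2\, d(\lambda_j)\,Q_2(\lambda_j + \eta) = 0$, so, since $\mathsf{k}_2 \neq 0$ by \eqref{Twist-eigenvalue-cond}, each first-level root satisfies either $\lambda_j \in \{\xi_n\}$ or $\lambda_j + \eta \in \{\mu_h\}$. Using relation \eqref{QSC-1}, which ties $t_1(\xi_a) = 0$ to $\xi_a + \eta \in \{\mu_h\}$ (note that \eqref{Phi-form} constrains $\nu_a \neq \xi_n$ but permits roots of $\varphi_t$ at $\xi_a + \eta$), together with the pairwise distinctness \eqref{Non-deg-B-root}, I would pass to the reduced Bethe solution by cancelling the common factors of the $Q$-functions located at $\{\xi_n + \eta\}$; after this reduction the second alternative is excluded, yielding $\{\lambda_j\} \subset \{\xi_n\}$ and $\{\mu_h\}\cap(\{\xi_n\}\cup\{\xi_n+\eta\}) = \emptyset$, while $\nu_a \neq \xi_n$ already secures $\{\mu_h\}\cap\{\xi_n\} = \emptyset$.

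For the converse and uniqueness I would invoke the analytic check performed just before the statement: any solution of \eqref{NBA-eq-1}--\eqref{NBA-eq-2} obeying \eqref{Non-deg-B-root} yields a degree-$\mathsf{N}$ polynomial $t_1(\lambda|\{\lambda_j\},\{\mu_h\})$ with the correct asymptotics, and substituting it back one verifies it solves \eqref{K^-QSC}, so by Theorem \ref{main-Th-F-Eq} it is a genuine eigenvalue; uniqueness of the Bethe data attached to a given $t_1$ follows from the uniqueness of $\varphi_t$ (fixing the reduced $Q_2$) and the fact that $Q_1$ is then determined by the associated $TQ$ relation. The main obstacle I anticipate is precisely the construction of $Q_1$ as an honest polynomial out of the factorised QSC and the careful bookkeeping of apparent poles and cancelling factors needed to pin down \eqref{Admissible}; the remaining steps are either residue computations or bijective counting against the known spectrum.
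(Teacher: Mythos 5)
Your overall skeleton (lean on Theorem \ref{main-Th-F-Eq} and the $gl_3$ correspondence of Remark \ref{Rem-1}, read the Bethe equations as pole-cancellation conditions, extract admissibility from $\mathsf{k}_1=0$) is the right one, but the step you yourself flag as the ``main obstacle'' --- producing $Q_1$ as an honest polynomial --- is precisely the heart of the proof, and your proposal contains no mechanism to close it. Solving \eqref{Bethe-A-form} at $\mathsf{k}_1=0$ for the ratio $Q_1(\lambda-\eta)/Q_1(\lambda)$ and ``checking'' that the resulting rational function telescopes into a polynomial ratio cannot work from the QSC data alone: nothing in \eqref{K^-QSC} by itself tells you where the roots of $Q_1$ sit, and a generic rational function is not of the form $Q_1(\lambda-\eta)/Q_1(\lambda)$. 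The paper's proof supplies exactly the missing input, and it does not come from the QSC but from the fusion relations: since every eigenvalue must satisfy \eqref{QSC-Compt-1}--\eqref{QSC-Compt-2}, one gets $t_{2}^{(\hat{K})}(\xi _{a})\,t_{2}^{(\hat{K})}(\xi _{a}+\eta )=0$ for all $a$, and because $t_2$ is a degree-$2\mathsf{N}$ polynomial with known zeros at the $\xi_a-\eta$ and known asymptotics $\mathsf{k}_2\mathsf{k}_3$, it is forced to be $t_{2,\mathbf{h}}^{(\hat{K})}(\lambda )=\mathsf{k}_{2}\mathsf{k}_{3}\prod_{a}(\lambda -\xi _{a}+\eta )(\lambda -\xi_{a}^{(h_{a})})$ for a discrete choice $\mathbf{h}\in\{-1,0\}^{\mathsf{N}}$. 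This classification then yields the induced zeros $t_1(\xi_{\pi_{\mathbf{h}}(a)})=0$, the corresponding factorization of $\varphi_t$, and --- after cancelling the common factors --- the \emph{explicit} identification $Q_{1}(\lambda )\equiv \bar{d}_{\mathbf{h}}(\lambda )=\prod_{a>m_{\mathbf{h}}}(\lambda -\xi _{\pi _{\mathbf{h}}(a)})$, i.e.\ $Q_1$ is by construction a product over a subset of inhomogeneities. That is also why the admissibility conditions \eqref{Admissible} come out for free: $\{\lambda_j\}\subset\{\xi_n\}$ holds by construction, not through the a posteriori ``reduction'' argument you sketch, which as stated is circular (you cannot cancel factors between $Q_1$ and $Q_2$ whose location you have not yet determined).

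Two smaller points. First, your admissibility discussion needs the non-vanishing $\varphi_t(\xi_a+\eta)\neq 0$ at the sites $\xi_a$ carrying $h_a=-1$; in the paper this follows from \eqref{QSC-1} combined with $t_2(\xi_a)\neq 0$ (hence $t_1(\xi_a)\neq0$) at those sites, another consequence of the explicit form of $t_{2,\mathbf{h}}$ that your route never establishes. Second, completeness here is not obtained by ``bijective counting against the known $3^{\mathsf{N}}$ eigenvalues'': the paper's corollary is a direct constructive equivalence, every eigenvalue being mapped to the admissible Bethe data $Q_{1}=\bar{d}_{\mathbf{h}}$, $Q_{2}=\bar{\varphi}_{t,\mathbf{h}}$ via \eqref{t_1,h-form}, with uniqueness inherited from the uniqueness of $\varphi_t$ in Theorem \ref{main-Th-F-Eq}. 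So the counting remark is dispensable, while the $t_2$ classification is not: without it, or an equivalent substitute, your proof cannot be completed.
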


\begin{proof}
This corollary directly follows from our previous theorem. The proof is done
pointing out the consequences of the special form of the fusion
equations for these representations associated to these non-invertible
twists. In particular, from the fusion equations $\left( \ref{QSC-Compt-1}%
\right) $ and $\left( \ref{QSC-Compt-2}\right) $, which have to be satisfied
by all the transfer matrix eigenvalues, we derive the following equation on
the second transfer matrix eigenvalues only:%
\begin{equation}
t_{2}^{(\hat{K})}(\xi _{a})t_{2}^{(\hat{K})}(\xi _{a}+\eta )=0.
\label{Compl-Sys-n=3}
\end{equation}%
Being by definition $t_{2}^{(\hat{K})}(\lambda )$ a degree $2\mathsf{N}$
polynomial in $\lambda $, zero in the points $\xi _{a}-\eta $ for any $a\in
\{1,...,\mathsf{N}\}$, it follows that a solution to (\ref{Compl-Sys-n=3})
can be obtained iff for any $a\in \{1,...,\mathsf{N}\}$ there exists a
unique $h_{a}\in \{-1,0\}$ such that:%
\begin{equation}
t_{2,\mathbf{h}}^{(\hat{K})}(\lambda )=\mathsf{k}_{2}\mathsf{k}%
_{3}\prod_{a=1}^{\mathsf{N}}(\lambda -\xi _{a}+\eta )(\lambda -\xi
_{a}^{(h_{a})}).
\end{equation}%
So we have that the system (\ref{Compl-Sys-n=3}) has exactly $2^{\mathsf{N}}$
distinct solutions associated to the $2^{\mathsf{N}}$ distinct $\mathsf{N}$%
-uplet $\mathbf{h}=\{h_{1\leq n\leq \mathsf{N}}\}$ in $\{-1,0\}^\mathsf{N}$.
Now for any fixed $\mathbf{h}\in \{-1,0\}^\mathsf{N}$ we can define a
permutation $\pi _{\mathbf{h}}\in S_{\mathsf{N}}$ and a non-negative integer 
$m_{\mathbf{h}}\leq \mathsf{N}$ such that:%
\begin{equation}
h_{\pi _{\mathbf{h}}(a)}=0,\text{ \ }\forall a\in \{1,...,m_{\mathbf{h}}\}%
\text{ \ and \ }h_{\pi _{\mathbf{h}}(a)}=-1,\text{ \ }\forall a\in \{m_{%
\mathbf{h}}+1,...,\mathsf{N}\}\text{.}
\end{equation}%
It is easy to remark now that fixed $\mathbf{h}\in \{-1,0\}^\mathsf{N}$ then 
$\left( \ref{QSC-Compt-1}\right) $, for $a\in \{1,...,m_{\mathbf{h}}\}$, and 
$\left( \ref{QSC-Compt-2}\right) $ are satisfied iff it holds:%
\begin{equation}
t_{1,\mathbf{h}}^{(\hat{K})}(\xi _{\pi _{\mathbf{h}}(a)})=0,\text{ \ }%
\forall a\in \{1,...,m_{\mathbf{h}}\}.  \label{Induced-zeros}
\end{equation}%
Indeed, if this is not the case for a given $b\in \{1,...,m_{\mathbf{h}}\}$,
then $\left( \ref{QSC-Compt-2}\right) $ implies $t_{2,\mathbf{h}}^{(\hat{K}%
)}(\xi _{h_{\pi _{\mathbf{h}}(b)}}+\eta )=0$ which is not compatible with
our choice of $t_{2,\mathbf{h}}^{(\hat{K})}(\lambda )$. So, for any fixed $%
\mathbf{h}\in \{0,-1\}^\mathsf{N}$, we have that the eigenvalues of the
transfer matrix have the following form:%
\begin{equation}
t_{1,\mathbf{h}}^{(\hat{K})}(\lambda )=\bar{t}_{1,\mathbf{h}}^{(\hat{K}%
)}(\lambda )\prod_{a=1}^{m_{\mathbf{h}}}(\lambda -\xi _{\pi _{\mathbf{h}%
}(a)}),
\end{equation}
where $\bar{t}_{1,\mathbf{h}}^{(\hat{K})}(\lambda )$ is a degree $\mathsf{N}%
-m_{\mathbf{h}}$ polynomial in $\lambda $, and the function $\varphi _{t,%
\mathbf{h}}(\lambda )$ associated by the quantum spectral curve to the
eigenvalue $t_{1,\mathbf{h}}^{(\hat{K})}(\lambda )$ has the form:%
\begin{equation}\label{eq-phi-Q-op}
\varphi _{t,\mathbf{h}}(\lambda )=\bar{\varphi}_{t,\mathbf{h}}(\lambda
)\prod_{a=1}^{m_{\mathbf{h}}}(\lambda -\xi _{\pi _{\mathbf{h}}(a)}-\eta ),
\end{equation}
where $\bar{\varphi}_{t,\mathbf{h}}(\lambda )$ is of degree $M-m_{\mathbf{h}}\leq M$ polynomial in $\lambda $. Then, simplifying common
prefactors, the quantum spectral curve rewrite as it follows:%
\begin{equation}
\bar{t}_{1,\mathbf{h}}^{(\hat{K})}(\lambda )\bar{\varphi}_{t,\mathbf{h}%
}(\lambda )=\bar{\alpha}\,\bar{d}_{\mathbf{h}}(\lambda -\eta )\bar{\varphi}%
_{t,\mathbf{h}}(\lambda +\eta )+\frac{\mathsf{k}_{2}\mathsf{k}_{3}}{\bar{%
\alpha}}\bar{d}_{\mathbf{h}}(\lambda )\bar{\varphi}_{t,\mathbf{h}}(\lambda
-\eta ),
\end{equation}%
where we have defined:%
\begin{equation}
\bar{d}_{\mathbf{h}}(\lambda )=\prod_{a=m_{\mathbf{h}}+1}^{\mathsf{N}%
}(\lambda -\xi _{\pi _{\mathbf{h}}(a)}).
\end{equation}%
So, once we chose $\bar{\alpha}=-\mathsf{k}_{2}$, we get the following
representation of the transfer matrix eigenvalue:%
\begin{equation}
t_{1,\mathbf{h}}^{(\hat{K})}(\lambda )=-\prod_{a=1}^{m_{\mathbf{h}}}(\lambda
-\xi _{\pi _{\mathbf{h}}(a)})\frac{\mathsf{k}_{2}\,\bar{d}_{\mathbf{h}%
}(\lambda -\eta )\bar{\varphi}_{t,\mathbf{h}}(\lambda +\eta )+\mathsf{k}_{3}%
\bar{d}_{\mathbf{h}}(\lambda )\bar{\varphi}_{t,\mathbf{h}}(\lambda -\eta )}{%
\bar{\varphi}_{t,\mathbf{h}}(\lambda )}.  \label{t_1,h-form}
\end{equation}%
It is now trivial to verify that this coincides with the Bethe ansatz form $%
\left( \ref{Bethe-A-form}\right) $ for $\mathsf{k}_{1}=0$, once we fix:%
\begin{equation}\label{eq-phi-Q2-op}
Q_{1}^{(t_{1,\mathbf{h}}^{(\hat{K})})}(\lambda )\equiv \bar{d}_{\mathbf{h}%
}(\lambda )\text{, \ }Q_{2}^{(t_{1,\mathbf{h}}^{(\hat{K})})}(\lambda )\equiv 
\bar{\varphi}_{t,\mathbf{h}}(\lambda ).
\end{equation}%
Clearly by definition $Q_{1}^{(t_{1,\mathbf{h}}^{(\hat{K})})}(\lambda )$
and\ $Q_{2}^{(t_{1,\mathbf{h}}^{(\hat{K})})}(\lambda )$ are solutions of the
Bethe Ansatz equations $\left( \ref{NBA-eq-1}\right) $ and $\left( \ref%
{NBA-eq-2}\right) $ and their roots satisfy the conditions (\ref{Admissible}%
).
\end{proof}

\begin{rem}
It is worth to point out that the above set of Bethe Ansatz solutions indeed
satisfies also the pair-wise distinct conditions $(\ref{Non-deg-B-root})$.
Indeed, from the proof of the previous corollary, we know that for any fixed 
$\mathbf{h}\in \{-1,0\}^\mathsf{N}$, there are $2^{\mathsf{N}-m_{\mathbf{h}%
}} $ eigenvalues of the transfer matrix of the form $(\ref{t_1,h-form})$
associated to as many polynomials $\bar{\varphi}_{t,\mathbf{h}}(\lambda )$
of degree $M\leq \mathsf{N}-m_{\mathbf{h}}$ in $\lambda $. For any fixed $%
\mathbf{h}\in \{-1,0\}^\mathsf{N}$, these are solutions to $\left( \ref%
{NBA-eq-2}\right) $ which coincide with the system of Bethe Ansatz equations
associated to an inhomogeneous XXX spin 1/2 quantum chain with $\mathsf{N}%
-m_{\mathbf{h}}$ quantum sites, with inhomogeneities $\xi _{\pi _{\mathbf{h}%
}(a)}$ for $a\in \{m_{\mathbf{h}}+1,...,\mathsf{N}\}$ and parameter $-\eta $%
. Then, to these Bethe Ansatz solutions apply the results of the paper \cite%
{Tar94} which implies the \textit{pair-wise distinct} conditions%
\begin{equation}
\mu _{p}\neq \mu _{q},\text{ }\forall \text{ }p\neq q\in \{1,...,M\},
\end{equation}%
which together with the already proven (\ref{Admissible}) imply in
particular $(\ref{Non-deg-B-root})$.
\end{rem}

\section{Separation of variables basis for inhomogeneous Hubbard model}

\subsection{The inhomogeneous Hubbard model}

The 1+1 dimensional Hubbard model is integrable in the quantum inverse
scattering approach with respect to the Shastry's $R$-matrix, which contains
as a special case the Lax operator of the Hubbard model \cite%
{Sha86,Sha86a,Sha88}. In order to introduce them let us start defining the
following functions: 
\begin{equation}
h(\lambda ,\eta ):\sinh 2h(\lambda ,\eta )=\frac{i\eta }{2}\sin 2\lambda ,%
\text{ \ }\Lambda (\lambda )=-i\,\text{cotg}(2\lambda )\cosh (2h(\lambda
,\eta )),
\end{equation}%
here, we use the notation $\eta =-2iU$ with the parameter $U$, the coupling
of the Hubbard model, as it plays a similar role as the parameter $\eta $ in
the XXX model from the point of view of the Bethe equations. In the
following we omit the $\eta $ dependence in $h(\lambda ,\eta )$ if not
required. Then the Shastry's $R$-matrix reads:%
\begin{equation}
R_{12,34}(\lambda |\mu )=I_{12}(h(\lambda ))I_{34}(h(\mu ))\hat{R}%
_{12,34}(\lambda |\mu )I_{12}(-h(\lambda ))I_{34}(-h(\mu )),
\end{equation}%
where%
\begin{equation}
\hat{R}_{12,34}(\lambda |\mu )=R_{1,3}(\lambda -\mu )R_{2,4}(\lambda
-\mu )-\frac{\sin (\lambda -\mu )}{\sin (\lambda +\mu )}\tanh (h(\lambda
)+h(\mu ))R_{1,3}(\lambda +\mu )\sigma _{1}^{y}R_{2,4}(\lambda +\mu )\sigma
_{2}^{y},
\end{equation}%
and%
\begin{equation}
R_{a,b}(\lambda )=\left( 
\begin{array}{cccc}
\cos \lambda & 0 & 0 & 0 \\ 
0 & \sin \lambda & 1 & 0 \\ 
0 & 1 & \sin \lambda & 0 \\ 
0 & 0 & 0 & \cos \lambda%
\end{array}%
\right) \in \text{End}(V_{a}\otimes V_{b}),
\end{equation}%
where $V_{a}\cong V_{b}\cong \mathbb{C}^{2}$ and we have defined:%
\begin{equation}
I_{1,2}(h)=\cosh h/2+\sigma _{1}^{y}\otimes \sigma _{2}^{y}\sinh h/2=\exp
(\sigma _{1}^{y}\sigma _{2}^{y}\,h/2),
\end{equation}%
which satisfies the Yang-Baxter equation:%
\begin{equation}
R_{A,B}(\lambda |\mu )R_{A,C}(\lambda |\xi )R_{B,C}(\mu |\xi )=R_{B,C}(\mu
|\xi )R_{A,C}(\lambda |\xi )R_{A,B}(\lambda |\mu )\in \text{End}%
(V_{A}\otimes V_{B}\otimes V_{C}),
\end{equation}%
where we have used the capital Latin letters to represent a couple of
integers, for example $A=(1,2)$, $\ B=(3,4)$, $C=(5,6)$, meaning that:%
\begin{equation}
V_{A}=V_{1}\otimes V_{2}\cong \mathbb{C}^{4},\text{ }V_{B}=V_{3}\otimes
V_{4}\cong \mathbb{C}^{4},\text{ }V_{C}=V_{5}\otimes V_{6}\cong \mathbb{C}%
^{4}.
\end{equation}%
This $R$-matrix satisfies the following properties:%
\begin{equation}
R_{A,B}(\lambda |\lambda )=P_{1,3}P_{2,4}
\end{equation}%
where $P_{i,j}$ are the permutation operators on the two-dimensional spaces $%
V_{i}\cong V_{j}\cong \mathbb{C}^{2}$, moreover, it holds:%
\begin{equation}
R_{A,B}(\lambda |0)=\frac{L_{A,B}(\lambda )}{\cosh h(\lambda )},\text{ \ \ \ 
}R_{A,B}(0|\lambda )=\frac{L_{A,B}(-\lambda )}{\cosh h(\lambda )},
\end{equation}%
where $L_{A,B}(\lambda )$ is the Lax operator for the homogeneous Hubbard
model:%
\begin{equation}
L_{A,B}(\lambda )=I_{12}(h(\lambda ))R_{1,3}(\lambda )R_{2,4}(\lambda
)I_{12}(h(\lambda )).
\end{equation}%
We have the following unitarity property:%
\begin{equation}
R_{A,B}(\lambda |\mu )R_{B,A}(\mu |\lambda )=\cos ^{2}(\lambda -\mu )(\cos
^{2}(\lambda -\mu )-\cos ^{2}(\lambda +\mu )\tanh (h(\lambda )-h(\mu ))),
\end{equation}%
and crossing unitarity relations:%
\begin{eqnarray}
R_{A,B}^{-1}(\lambda |\mu ) &\propto &\sigma _{1}^{y}\otimes \sigma
_{2}^{y}\,R_{A,B}^{t_{A}}(\lambda -\eta |\mu )\sigma _{1}^{y}\otimes \sigma
_{2}^{y}\,, \\
R_{A,B}^{-1}(\lambda |\mu ) &\propto &\sigma _{3}^{y}\otimes \sigma
_{4}^{y}\,R_{A,B}^{t_{B}}(\lambda |\mu +\eta )\sigma _{3}^{y}\otimes \sigma
_{4}^{y}.
\end{eqnarray}%
This $R$-matrix satisfies the following symmetry properties, i.e. scalar
Yang-Baxter equation:%
\begin{equation}
R_{A,B}(\lambda |\mu )K_{A}K_{B}=K_{B}K_{A}R_{A,B}(\lambda |\mu )\in \text{%
End}(V_{A}\otimes V_{B}),
\end{equation}%
where $K\in $End$(V\cong \mathbb{C}^{4})$ is any $4\times 4$ matrix of the
form:%
\begin{align}
K(a,\alpha ,\beta ,\gamma )& =\delta _{a,1}\left( 
\begin{array}{cccc}
\alpha & 0 & 0 & 0 \\ 
0 & \beta & 0 & 0 \\ 
0 & 0 & \gamma & 0 \\ 
0 & 0 & 0 & \beta \gamma /\alpha%
\end{array}%
\right) +\delta _{a,2}\left( 
\begin{array}{cccc}
\alpha & 0 & 0 & 0 \\ 
0 & 0 & \beta & 0 \\ 
0 & \gamma & 0 & 0 \\ 
0 & 0 & 0 & \beta \gamma /\alpha%
\end{array}%
\right)  \notag \\
& +\delta _{a,3}\left( 
\begin{array}{cccc}
0 & 0 & 0 & \alpha \\ 
0 & \beta & 0 & 0 \\ 
0 & 0 & \gamma & 0 \\ 
\beta \gamma /\alpha & 0 & 0 & 0%
\end{array}%
\right) +\delta _{a,4}\left( 
\begin{array}{cccc}
0 & 0 & 0 & \alpha \\ 
0 & 0 & \beta & 0 \\ 
0 & \gamma & 0 & 0 \\ 
\beta \gamma /\alpha & 0 & 0 & 0%
\end{array}%
\right) ,
\end{align}%
where $\alpha ,\beta $ and $\gamma $ are generic complex values. Note that $%
K(1,\alpha ,\beta ,\gamma )$ is simple for generic different values of $%
\alpha ,\beta ,\gamma $ satisfying $\beta \gamma /\alpha \neq \alpha ,\beta
,\gamma $. Being $\left\{ \alpha ,\beta \gamma /\alpha ,\sqrt{\beta \gamma }%
,-\sqrt{\beta \gamma }\right\} $ the eigenvalues of $K(2,\alpha ,\beta
,\gamma )$, then $K(2,\alpha ,\beta ,\gamma )$ is simple for generic nonzero
values of $\alpha ,\beta ,\gamma $ satisfying $\beta \gamma \neq \alpha ^{2}$%
. Being $\left\{ \beta ,\gamma ,\sqrt{\beta \gamma },-\sqrt{\beta \gamma }%
\right\} $ the eigenvalues of $K(3,\alpha ,\beta ,\gamma )$, then $%
K(3,\alpha ,\beta ,\gamma )$ is simple for generic different and nonzero
values of $\beta ,\gamma $. The matrix $K(4,\alpha ,\beta ,\gamma )$ is
instead degenerate being $\left\{ \sqrt{\beta \gamma },-\sqrt{\beta \gamma }%
\right\} $ its eigenvalues.

We can define the following monodromy matrix: 
\begin{equation}
M_{A}^{(K)}(\lambda)\equiv K_{A}R_{A,A_{\mathsf{N}}}(\lambda |\xi _{\mathsf{N%
}})\cdots R_{A,A_{1}}(\lambda |\xi _{1})\in \text{End}(V_{A}\otimes \mathcal{%
H}),
\end{equation}%
where $\mathcal{H}=\bigotimes_{n=1}^{\mathsf{N}}V_{A_{n}}$, $V_{A_{n}}\cong 
\mathbb{C}^{4}$. Then the transfer matrix:%
\begin{equation}
T^{\left( K\right) }(\lambda)\equiv tr_{A}M_{A}^{(K)}(\lambda),
\end{equation}%
defines a one-parameter family of commuting operators.

\subsection{Our SoV covector basis}

The general Proposition 2.4 and 2.5 of \cite{MaiN18} for the construction of
the SoV covector basis and the diagonalizability and simplicity of the
transfer matrix spectrum can be adapted to the inhomogeneous Hubbard model.
Let us denote with $K_{J}(a,\alpha ,\beta ,\gamma )$ the diagonal form of
the matrix $K(a,\alpha,\beta ,\gamma )$ and $W_K$ the invertible matrix
defining the change of basis to it: 
\begin{equation}
K=W_{K}K_{J}W_{K}^{-1}\text{,}
\end{equation}%
clearly $W_K$ is the identity for $a=1$, then the following theorem holds:

\begin{theorem}
For almost any choice of the inhomogeneities under the condition $(\ref%
{Inhomog-cond})$ and of the twist matrix $K(a,\alpha ,\beta ,\gamma )$, for $%
a=1,2,3$, the Hubbard transfer matrix $T^{\left( K\right) }(\lambda)$ is
diagonalizable and with simple spectrum and the following set of covectors: 
\begin{equation}
\langle h_{1},...,h_{\mathsf{N}}|\equiv \langle S|\prod_{n=1}^{\mathsf{N}%
}(T^{(K)}(\xi _{n}))^{h_{n}}\text{ \ for any }\{h_{1},...,h_{\mathsf{N}%
}\}\in \{0,1,2,3\}^{\otimes \mathsf{N}},
\end{equation}%
forms a covector basis of $\mathcal{H}$, for almost any choice of $\langle
S| $. In particular, we can take the state $\langle S|$ of the following
tensor product form:%
\begin{equation}
\langle S|=\bigotimes_{a=1}^{\mathsf{N}}(x,y,z,w)_{a}\Gamma _{W}^{-1},\text{
\ \ }\Gamma _{W}=\bigotimes_{a=1}^{\mathsf{N}}W_{K,a}
\end{equation}%
simply asking $x\,y\,z\,w\neq 0$.
\end{theorem}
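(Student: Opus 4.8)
The plan is to mirror the proof of Theorem \ref{th:SoV-basis} for the $gl_{\mathcal{M}|\mathcal{N}}$ case, transferring it verbatim to the Hubbard setting where the local space is $V_{A_n}\cong\mathbb{C}^4$ and $\mathcal{M}+\mathcal{N}$ is replaced by $4$. The condition that the set $\langle h_1,\ldots,h_{\mathsf N}|$ forms a covector basis is equivalent to the non-vanishing of the $4^{\mathsf N}\times 4^{\mathsf N}$ determinant built from the matrix elements $\langle h_1(i),\ldots,h_{\mathsf N}(i)|e_j\rangle$. Since the Shastry transfer matrix $T^{(K)}(\lambda)$ is polynomial in the inhomogeneities $\{\xi_n\}$, in the twist parameters $(\alpha,\beta,\gamma)$, and the determinant is polynomial in the coefficients $\langle S|e_j\rangle$, it suffices to prove non-vanishing for one special limit; as in the graded case, the natural choice is the large-inhomogeneity limit.

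First I would establish the analogue of the key identity \eqref{T_xi}: that $T^{(K)}(\xi_n)$ reduces, via $R_{A,B}(\lambda|\lambda)=P_{1,3}P_{2,4}$ and the fact that the partial trace of the doubled permutation over the auxiliary $\mathbb{C}^4$ is the identity, to a product of $R$-matrices dressing the local twist $K$ at site $n$. This is precisely the Hubbard reconstruction of local operators alluded to in the footnote referencing \cite{KitMT99,MaiT00}. Then, in the limit of large inhomogeneities, the dominant contribution of $T^{(K)}(\xi_n)^{h_n}$ factorizes, so that the determinant criterion reduces to a product over sites:
\begin{equation}
\prod_{a=1}^{\mathsf N}\det\nolimits_{4}\big|\big|\langle S,a|K_a^{i-1}|e_j(a)\rangle\big|\big|_{i,j\in\{1,\ldots,4\}}\neq 0.
\end{equation}
Here $\langle S,a|=(x,y,z,w)_a W_{K,a}^{-1}$ is the local covector. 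Each local factor is a generalized Vandermonde determinant in the eigenvalues of $K(a,\alpha,\beta,\gamma)$, which by the $gl$-case Proposition 2.2 of \cite{MaiN18} equals (up to Vandermonde-type products over eigenvalue differences) the product $xyzw$ times differences of the four eigenvalues. Invoking the eigenvalue analysis already given in the text—namely that $K(a,\alpha,\beta,\gamma)$ has simple spectrum for $a=1,2,3$ under the stated genericity on $(\alpha,\beta,\gamma)$—each factor is nonzero precisely when $xyzw\neq 0$, which yields the claimed sufficient condition.

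The main obstacle, and the point requiring the most care, is verifying that the large-inhomogeneity asymptotics of the Shastry transfer matrix genuinely factorize into the local twist structure in the same clean way as for the difference-type $gl_{\mathcal{M}|\mathcal{N}}$ $R$-matrix. The Shastry $R$-matrix is \emph{not} of difference type and carries the nontrivial $h(\lambda,\eta)$ dressing through the $I_{12}(h)$ factors, so one must check that these dressings either drop out or contribute only an overall invertible scalar in the relevant limit, leaving the central asymptotics controlled by $K$ alone. Once this asymptotic reduction is secured, the factorization of the determinant and the reduction to the local Vandermonde problem proceed exactly as in the graded case, and the diagonalizability together with spectrum simplicity follow from the analogue of Proposition \ref{prop:diagonalizability}, i.e.\ from the non-orthogonality $\langle t|t\rangle\neq 0$ combined with the established simplicity of the spectrum.
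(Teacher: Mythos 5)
Your overall skeleton matches the paper's: the reduction identity for $T^{(K)}(\xi_n)$ at the inhomogeneities, the factorization of the $4^{\mathsf{N}}\times 4^{\mathsf{N}}$ determinant into $4\times 4$ local blocks, the Vandermonde-type evaluation of each block via Proposition 2.2 of \cite{MaiN18}, and the non-orthogonality argument of Proposition 2.5 of \cite{MaiN18} for diagonalizability. However, the step you yourself flag as ``the main obstacle'' is exactly where your proposal has a genuine gap, and the paper closes it with an idea your argument does not contain. First, your genericity argument is set up in the wrong variables: the Shastry transfer matrix is \emph{not} polynomial in the inhomogeneities, being built from $\cos$, $\sin$ and the implicitly defined dressing $h(\lambda,\eta)$ with $\sinh 2h=\frac{i\eta}{2}\sin 2\lambda$. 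The paper instead observes that $e^{h(\lambda,\eta)}$ is an algebraic function of order two in $\eta$ and $e^{\lambda}$, so that the determinant is an algebraic function of $\{e^{\xi_m}\}$ and $\eta$, and non-vanishing at a single point of parameter space then suffices to conclude for almost all parameters.

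Second, and more importantly, the paper never takes the large-inhomogeneity limit at generic $\eta$. It first specializes $\eta=0$: there $\tanh(h(\lambda,0)+h(\mu,0))=0$, so the Shastry $R$-matrix collapses to a tensor product of two difference-type XX $R$-matrices, $R_{A,B}(\lambda|\mu)_{\eta=0}=R_{1,3}(\lambda-\mu)R_{2,4}(\lambda-\mu)$, and only then does one send $\lambda\to -i\infty$, where $e^{-i\lambda}R_{A,B}(\lambda|\mu)_{\eta=0}\to (e^{-i\mu}/4)\,\mathbb{I}_{V_A\otimes V_B}$, after which the determinant factorizes over sites exactly as in the graded case. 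At generic $\eta$ the reduction you hope for does not hold in any obvious form: as $\lambda\to -i\infty$ one has $\sinh 2h(\lambda,\eta)\to\infty$, hence $\tanh(h(\lambda)+h(\mu))\to 1$ while the prefactor $\sin(\lambda-\mu)/\sin(\lambda+\mu)$ tends to a nonzero constant, so the second term of $\hat{R}_{12,34}$ survives the limit and the dressings do not reduce to an overall invertible scalar. The verification you defer is therefore not a routine check but the crux of the proof, and without the $\eta=0$ specialization (followed by the extension to generic $\eta$ through algebraicity) your argument cannot be completed as written.
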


\begin{proof}
We have just to remark that also in this case the following identity holds:$%
\allowbreak $%
\begin{equation}
T^{\left( K\right) }(\xi _{n})=R_{A_{n},A_{n-1}}(\xi _{n}|\xi _{n-1})\cdots
R_{A_{n},A_{1}}(\xi _{n}|\xi _{1})K_{A_{n}}R_{A_{n},A_{\mathsf{N}}}(\xi
_{n}|\xi _{\mathsf{N}})\cdots R_{A_{n},A_{n+1}}(\xi _{n}|\xi _{n+1}).
\end{equation}%
Let us now point out that $e^{h(\lambda ,\eta )}$ is an algebraic function
of order two in $\eta $ and $e^{\lambda }$. Then the determinant of the
matrix whose rows are the elements of these covectors in the elementary
basis is also an algebraic function of $\{e^{\,\xi_m }\}_{m\in\{1,...,%
\mathsf{N}\}}$ and $\eta $. So that showing that this determinant is nonzero
for a specific value of $\eta $ one can prove that it is nonzero for almost
any value of $\eta $ and of the others parameters, i.e. the inhomogeneities
satisfying $(\ref{Inhomog-cond})$ and the parameters $\alpha ,\beta ,\gamma $
of the twist matrix, for $a=1,2,3$. We can study for example the case $\eta
=0$. In this case $h(\lambda ,\eta )$ has the following two different
determinations:%
\begin{equation}
h(\lambda ,\eta =0)=0,i\pi /2\text{ mod}i\pi \text{.}
\end{equation}%
Note that in both the cases, we have that it holds:%
\begin{equation}
\tanh (h(\lambda ,0)+h(\mu ,0))=0
\end{equation}%
so that the Shastry's $R$-matrix reduces to the tensor product of two XX $R$%
-matrix, i.e. it holds:%
\begin{equation}
R_{A\equiv (1,2),B\equiv (3,4)}(\lambda |\mu )_{\eta =0}=R_{1,3}(\lambda
-\mu )R_{2,4}(\lambda -\mu ).
\end{equation}%
In turn this implies that: 
\begin{equation}
\lim_{\lambda \rightarrow -i\infty }e^{-i\lambda }R_{A\equiv (1,2),B\equiv
(3,4)}(\lambda |\mu )_{\eta =0}=(e^{-i\mu }/4)\,\mathbb{I}_{V_{A}\otimes
V_{B}},
\end{equation}%
so that we can repeat the same type of proof of the general Proposition 2.4
of \cite{MaiN18} to show that for a covector $\langle S|$, of the above
tensor product form, the determinant of the full matrix factorizes in the
product of the determinants of $4\times 4$ matrices which are nonzero due to
the simplicity of the spectrum of the matrix $K$. This already implies the $%
w $-simplicity of the transfer matrix $T^{\left( K\right) }(\lambda)$ then
in the case $\eta=0$ we can prove the non-orthogonality condition: 
\begin{equation}
\langle t|t\rangle\neq 0
\end{equation}
for any transfer matrix eigenvector by the same argument developed in
general Proposition 2.5 of \cite{MaiN18}, which implies the
diagonalizability and simplicity of the transfer matrix spectrum for $\eta=0$
and so for almost any value of $\eta$ and of the others parameters of the
representation.
\end{proof}
Let us briefly comment about the consequences of the existence of such a basis. The first important point to stress is that whenever we have an eigenvalue for the transfer matrix, we can write the corresponding eigenvector in the above basis. It means that if we compute a set of solutions to the Nested Bethe Ansatz equations we can immediately write the transfer matrix eigenvalue and hence the corresponding eigenvector; in particular it will be a true eigenvector as soon as it is non zero. This could be of great use in practice when dealing with finite chains with a number of sites greater than the values accessible by direct diagonalization. In particular, scalar products and form factors could become accessible, at least numerically, from this procedure. For using the above basis on a more fundamental, analytical level, one needs to obtain the complete set of fusion relations that lead to the full closure relations enabling to compute the action of the transfer matrix in the SoV basis (see the discussion on this point given in \cite{MaiN18,MaiN19}). This should lead to the full characterization of the spectrum. These fusion relations being rather involved for the Hubbard model, due in particular to the intricate dependence on spectral parameters  \cite{Bei15}, we will come back to this question in a future publication. Let us nevertheless anticipate that the results obtained for the $gl_{\mathcal{M}|\mathcal{N}}$ case will be of direct importance when dealing with the Hubbard model.

\section*{Acknowledgements}

J.M.M. and G.N. are supported by CNRS and ENS de Lyon. L.V. is supported by
École Polytechnique and ENS de Lyon. The authors would like to thank F. Delduc, M. Magro and H.
Samtleben for their availability and for interesting discussions.

\appendix

\section{Compatibility of SoV and Bethe Ansatz framework}

\label{subsec:SoV_NABA}

In this appendix, we verify how the results obtained in the Nested Algebraic
and Analytic Bethe Ansatz framework for the $gl_{1|2}$ Yang-Baxter
superalgebra are compatible with the conjectured spectrum characterization
in the SoV basis. This analysis is done in the fundamental representations
of the $gl_{1|2}$ Yang-Baxter superalgebra associated to generic
diagonalizable and simple spectrum twist matrices.

\subsection{Compatibility conditions for higher transfer matrix eigenvalues}

Here we use the Bethe Ansatz form $\left( \ref{Bethe-A-form-0}\right) $ of
the transfer matrix eigenvalues together with the Bethe ansatz equations \eqref{NBA-eq-1} and \eqref{NBA-eq-2} to describe the eigenvalues of the higher
transfer matrices in order to verify that they satisfy both the null
out-boundary $(\ref{Null-Boundary-gl1|2})$ and the inner-boundary $(\ref%
{Perodicity-eigen})$ conditions. Under these hypothesis, we get the following lemma:

\begin{lemma}
The eigenvalues of the higher transfer matrices admits the following
representation in terms of the $\Lambda _{i}(\lambda )$ functions:%
\begin{align}
t_{2}(\lambda )& =\Lambda _{1}(\lambda )(\mathsf{k}_{1}t_{1}(\lambda +\eta )+%
\mathsf{k}_{3}\mathsf{k}_{2}d(\lambda ))/\mathsf{k}_{1}  \label{T_2^1} \\
& =\Lambda _{1}(\lambda )\left( \Lambda _{1}(\lambda +\eta )-\Lambda
_{2}(\lambda +\eta )-\Lambda _{3}(\lambda +\eta )+\mathsf{k}_{3}\mathsf{k}%
_{2}d(\lambda )/\mathsf{k}_{1}\right) ,
\end{align}%
and%
\begin{equation}
t_{n+1}(\lambda )=\Lambda _{1}(\lambda )t_{n}(\lambda +\eta )\text{ \ }%
\forall n\geq 2.  \label{T_n^1}
\end{equation}
\end{lemma}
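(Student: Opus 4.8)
The plan is to identify $t_n(\lambda)$ with the eigenvalue of the column-type (graded symmetric-power) fused transfer matrix $T_n^{(K)}=T_n^{(1),(K)}$ and to evaluate it through the Analytic Bethe Ansatz, i.e. the Cherednik--Bazhanov--Reshetikhin tableau-sum representation whose validity in the $gl_{\mathcal M|\mathcal N}$-graded setting is guaranteed by the results of \cite{Tsu97,Tsu98,Tsu98a,Tsu06,KazSZ08} already invoked for the fusion hierarchy. Concretely, each $t_n(\lambda)$ is a signed sum over the admissible fillings of the height-$n$ column of the fat hook of figure \ref{fig:fat_hook12} by the indices $\{1,2,3\}$, with the even index $1$ contributing the factor $\Lambda_1$, the odd indices $2,3$ contributing $-\Lambda_2,-\Lambda_3$, and a box in row $r$ (top to bottom) carrying the spectral shift $(r-1)\eta$ as prescribed by figure \ref{fig:young_tableau_shifts}. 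The supersymmetric semistandard rule is the crucial combinatorial input: entries are nondecreasing down the column, the even index $1$ may be repeated arbitrarily, while each odd index $2,3$ may occur at most once. First I would therefore reduce the whole statement to the enumeration of these fillings, which is the heart of the argument.

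For $t_2$ the admissible height-$2$ fillings are $11,\,12,\,13,\,23$, so that
\begin{equation}
t_2(\lambda)=\Lambda_1(\lambda)\Lambda_1(\lambda+\eta)-\Lambda_1(\lambda)\Lambda_2(\lambda+\eta)-\Lambda_1(\lambda)\Lambda_3(\lambda+\eta)+\Lambda_2(\lambda)\Lambda_3(\lambda+\eta).
\end{equation}
The first three terms factor out $\Lambda_1(\lambda)$ and reproduce $\Lambda_1(\lambda)\bigl(\Lambda_1-\Lambda_2-\Lambda_3\bigr)(\lambda+\eta)=\Lambda_1(\lambda)t_1(\lambda+\eta)$; the key step is then to rewrite the purely odd term. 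Using $a_2=\mathsf k_2 d$, $a_3=\mathsf k_3 d$, $a_1=\mathsf k_1 a$ together with $a(\lambda)=d(\lambda+\eta)$, a direct cancellation of the $Q_1,Q_2$ ratios gives
\begin{equation}
\Lambda_2(\lambda)\Lambda_3(\lambda+\eta)=\mathsf k_2\mathsf k_3\,d(\lambda)d(\lambda+\eta)\frac{Q_1(\lambda-\eta)}{Q_1(\lambda)}=\Lambda_1(\lambda)\,\frac{\mathsf k_2\mathsf k_3\,d(\lambda)}{\mathsf k_1}.
\end{equation}
Substituting this identity turns the four-term sum into $\Lambda_1(\lambda)\bigl(t_1(\lambda+\eta)+\mathsf k_2\mathsf k_3 d(\lambda)/\mathsf k_1\bigr)$, which is exactly \eqref{T_2^1} and, after reinserting $t_1=\Lambda_1-\Lambda_2-\Lambda_3$, its second form.

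For the recursion $t_{n+1}(\lambda)=\Lambda_1(\lambda)t_n(\lambda+\eta)$ with $n\geq2$ I would argue by a bijection of fillings. When the column has height $n+1\geq3$ it cannot be filled by odd indices alone, because only the two odd values $2,3$ are available and each occurs at most once; hence every admissible filling contains at least one box of index $1$, and by the nondecreasing rule the smallest index sits in the top box, so the top box is always $v_1$. Stripping this top box (shift $(r{-}1)\eta=0$, factor $\Lambda_1(\lambda)$, even and hence sign-neutral) and shifting $\lambda\to\lambda+\eta$ sets up a sign- and shift-preserving bijection between admissible height-$(n+1)$ fillings and all admissible height-$n$ fillings; summing gives the claimed relation. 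I expect the main obstacle to be precisely the boundary case that this bijection excludes: for $n+1=2$ the all-odd filling $23$ survives, which is why $t_2$ must be handled separately and why the clean recursion only starts at $n\geq2$; getting the shift direction and the supersymmetric exclusion rule exactly right (so that $23$, and not some $v_1v_1$-type degeneracy, is the unique extra contribution) is the delicate point.

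As an independent check---useful because the tableau-sum representation implicitly relies on the Bethe equations \eqref{NBA-eq-1} and \eqref{NBA-eq-2} to cancel the apparent poles at the roots $\lambda_j,\mu_h$---I would verify the two candidate formulas directly against the characterization of Corollary \ref{Interpolation}. Both candidates manifestly carry the correct central zeros and asymptotics, and at the inhomogeneities one uses $\Lambda_2(\xi_a)=\Lambda_3(\xi_a)=0$ (since $d(\xi_a)=0$) to get $t_1(\xi_a)=\Lambda_1(\xi_a)$, whence $t_2(\xi_a)=\Lambda_1(\xi_a)t_1(\xi_a+\eta)=t_1(\xi_a)t_1(\xi_a+\eta)$ and likewise $t_{n+1}(\xi_a)=\Lambda_1(\xi_a)t_n(\xi_a+\eta)=t_1(\xi_a)t_n(\xi_a+\eta)$. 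These are the fusion relations \eqref{eq:Fusion-} in the inhomogeneities, so the uniqueness of the interpolation pins down $t_2$ and $t_{n+1}$ exactly as stated.
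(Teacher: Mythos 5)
Your primary route is genuinely different from the paper's. The paper never touches the tableau-sum representation of the fused eigenvalues: its proof is exactly what you relegate to the ``independent check'', namely verifying that the two candidate expressions satisfy the unique characterization behind Corollary \ref{Interpolation} --- the central-zero factorization, the asymptotics $T_{\infty ,n}^{(K)}=\mathsf{k}_{1}^{n-2}(\mathsf{k}_{1}-\mathsf{k}_{3})(\mathsf{k}_{1}-\mathsf{k}_{2})$, and the fusion values \eqref{Id-F+2}, \eqref{Id-F+n+1} at the inhomogeneities --- and then invoking uniqueness of the interpolation. Your combinatorial derivation (the four graded fillings $11,12,13,23$ at height two; the identity $\Lambda _{2}(\lambda )\Lambda _{3}(\lambda +\eta )=\Lambda _{1}(\lambda )\,\mathsf{k}_{2}\mathsf{k}_{3}d(\lambda )/\mathsf{k}_{1}$, which indeed holds identically from $a(\lambda )=d(\lambda +\eta )$ with no Bethe equation needed; and the top-box-stripping bijection at height $\geq 3$) is correct and has real explanatory value: it shows \emph{why} the clean recursion \eqref{T_n^1} starts only at $n\geq 2$, the all-odd filling $23$ being the unique obstruction and precisely the extra term in \eqref{T_2^1}. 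What it costs is a much stronger external input: the paper assumes only the NABA form \eqref{Bethe-A-form} of $t_{1}$ plus the Bethe equations, whereas you must import the validity (with matching shift, sign and normalization conventions) of the supersymmetric tableau sums for all column representations from \cite{Tsu97,Tsu98,Tsu98a,KazSZ08}, and then still identify them with the fusion-defined $t_{n}$ of this paper --- which is what your final verification paragraph is really for.

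That verification has one genuine gap: polynomiality. You assert that both candidates ``manifestly carry the correct central zeros and asymptotics'', but the candidates are a priori rational functions with apparent poles at the zeros $\lambda _{j}$ of $Q_{1}$, through $\Lambda _{1}(\lambda )\propto Q_{1}(\lambda -\eta )/Q_{1}(\lambda )$, and the interpolation/uniqueness argument cannot even be invoked before one knows that $t_{2}$ and $t_{n+1}$ are polynomials with the stated structure. This is exactly the step the paper labors over: the residue of $\Lambda _{1}(\lambda )(\mathsf{k}_{1}t_{1}(\lambda +\eta )+\mathsf{k}_{2}\mathsf{k}_{3}d(\lambda ))/\mathsf{k}_{1}$ at $\lambda =\lambda _{j}$ vanishes iff $t_{1}(\lambda _{j}+\eta )=-\mathsf{k}_{2}\mathsf{k}_{3}d(\lambda _{j})/\mathsf{k}_{1}$, i.e.\ iff the Bethe equation \eqref{NBA-eq-1} holds --- this is equation \eqref{Analytic-t2} of the paper --- and the analogous cancellation is needed for every $n$. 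You clearly know this (you flag that the tableau sums ``implicitly rely'' on \eqref{NBA-eq-1}, \eqref{NBA-eq-2} for pole cancellation), but the computation is never carried out, and it is the only place in the whole lemma where the Bethe equations actually enter. Add that residue computation and your argument is complete.
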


\begin{proof}
We have already observed that due to the Corollary \ref{Interpolation} the
eigenvalues of the higher transfer matrices admits the interpolation
formulae (\ref{Interp-n+}) in terms of the transfer matrix eigenvalue $%
t_{1}(\lambda )$. Equivalently, given $t_{1}(\lambda )$ an eigenvalue of the
transfer matrix then those of the higher transfer matrices are of the form%
\begin{equation}
t_{n}(\lambda )=\prod_{r=1}^{n-1}d(\lambda +r\eta )\tilde{t}_{n}(\lambda )%
\text{ \ }\forall n\geq 2,
\end{equation}%
where $\tilde{t}_{n}(\lambda )$ are degree $\mathsf{N}$ polynomials in $%
\lambda $, fixed uniquely by the recursive equations:%
\begin{eqnarray}
t_{2}(\xi _{a}) &=&t_{1}(\xi _{a})t_{1}(\xi _{a}+\eta ),  \label{Id-F+2} \\
t_{n+1}(\xi _{a}) &=&t_{1}(\xi _{a})t_{n}(\xi _{a}+\eta ),  \label{Id-F+n+1}
\end{eqnarray}%
and the known asymptotics:%
\begin{equation}
\lim_{\lambda \rightarrow \infty } \lambda^{-\mathsf{N}} t_{n}(\lambda )=T_{\infty ,n}^{(K)}=%
\mathsf{k}_{1}^{n-2}(\mathsf{k}_{1}-\mathsf{k}_{3})(\mathsf{k}_{1}-\mathsf{k}%
_{2}),\text{ \ }\forall n\geq 2.
\end{equation}%
So to prove the above Bethe Ansatz form for the higher transfer matrix
eigenvalues we have just to verify these conditions. Concerning the
asymptotic behavior, from the r.h.s. of formula (\ref{T_2^1}) and (\ref%
{T_n^1}) we get:%
\begin{eqnarray}
\mathsf{k}_{1}(\mathsf{k}_{1}-\mathsf{k}_{3}-\mathsf{k}_{2})+\mathsf{k}_{3}%
\mathsf{k}_{2} &=&(\mathsf{k}_{1}-\mathsf{k}_{3})(\mathsf{k}_{1}-\mathsf{k}%
_{2}) \\
&=&(\left( \mathrm{str}K\right) ^{2}+\left( \mathrm{str}K^{2}\right)
)/2=T_{\infty ,2}^{(K)},
\end{eqnarray}%
and%
\begin{equation}
\lim_{\lambda \rightarrow \infty } \lambda^{-\mathsf{N}} t_{n}(\lambda )=\mathsf{k}%
_{1}^{n-2}T_{\infty ,2}^{(K)}=T_{\infty ,n}^{(K)},
\end{equation}%
so they are satisfied. So we are left with the proof of the fusion
properties. It is easy to remark that by the definition of the $\Lambda
_{i}(\lambda )$ it follows that the $t_{n}(\lambda )$ indeed factorizes the
coefficients $\prod_{r=1}^{n-1}d(\lambda +r\eta )$, for $n\geq 2$. Let us
now show that%
\begin{equation}
\widetilde{t}_{2}(\lambda )=\frac{Q_{1}(\lambda -\eta )}{Q_{1}(\lambda )}(%
\mathsf{k}_{1}t_{1}(\lambda +\eta )+\mathsf{k}_{3}\mathsf{k}_{2}d(\lambda )),
\end{equation}%
is indeed a degree \textsf{$N$} polynomials in $\lambda $. This is the case
iff the residues of this expression in the zeroes of $Q_{1}(\lambda )$ are vanishing, namely iff the  following identities hold:%
\begin{equation}
t_{1}(\lambda _{j}+\eta )=-\frac{\mathsf{k}_{3}\mathsf{k}_{2}}{\mathsf{k}_{1}%
}d(\lambda _{j})\text{\ for any \ }j\in \{1,...,L\},  \label{Analytic-t2}
\end{equation}%
and this is the case thanks to the Bethe equation $\left( \ref{NBA-eq-1}%
\right) $ in $\lambda _{j}$ being:%
\begin{equation}
t_{1}(\lambda _{j}+\eta )=-\Lambda _{3}(\lambda _{j}+\eta )=-\mathsf{k}%
_{3}a(\lambda _{j})\frac{Q_{2}(\lambda _{j})}{Q_{2}(\lambda _{j}+\eta )}.
\end{equation}%
Similarly, we have that%
\begin{align}
\widetilde{t}_{n}(\lambda )& =\mathsf{k}_{1}\frac{Q_{1}(\lambda -\eta )}{%
Q_{1}(\lambda )}\widetilde{t}_{n-1}(\lambda +\eta ) \\
& =\mathsf{k}_{1}^{n-2}\frac{Q_{1}(\lambda -\eta )}{Q_{1}(\lambda +(n-3)\eta
)}\widetilde{t}_{2}(\lambda +(n-2)\eta ) \\
& =\mathsf{k}_{1}^{n-2}\frac{Q_{1}(\lambda -\eta )}{Q_{1}(\lambda +(n-2)\eta
)}(\mathsf{k}_{1}t_{1}(\lambda +\lambda +(n-1)\eta )+\mathsf{k}_{3}\mathsf{k}%
_{2}d(\lambda +(n-2)\eta )),
\end{align}%
which is a degree \textsf{$N$} polynomials in $\lambda $ due to the identity 
$\left( \ref{NBA-eq-1}\right) $.

So to show that the $t_{n}(\lambda )$ satisfy the characterization of the
higher eigenvalues, we have just to verify that their values in the
inhomogeneities agrees with $\left( \ref{Id-F+2}\right) $ and $\left( \ref%
{Id-F+n+1}\right) $. Indeed, it holds: 
\begin{align}
t_{2}(\xi _{a})& =\Lambda _{1}(\xi _{a})(\mathsf{k}_{1}t_{1}(\xi _{a}+\eta )+%
\mathsf{k}_{3}\mathsf{k}_{2}d(\xi _{a}))/\mathsf{k}_{1}  \notag \\
& =\Lambda _{1}(\xi _{a})t_{1}(\xi _{a}+\eta )  \notag \\
& =t_{1}(\xi _{a})t_{1}(\xi _{a}+\eta )
\end{align}%
where in the last line we have used the Bethe Ansatz form of $t_{1}(\lambda
) $ and similarly: 
\begin{equation}
t_{n+1}(\xi _{a})=\Lambda _{1}(\xi _{a})t_{n}(\xi _{a}+\eta )=t_{1}(\xi
_{a})t_{n}(\xi _{a}+\eta ).
\end{equation}
\end{proof}

Here we have explicitly rewritten the eigenvalues form in Bethe Ansatz
approach for the higher transfer matrix $T_{n}^{(\hat{K})}(\lambda )$, by
using the fusion we can easily derive those of the others. Now we are
interested in showing that these expressions for the higher eigenvalues
indeed imply the null out-boundary $(\ref{Null-Boundary-gl1|2})$ and the
inner-boundary $(\ref{Perodicity-eigen})$ conditions. Indeed, we have the
following lemma:

\begin{lemma}
Let us take a \textit{Bethe equation solution and associate to it the} $%
t_{1}(\lambda )$ of the form $\left( \ref{Bethe-A-form}\right) $, then the
higher functions $t_{n}^{\left( m\right) }(\lambda )$ generated from $%
t_{1}(\lambda )$ by the fusion equations, i.e. by using $\left( \ref%
{Det-Formula-1}\right) ,$ $\left( \ref{Det-Formula-2}\right) ,$ $\left( \ref%
{T-Func-form-m}\right) $ and $\left( \ref{T-Func-form-m-}\right) $, satisfy
the null out-boundary condition $(\ref{Null-Boundary-gl1|2})$ and the
inner-boundary $(\ref{Perodicity-eigen})$.
\end{lemma}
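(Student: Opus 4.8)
The plan is to treat the two conditions separately, since they have very different flavours. The \emph{inner-boundary} condition \eqref{Perodicity-eigen} turns out to be essentially a restatement of the recursion \eqref{T_n^1} at $n=2$ already proved in the previous lemma, whereas the \emph{null out-boundary} conditions \eqref{Null-Boundary-gl1|2} will follow from the Bazhanov--Reshetikhin determinant formulae \eqref{Det-Formula-1}--\eqref{Det-Formula-2} together with the fusion relation \eqref{Fusion}. The latter is satisfied identically by the functions $t_b^{(a)}(\lambda)$: being determinants of the single column sequence $\{t_n(\lambda)\}$, they obey \eqref{Fusion} as a Desnanot--Jacobi identity, independently of the value fed in as $t_1(\lambda)$ (with the boundary conventions $t_0\equiv 1$). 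For the inner-boundary condition, I would first insert the fusion expression $t_2^{(2)}(\lambda+\eta)=t_2(\lambda)t_2(\lambda+\eta)-t_1(\lambda+\eta)t_3(\lambda)$ to recast \eqref{Perodicity-eigen} as the closure relation \eqref{Closure-gl(1,2)}. Using the explicit form \eqref{T_2^1} one reads off $\mathsf{k}_1 t_1(\lambda+\eta)+\mathsf{k}_3\mathsf{k}_2 d(\lambda)=\mathsf{k}_1 t_2(\lambda)/\Lambda_1(\lambda)$; substituting this and cancelling the common factor $\mathsf{k}_1 t_2(\lambda)$, the closure relation collapses to $t_3(\lambda)=\Lambda_1(\lambda)t_2(\lambda+\eta)$, which is exactly \eqref{T_n^1} for $n=2$ and is therefore already established (the degenerate case $\mathsf{k}_1=0$ being covered by the separate non-invertible analysis).

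For the null out-boundary conditions I must show $t_b^{(a)}\equiv 0$ for every $a\geq 2$, $b\geq 3$, and I would proceed in two steps according to the shape of the diagram. In the \emph{strictly tall} region $b\geq a+1$ I use \eqref{Det-Formula-1}: every entry $t_{b+i-j}(\lambda-(i-1)\eta)$ then has index $b+i-j\geq b-a+1\geq 2$, so the recursion \eqref{T_n^1} applies to each of them. Iterating it gives, for each entry, a factorization $t_{b+i-j}(\lambda-(i-1)\eta)=R_i(\lambda)\,C_j(\lambda)\,t_2\!\left(\lambda+(b-j-1)\eta\right)$, where $R_i$ is a product of $\Lambda_1$-factors depending only on the row $i$ and $C_j$ one depending only on the column $j$ (the crucial point being that the argument of $t_2$ depends on $j$ alone). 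The matrix is thus of rank at most one, so its $a\times a$ determinant vanishes for $a\geq 2$. This disposes in particular of the whole line $a=2$, $b\geq 3$ and, more generally, of every diagram with $b\geq a+1$.

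Then I would fill the remaining \emph{diagonal and wide} region $3\leq b\leq a$ by induction on the offset $c=a-b\geq 0$, using \eqref{Fusion} in its quadratic ``product-equals-zero'' form rather than solving for a neighbour. Centred at $(a,b)=(b+c,b)$, the two terms $t_{b+1}^{(a)}$ and $t_b^{(a-1)}$ each carry offset $c-1$: for $c=0$ they lie in the strictly tall region and vanish by the first step, while for $c\geq 1$ they vanish by the induction hypothesis. Hence both products on the right-hand side of \eqref{Fusion} are annihilated, giving $t_b^{(a)}(\lambda-\eta)\,t_b^{(a)}(\lambda)\equiv 0$, which forces $t_b^{(a)}\equiv 0$ because the $t_b^{(a)}$ are polynomials and the polynomial ring is a domain.

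The main obstacle is precisely this wide/diagonal region, and it is instructive to see why the obvious approaches fail. One cannot propagate into the quadrant by solving \eqref{Fusion} for $t_b^{(a+1)}$ and dividing, because the natural divisor $t_b^{(a-1)}$ (or $t_b^{(a-2)}$ one column further back) itself vanishes inside the forbidden quadrant, producing a $0/0$ indeterminacy; nor can one exploit a rank-one collapse of \eqref{Det-Formula-2}, since the row/antisymmetric tower of $gl_{1|2}$ does not obey a first-order recursion. The device that circumvents this is to organise the induction by the offset $c$ so that both surviving products of \eqref{Fusion} are killed \emph{simultaneously} by strictly-lower-offset data, never dividing by anything. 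The only remaining bookkeeping is to confirm that the ranges of indices occurring in \eqref{Det-Formula-1} for the diagrams actually used stay within the values for which the conventions are consistent, so that no ill-defined negative-index column function is ever invoked.
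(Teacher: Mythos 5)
Your proof is correct, and it splits into a part that mirrors the paper and a part that genuinely improves on it. For the inner-boundary condition you run the paper's own computation in reverse: the paper extracts $\Lambda_1(\lambda)$ from \eqref{T_2^1}, inserts it into $t_3(\lambda)=\Lambda_1(\lambda)t_2(\lambda+\eta)$ to get the closure relation \eqref{Closure-gl(1,2)}, and converts that into \eqref{Perodicity-eigen} via $t_2^{(2)}(\lambda+\eta)=t_2(\lambda)t_2(\lambda+\eta)-t_1(\lambda+\eta)t_3(\lambda)$; your reduction of \eqref{Perodicity-eigen} to \eqref{T_n^1} at $n=2$ uses exactly the same three ingredients, so this half is essentially identical (and shares the same implicit genericity assumptions $\mathsf{k}_1\neq 0$, $t_2\not\equiv 0$ that legitimize the cancellations). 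For the null out-boundary conditions, however, the paper only proves the column $a=2$ explicitly: it derives $t_{2+n}(\lambda)t_{2+n}(\lambda+\eta)=t_{3+n}(\lambda)t_{1+n}(\lambda+\eta)$ from \eqref{T_n^1}, reads off $t_{3+n}^{(2)}=0$ from \eqref{Fusion}, and then asserts that the remaining conditions follow ``similarly''. Your two-step argument is what actually backs up that claim: the factorization $t_{b+i-j}(\lambda-(i-1)\eta)=R_i(\lambda)\,C_j(\lambda)\,t_2(\lambda+(b-j-1)\eta)$ makes the matrix in \eqref{Det-Formula-1} rank one, killing all determinants in the tall region $b\geq a+1$ (the paper's $a=2$ identity is precisely the $2\times 2$ instance of this collapse), and your induction on the offset $c=a-b$, in which both bilinear products on the right-hand side of \eqref{Fusion} are annihilated by strictly-lower-offset data so that no division is ever performed, propagates the vanishing through the wide region $3\leq b\leq a$. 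Your diagnosis of why this detour is needed is also accurate: a naive column-by-column use of \eqref{Fusion} stalls at $a\geq 3$, since the coefficient $t_b^{(a-1)}$ of the unknown $t_b^{(a+1)}$ is itself zero inside the forbidden quadrant and fusion then yields only $0=0$, which shows that the paper's ``similarly'' is not a routine repetition of its $a=2$ step. What your longer route buys is a complete and division-free proof over the whole quadrant $a\geq 2$, $b\geq 3$; what the paper's route buys is brevity, at the cost of leaving precisely this propagation argument implicit.
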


\begin{proof}
By using the result of the previous lemma it is easy to show the following
null conditions are satisfied:%
\begin{equation}
t_{3+n}^{(2)}(\lambda )=0,\text{ \ }\forall n\geq 0,
\end{equation}%
indeed, the condition (\ref{T_n^1}) implies: 
\begin{equation}
\Lambda _{1}(\lambda )=t_{3+n}(\lambda )/t_{2+n}(\lambda +\eta),\quad\forall
n\geq 0,
\end{equation}%
so that, in particular, it holds:%
\begin{equation}
t_{3+n}(\lambda )=(t_{2+n}(\lambda )/t_{1+n}(\lambda +\eta
))(t_{2+n}(\lambda +\eta )),\quad\forall n\geq 1,
\end{equation}%
or equivalently:%
\begin{equation}
t_{2+n}(\lambda )t_{2+n}(\lambda +\eta )=t_{3+n}(\lambda )t_{1+n}(\lambda
+\eta ),\quad \forall n\geq 1,
\end{equation}%
which by the fusion equations implies the above null conditions. Similarly,
we can derive all the others null out-boundary conditions $(\ref%
{Null-Boundary-gl1|2})$.

Let us now show the \textit{inner-boundary} condition, from the formula (\ref%
{T_2^1}) we can write:%
\begin{equation}
\Lambda _{1}(\lambda )=\frac{t_{2}(\lambda )}{t_{1}(\lambda +\eta )+\mathsf{k%
}_{3}\mathsf{k}_{2}d(\lambda )/\mathsf{k}_{1}},
\end{equation}%
and so%
\begin{equation}
t_{3}(\lambda )=\frac{\mathsf{k}_{1}t_{2}(\lambda )t_{2}(\lambda +\eta )}{%
\mathsf{k}_{1}t_{1}(\lambda +\eta )+\mathsf{k}_{3}\mathsf{k}_{2}d(\lambda )},
\end{equation}%
which is equivalent to our closure relation:%
\begin{equation}
(\mathsf{k}_{1}t_{1}(\lambda +\eta )+\mathsf{k}_{3}\mathsf{k}_{2}d(\lambda
))t_{3}(\lambda )=\mathsf{k}_{1}t_{2}(\lambda )t_{2}(\lambda +\eta ),
\end{equation}%
and taking into account the fusion equation:%
\begin{equation}
t_{2}^{(2)}(\lambda +\eta )=t_{2}^{(1)}(\lambda )t_{2}^{(1)}(\lambda +\eta
)-t_{1}(\lambda +\eta )t_{3}^{(1)}(\lambda ),
\end{equation}%
we are led to the required identity:%
\begin{equation}
\mathsf{k}_{3}\mathsf{k}_{2}d(\lambda )t_{3}^{(1)}(\lambda )=\mathsf{k}%
_{1}t_{2}^{(2)}(\lambda +\eta ).
\end{equation}
\end{proof}
It should be noted that all these relations can be proven in a pure algebraic way using the general constructions of $T$ and $Q$ operators and the various relations they satisfy, as given in \cite{KazV08,BazT08,Tsu09,KazLT12}. Hence the computations presented here, although quite instructive could be considered merely as consistency checks.

\subsection{On the relation between SoV and Nested Algebraic Bethe Ansatz}

Let us consider the fundamental representation of the $gl_{1|2}$ Yang-Baxter
superalgebra associated to generic values of the inhomogeneities $\{\xi
_{a\leq \mathsf{N}}\}$, satisfying the condition $(\ref{Inhomog-cond})$, and
of the eigenvalues $\mathsf{k}_{1}$, $\mathsf{k}_{2}$\ and $\mathsf{k}_{3}$
of a simple and diagonalizable twist matrix $K$. Note that in this case the
transfer matrix is similar to the transfer matrix associated to a diagonal
twist with entries the eigenvalues $\mathsf{k}_{1}$, $\mathsf{k}_{2}$\ and $%
\mathsf{k}_{3}$ of $K$ to which Nested Algebraic Bethe Ansatz (NABA)
directly applies. Therefore, the following discussion on the connection between the SoV
description and the NABA can be directly addressed in this diagonal case.

Let us recall that in the NABA framework, given a solution $\{\{\lambda
_{j\leq L}\},\{\mu _{h\leq M}\}\}$ of the Bethe Ansatz equations $\left( \ref%
{NBA-eq-1}\right) $ and $\left( \ref{NBA-eq-2}\right)$ satisfying the
pair-wise distinct conditions $(\ref{Non-deg-B-root})$, then the associated
Bethe Ansatz vector $|t_{\{\lambda _{j\leq L}\},\{\mu _{h\leq M}\}}\rangle $
is proven to satisfy the identity%
\begin{equation}
T{}_{1}^{(K)}(\lambda )|t_{\{\lambda _{j\leq L}\},\{\mu _{h\leq
M}\}}^{(NABA)}\rangle =|t_{\{\lambda _{j\leq L}\},\{\mu _{h\leq
M}\}}^{(NABA)}\rangle t_{1}(\lambda |\{\lambda _{j\leq L}\},\{\mu _{h\leq
M}\}),  \label{EigenV-NBA}
\end{equation}%
with $t_{1}(\lambda |\{\lambda _{j\leq L}\},\{\mu _{h\leq M}\})$ defined in $%
(\ref{Bethe-A-form-0})$, so that it is a transfer matrix eigenvector as soon
as it is proven to be nonzero. Then, such a Bethe Ansatz vector has in our
SoV basis the following characterization:%
\begin{equation}
\langle h_{1},...,h_{\mathsf{N}}|t_{\{\lambda _{j\leq L}\},\{\mu _{h\leq
M}\}}^{(NABA)}\rangle =\prod_{n=1}^{\mathsf{N}}t_{1}^{h_{n}}(\xi
_{n}|\{\lambda _{j\leq L}\},\{\mu _{h\leq M}\})\langle S|t_{\{\lambda
_{j\leq L}\},\{\mu _{h\leq M}\}}^{(NABA)}\rangle ,
\end{equation}%
for any $h_{n}\in \{0,1,2\}$ and $n\in \{0,...,\mathsf{N}\}$. Note that also
in the SoV basis the condition that this Bethe vector is nonzero still
remains to be verified. This is the case even for the special
representations considered in subsection \ref{N-InV}. Indeed, we have shown
that the specific set of solutions to the Bethe Ansatz equations $\left( \ref%
{NBA-eq-1}\right) $ and $\left( \ref{NBA-eq-2}\right) $ introduced in
subsection \,\ref{Completeness-BA}\, is complete and the associated
eigenvalues $t_{1}(\lambda |\{\lambda _{j\leq L}\},\{\mu _{h\leq M}\})$ and
eigenvectors $|t_{\{\lambda _{j\leq L}\},\{\mu _{h\leq M}\}}^{(SoV)}\rangle $
have the form $(\ref{Bethe-A-form})$ and $(\ref{SoV-Wave.function})$, i.e. 
\begin{equation}
\langle h_{1},...,h_{\mathsf{N}}|t_{\{\lambda _{j\leq L}\},\{\mu _{h\leq
M}\}}^{(SoV)}\rangle =\prod_{n=1}^{\mathsf{N}}t_{1}^{h_{n}}(\xi
_{n}|\{\lambda _{j\leq L}\},\{\mu _{h\leq M}\}),
\end{equation}
eigenvectors known to be nonzero by the characterization of the transfer
matrix eigenvalues for which there exists at least one $\mathsf{N}$-uplet $%
h_{1},...,h_{\mathsf{N}}$ leading to a nonzero value of the above SoV
wave-function. Nevertheless, this a priori does not allow us to rule out the
possibility that: 
\begin{equation}
|t_{\{\lambda _{j\leq L}\},\{\mu _{h\leq M}\}}^{(NABA)}\rangle =0,
\end{equation}
as we have still to verify that $\langle S|t_{\{\lambda _{j\leq L}\},\{\mu
_{h\leq M}\}}^{(NABA)}\rangle $ is nonzero.

Relying on some already existing results in the literature, we want to
present a reasoning that allows to argue that the completeness of
the Bethe Ansatz in the SoV framework, for the special representations of
subsection \ref{N-InV}, indeed implies the completeness for the NABA
spectrum description. The reasoning goes as follows. In \cite%
{MaiN18,MaiN19} we have shown in general that the SoV characterization of
the transfer matrix eigenvectors allows for an Algebraic Bethe Ansatz
rewriting on a well defined reference state, see for example section 5 of 
\cite{MaiN19}. Adapting to the current fundamental $gl_{3}$-representation
associated the twist matrix $K^{\prime }=-\hat{K}$ and $\eta ^{\prime
}=-\eta $ the analysis of \cite{RyaV18}, it can be argued\footnote{%
Note that we have proven this statement for a chain with a small number of
quantum sites in \cite{MaiN18}.} that the $\mathbb{B}$-operator defined in
our SoV basis indeed coincides with the one defined by Sklyanin \cite{Skl96}%
. Then by adapting the results presented in \cite{LiaS18}, one can
deduce that these SoV eigenvectors rewritten in an Algebraic Bethe Ansatz
form, in terms of the Sklyanin $B$-operator, in turn coincides (up to
nonzero normalization) with Nested Algebraic Bethe Ansatz vectors,
associated to the same Bethe Ansatz solutions. If implemented with all
details this reasoning shows the completeness of the Nested Algebraic Bethe
Ansatz as a consequence of the completeness of the SoV characterization
derived in subsection \ref{Completeness-BA} for the fundamental
representations associated to non-invertible but simple spectrum $\hat{K}$ twist
matrices, with eigenvalues satisfying $(\ref{Twist-eigenvalue-cond}) $.

It is also worth to comment that once the NABA completeness is derived for
these special representations, it can be derived for general $gl_{1|2}$%
-representations by adapting to them the proof given in \cite{Tar94} for the 
$gl_{2}$ fundamental representations associated to general diagonalizable
twist. Indeed, one of the main ideas of the proof in \cite{Tar94} is that
for a special value of the twist parameter, one can characterize the set of
isolated Bethe Ansatz solutions that produce nonzero Bethe vectors, and
which is proven to be complete. Then, the results on the completeness of
Bethe Ansatz solutions by the SoV approach, derived in subsection \ref%
{Completeness-BA}, and the above argument on the NABA completeness for these 
$gl_{1|2}$-representations associated to the twist matrices $\hat{K}$ can be
as well the starting point for the proof of completeness by deformation
w.r.t. the twist parameters like in \cite{Tar94}. Finally, let us add that relations with \cite{MukHL18} would be interesting to explore.
\section{Verification of the Conjecture for the general twists up to 3 sites}\label{Ver-Num}
Here, we make a verification of our conjecture on the form of the closure
relations for the general twisted representation of the $gl_{\mathcal{M}|%
\mathcal{N}}$ Yang-Baxter superalgebra, in the case $\mathcal{M}=1$ and $%
\mathcal{N}=2$ for small chain representations, i.e. for a chain having up to $\mathsf{N}=3$
sites. The verification is done in the following way, we impose the
closure relation (\ref{Closure-gl(1,2)}) in $\mathsf{N}$ pairwise different
values\footnote{%
Note that any value can be taken if different from the transfer matrix
common zeros.} of $\lambda$ to the polynomials (\ref{Interp-1}) and (\ref%
{Interp-n+}) for $n=1,2$. This determines a system of $\mathsf{N}$
polynomial equations of order 4 in the $\mathsf{N}$ unknown which are the
values of the polynomial (\ref{Interp-1}) in the inhomogeneities. We solve
this system of equations by Mathematica and we select the solutions which
generate polynomials (\ref{Interp-n+}) which satisfy the null out-boundary
conditions \ref{Null-Boundary-gl1|2}. Our analysis shows that it is enough
to impose \ref{Null-Boundary-gl1|2} for $n=m=0$ to select the correct
solutions which generate exactly the $\mathsf{N}^3$ different eigenvalues of the
diagonalizable and simple spectrum transfer matrix $T^{(K)}(\lambda)$,
obtained by diagonalizing it exactly with Mathematica. For $\mathsf{N}=1,2$
the results of both the approaches are analytic and we present them here for
the interesting $\mathsf{N}=2$ case. While for $\mathsf{N}=3$ we have verified
our statements for different values of the parameters, i.e. the
inhomogeneity parameters and the three eigenvalues of the twist matrix.

We put $\xi _{1}=0$ without loss of generality to shorten the expressions
while leaving free all the others parameters $\xi _{2},k_{1},k_{2},k_{3}$
and $\eta $. Then the solution of the system of equations obtained by \eqref%
{Closure-gl(1,2)} plus the null out-boundary conditions \eqref%
{Null-Boundary-gl1|2} for $n=m=0$ leads to the following $2^{3}$ distinct
solutions for the values of the polynomial (\ref{Interp-1}) respectively in $%
\lambda =\xi _{2}$ and $\lambda =\xi _{1}=0$: 

\small
\begin{align}
\{ k_{1}\eta (\eta +\xi _{2})&, \ k_{1}\eta (\eta -\xi _{2}) \} \;, \\
\{k_{2}\eta (\eta -\xi _{2})&, \ -k_{2}\eta (\eta +\xi _{2}) \} \;, \\
\{k_{3}\eta (\eta -\xi _{2})&, \ -k_{3}\eta (\eta +\xi _{2}) \} \;, \\
\Bigg\{\frac{\eta }{2}\left( (k_{1}+k_{2})\xi _{2}-\sqrt{4k_{1}k_{2}\eta
^{2}+(k_{1}-k_{2})^{2}\xi _{2}^{2}}\right) &, \ \frac{-\eta }{2}\left(
(k_{1}+k_{2})\xi _{2}+\sqrt{4k_{1}k_{2}\eta ^{2}+(k_{1}-k_{2})^{2}\xi
_{2}^{2}}\right) \Bigg\} \;, \\
\Bigg\{ \frac{\eta }{2}\left( (k_{1}+k_{2})\xi _{2}+\sqrt{4k_{1}k_{2}\eta
^{2}+(k_{1}-k_{2})^{2}\xi _{2}^{2}}\right) &, \ \frac{-\eta }{2}\left(
(k_{1}+k_{2})\xi _{2}-\sqrt{4k_{1}k_{2}\eta ^{2}+(k_{1}-k_{2})^{2}\xi
_{2}^{2}}\right) \Bigg\} \;, \\
\Bigg\{ \frac{\eta }{2}\left( (k_{1}+k_{3})\xi _{2}-\sqrt{4k_{1}k_{3}\eta
^{2}+(k_{1}-k_{3})^{2}\xi _{2}^{2}}\right) &, \ \frac{-\eta }{2}\left(
(k_{1}+k_{3})\xi _{2}+\sqrt{4k_{1}k_{3}\eta ^{2}+(k_{1}-k_{3})^{2}\xi
_{2}^{2}}\right) \Bigg\} \;, \\
\Bigg\{ \frac{\eta }{2}\left( (k_{1}+k_{3})\xi _{2}+\sqrt{4k_{1}k_{3}\eta
^{2}+(k_{1}-k_{3})^{2}\xi _{2}^{2}}\right) &, \ \frac{-\eta }{2}\left(
(k_{1}+k_{3})\xi _{2}-\sqrt{4k_{1}k_{3}\eta ^{2}+(k_{1}-k_{3})^{2}\xi
_{2}^{2}}\right) \Bigg\} \;, \\
\Bigg\{\frac{\eta }{2}\left( (k_{2}+k_{3})\xi _{2}-\sqrt{4k_{2}k_{3}\eta
^{2}+(k_{2}-k_{3})^{2}\xi _{2}^{2}}\right) &, \ \frac{-\eta }{2}\left(
(k_{2}+k_{3})\xi _{2}+\sqrt{4k_{2}k_{3}\eta ^{2}+(k_{2}-k_{3})^{2}\xi
_{2}^{2}}\right) \Bigg\} \;, \\
\Bigg\{\frac{\eta }{2}\left( (k_{2}+k_{3})\xi _{2}-\sqrt{4k_{2}k_{3}\eta
^{2}+(k_{2}-k_{3})^{2}\xi _{2}^{2}}\right) &, \ \frac{-\eta }{2}\left(
(k_{2}+k_{3})\xi _{2}-\sqrt{4k_{2}k_{3}\eta ^{2}+(k_{2}-k_{3})^{2}\xi
_{2}^{2}}\right) \Bigg\}  \;.
\end{align}%
\normalsize

The values at the points $\xi_2$ and $\xi_1=0$ and the asymptotic limit allows to reconstruct the polynomials \eqref{Interp-1}.
The polynomials constructed in this way can be directly
verified to coincide with the eigenvalues of $T^{(K)}(\lambda )$, whose expressions are obtained by diagonalizing $T(\lambda)$ exactly with Mathematica: 

\small
\begin{align}
& (\mathrm{str}K)\lambda ^{2}+\big(2\eta k_{1}-(\mathrm{str}K)\xi _{2}\big)%
\lambda +k_{1}\eta (\eta -\xi _{2}), \\
& (\mathrm{str}K)\lambda ^{2}+\big(2\eta k_{2}-(\mathrm{str}K)\xi _{2}\big)%
\lambda -k_{2}\eta (\eta +\xi _{2}), \\
& (\mathrm{str}K)\lambda ^{2}+\big(2\eta k_{3}-(\mathrm{str}K)\xi _{2}\big)%
\lambda -k_{3}\eta (\eta +\xi _{2}), \\
& (\mathrm{str}K)\lambda ^{2}+\big((k_{1}+k_{2})\eta -(\mathrm{str}K)\xi _{2}%
\big)\lambda +\frac{\eta }{2}\left( -(k_{1}+k_{2})\xi _{2}-\sqrt{%
4k_{1}k_{2}\eta ^{2}+(k_{1}-k_{2})^{2}\xi _{2}^{2}}\right) , \\
& (\mathrm{str}K)\lambda ^{2}+\big((k_{1}+k_{2})\eta -(\mathrm{str}K)\xi _{2}%
\big)\lambda +\frac{\eta }{2}\left( -(k_{1}+k_{2})\xi _{2}+\sqrt{%
4k_{1}k_{2}\eta ^{2}+(k_{1}-k_{2})^{2}\xi _{2}^{2}}\right) , \\
& (\mathrm{str}K)\lambda ^{2}+\big((k_{1}+k_{3})\eta -(\mathrm{str}K)\xi _{2}%
\big)\lambda +\frac{\eta }{2}\left( -(k_{1}+k_{3})\xi _{2}-\sqrt{%
4k_{1}k_{3}\eta ^{2}+(k_{1}-k_{3})^{2}\xi _{2}^{2}}\right) , \\
& (\mathrm{str}K)\lambda ^{2}+\big((k_{1}+k_{3})\eta -(\mathrm{str}K)\xi _{2}%
\big)\lambda +\frac{\eta }{2}\left( -(k_{1}+k_{3})\xi _{2}+\sqrt{%
4k_{1}k_{3}\eta ^{2}+(k_{1}-k_{3})^{2}\xi _{2}^{2}}\right) , \\
& (\mathrm{str}K)\lambda ^{2}+\big((k_{2}+k_{3})\eta -(\mathrm{str}K)\xi _{2}%
\big)\lambda +\frac{\eta }{2}\left( -(k_{2}+k_{3})\xi _{2}-\sqrt{%
4k_{2}k_{3}\eta ^{2}+(k_{2}-k_{3})^{2}\xi _{2}^{2}}\right) , \\
& (\mathrm{str}K)\lambda ^{2}+\big((k_{2}+k_{3})\eta -(\mathrm{str}K)\xi _{2}%
\big)\lambda +\frac{\eta }{2}\left( -(k_{2}+k_{3})\xi _{2}+\sqrt{%
4k_{2}k_{3}\eta ^{2}+(k_{2}-k_{3})^{2}\xi _{2}^{2}}\right) .
\end{align}%
\normalsize

\section{Derivation of the inner-boundary condition}

\label{app:coderivative_proof}One may use the coderivative formalism
introduced in \cite{KazV08} and developed in \cite%
{KazLT12} to derive the inner-boundary condition. The coderivative
formalism allows to construct the transfer matrices associated to a given
irreducible representation on the auxiliary space by acting on the
associated character evaluated at the twist matrix. For a rectangular Young
tableau $(a,b)$, we have in our notation 
\begin{equation}
\widetilde{T}_{b}^{(a),K}(\lambda )=\left( \lambda -\xi _{1}+\eta \hat{D}%
\right) \otimes \ldots \otimes \left( \lambda -\xi _{\mathsf{N}}+\eta \hat{D}%
\right) \chi _{b}^{(a)}(K).
\end{equation}

Let us take $g=\text{diag}(x_1,\ldots,x_\mathcal{M},y_1,\ldots,y_\mathcal{N}%
) $ a diagonal twist. For $k\geq 1$, the characters of the rectangular
representations $(a,b)$ which saturate an arm of the fat hook write \cite%
{Zab08} 
\begin{align}
\chi^{(\mathcal{M})}_{\mathcal{N}+k}(g) &= \left(\prod_{i=1}^\mathcal{M}
x_i^k\right)\prod_{i=1}^\mathcal{M}\prod_{j=1}^\mathcal{N}(x_i-y_j), \\
\chi^{(\mathcal{M}+k)}_{\mathcal{N}}(g) &= \left(\prod_{j=1}^\mathcal{N}
(-y_j)^k\right) \prod_{i=1}^\mathcal{M}\prod_{j=1}^{\mathcal{N}} (x_i-y_j),
\end{align}
thus the following relation holds for all $k\geq 1$ 
\begin{equation}  \label{eq:character_relation}
\chi^{(\mathcal{M})}_{\mathcal{N}+k}(g) = (-1)^{k\mathcal{N}} \text{sdet}%
(g)^k \chi^{(\mathcal{M}+k)}_{\mathcal{N}}(g),
\end{equation}
where $\text{sdet}(g)$ is the superdeterminant of $g$ defined by 
\begin{equation}
\text{sdet}(g) = \frac{\prod_{i=1}^\mathcal{M}x_i}{\prod_{j=1}^\mathcal{N}y_j%
}.
\end{equation}
Acting on it with the coderivative $\hat{D}$, we have 
\begin{align}
\hat{D}\, \chi^{(\mathcal{M})}_{\mathcal{N}+k}(g) &= (-1)^{k\mathcal{N}}
e_{ij}\frac{\partial}{\partial \phi^j_i}\otimes \left.\Big(\text{sdet}%
(e^{\phi\cdot e}g)^k \,\chi^{(\mathcal{M}+k)}_{\mathcal{N}}(e^{\phi\cdot e}g)%
\Big) \right\vert_{\phi=0} \\
&= (-1)^{k\mathcal{N}} \text{sdet}(g)^k e_{ij}\frac{\partial}{\partial
\phi^j_i}\otimes \left. \Big( (1+k\ \text{str}(\phi\cdot e))\,\chi^{(\mathcal{M%
}+k)}_{\mathcal{N}}(e^{\phi\cdot e}g)\Big) \right\vert_{\phi=0} \\
&= (-1)^{k\mathcal{N}} \text{sdet}(g)^k \left( \hat{D} \chi^{(\mathcal{M}%
+k)}_{\mathcal{N}} + k\,e_{ij}\frac{\partial}{\partial \phi^j_i}\otimes\left.%
\Big(\text{str}(\phi\cdot e)\, \chi^{(\mathcal{M}+k)}_{\mathcal{N}}(g)\Big)%
\right\vert_{\phi=0}\right) \\
&= (-1)^{k \mathcal{N}} \text{sdet}(g)^k (k+\hat{D})\chi^{(\mathcal{M}+k)}_{%
\mathcal{N}}(g).
\end{align}
Now, acting with $\left(\lambda-\xi_1+\eta\hat{D}\right)\otimes\ldots\otimes%
\left(\lambda-\xi_\mathsf{N}+\eta\hat{D}\right)$ on %
\eqref{eq:character_relation}, 
and putting $g=K$, we thus have 
\begin{equation}
\widetilde{T}^{(M),(K)}_{\mathcal{N}+k}(\lambda) = (-1)^{k\mathcal{N}} \text{%
sdet}(K)^k \ \widetilde{T}^{(\mathcal{M}+k),(K)}_\mathcal{N}(\lambda+k\eta).
\end{equation}
Putting $k=1$, and reintroducing the trivial zeros to recover the $%
T^{(a),(K)}_b(\lambda)$ matrices, we obtain \eqref{Closure-gl(m,n)}.

{\small{\small

\begin{thebibliography}{100}
\providecommand{\url}[1]{\texttt{#1}}
\providecommand{\urlprefix}{URL }
\expandafter\ifx\csname urlstyle\endcsname\relax
  \providecommand{\doi}[1]{doi:\discretionary{}{}{}#1}\else
  \providecommand{\doi}{doi:\discretionary{}{}{}\begingroup
  \urlstyle{rm}\Url}\fi
\providecommand{\eprint}[2][]{\url{#2}}

\bibitem{MaiN18}
J.~M. Maillet and G.~Niccoli,
\newblock \emph{On quantum separation of variables},
\newblock Journal of Mathematical Physics \textbf{59}(9), 091417 (2018),
\newblock \doi{10.1063/1.5050989}.

\bibitem{KulS80}
P.~P. Kulish and E.~K. Sklyanin,
\newblock \emph{Solutions of the {Yang-Baxter} equation},
\newblock Journal of Soviet Mathematics \textbf{19}(5), 1596 (1982),
\newblock \doi{10.1007/bf01091463}.

\bibitem{Kul85}
P.~P. Kulish,
\newblock \emph{Integrable graded magnets},
\newblock Journal of Soviet Mathematics \textbf{35}(4), 2648 (1986),
\newblock \doi{10.1007/bf01083770}.

\bibitem{kulish1996yang}
P.~Kulish,
\newblock \emph{{Yang-Baxter equation and reflection equations in integrable
  models}},
\newblock In \emph{Low-dimensional models in statistical physics and quantum
  field theory}, pp. 125--144. Springer (1996).

\bibitem{Hub63}
J.~Hubbard,
\newblock \emph{Electron correlations in narrow energy bands},
\newblock Proceedings of the Royal Society of London. Series A. Mathematical
  and Physical Sciences \textbf{276}(1365), 238 (1963),
\newblock \doi{10.1098/rspa.1963.0204}.

\bibitem{Hub64}
J.~Hubbard,
\newblock \emph{Electron correlations in narrow energy bands {III}. an improved
  solution},
\newblock Proceedings of the Royal Society of London. Series A. Mathematical
  and Physical Sciences \textbf{281}(1386), 401 (1964),
\newblock \doi{10.1098/rspa.1964.0190}.

\bibitem{Gut63}
M.~C. Gutzwiller,
\newblock \emph{Effect of correlation on the ferromagnetism of transition
  metals},
\newblock Physical Review Letters \textbf{10}(5), 159 (1963),
\newblock \doi{10.1103/physrevlett.10.159}.

\bibitem{Kan63}
J.~Kanamori,
\newblock \emph{Electron correlation and ferromagnetism of transition metals},
\newblock Progress of Theoretical Physics \textbf{30}(3), 275 (1963),
\newblock \doi{10.1143/ptp.30.275}.

\bibitem{Hub-book-1995}
D.~Baeriswyl, D.~Campbell, J.~M.~P.~Carmelo, F.~Guinea and E.~Louis,
\newblock \emph{The {Hubbard} Model: Its Physics and Mathematical Physics},
\newblock Springer, Boston, MA,
\newblock ISBN 978-1-4899-1044-8,
\newblock \doi{10.1007/978-1-4899-1042-4} (1995).

\bibitem{Hub-book}
F.~H. Essler, H.~Frahm, F.~G{\"o}hmann, A.~Kl{\"u}mper and V.~E. Korepin,
\newblock \emph{The one-dimensional {Hubbard} model},
\newblock Cambridge University Press,
\newblock \doi{10.1017/CBO9780511534843} (2005).

\bibitem{FadS78}
E.~K. Sklyanin and L.~D. Faddeev,
\newblock \emph{Quantum-mechanical approach to completely integrable field
  theory models},
\newblock In \emph{Fifty Years of Mathematical Physics}, pp. 290--292. {WORLD}
  {SCIENTIFIC},
\newblock \doi{10.1142/9789814340960_0025} (2016).

\bibitem{FadST79}
E.~K. Sklyanin, L.~A. Takhtadzhyan and L.~D. Faddeev,
\newblock \emph{Quantum inverse problem method. i},
\newblock Theoretical and Mathematical Physics \textbf{40}(2), 688 (1979),
\newblock \doi{10.1007/bf01018718}.

\bibitem{FadT79}
L.~A. Takhtadzhan and L.~D. Faddeev,
\newblock \emph{The quantum method of the inverse problem and the {H}eisenberg
  {XYZ} model},
\newblock Russian Mathematical Surveys \textbf{34}(5), 11 (1979),
\newblock \doi{10.1070/rm1979v034n05abeh003909}.

\bibitem{Skl79}
E.~Sklyanin,
\newblock \emph{Method of the inverse scattering problem and the nonlinear
  quantum {S}chroedinger equation},
\newblock Sov. Phys. Dokl. \textbf{24}, 107 (1979).

\bibitem{Skl79a}
E.~Sklyanin,
\newblock \emph{On complete integrability of the {Landau-Lifshitz} equation},
\newblock Tech. rep., LOMI (1979).

\bibitem{FadT81}
L.~D. Faddeev and L.~A. Takhtajan,
\newblock \emph{Quantum inverse scattering method},
\newblock Sov. Sci. Rev. Math  (1981).

\bibitem{Skl82}
E.~K. Sklyanin,
\newblock \emph{Quantum version of the method of inverse scattering problem},
\newblock Journal of Soviet Mathematics \textbf{19}(5), 1546 (1982),
\newblock \doi{10.1007/bf01091462}.

\bibitem{Fad82}
L.~Faddeev,
\newblock \emph{Integrable models in 1+ 1 dimensional quantum field theory},
\newblock Tech. rep., CEA Centre d'Etudes Nucleaires de Saclay (1982).

\bibitem{Fad96}
L.~Faddeev,
\newblock \emph{How {Algebraic} {Bethe} {Ansatz} works for integrable models},
\newblock arXiv preprint hep-th/9605187  (1996).

\bibitem{Skl85}
E.~K. Sklyanin,
\newblock \emph{The quantum {Toda} chain},
\newblock In \emph{Non-Linear Equations in Classical and Quantum Field Theory},
  pp. 196--233. Springer Berlin Heidelberg,
\newblock \doi{10.1007/3-540-15213-x_80} (1985).

\bibitem{Skl90}
E.~Sklyanin,
\newblock \emph{{Functional Bethe Ansatz}},
\newblock In \emph{Integrable and Superintegrable Systems}, pp. 8--33. World
  Scientific,
\newblock \doi{10.1142/9789812797179_0002} (1990).

\bibitem{Skl92}
M.-L. Ge,
\newblock \emph{{Quantum Groups and Quantum Integrable Systems}},
\newblock In \emph{Quantum Group and Quantum Integrable Systems}. World
  Scientific,
\newblock \doi{10.1142/9789814538534} (1992).

\bibitem{Skl92a}
E.~K. Sklyanin,
\newblock \emph{Separation of variables in the classical integrable ${SL}(3)$
  magnetic chain},
\newblock Communications in Mathematical Physics \textbf{150}(1), 181 (1992),
\newblock \doi{10.1007/bf02096572}.

\bibitem{Skl95}
E.~K. Sklyanin,
\newblock \emph{Separation of variables : New trends},
\newblock Progress of Theoretical Physics Supplement \textbf{118}, 35 (1995),
\newblock \doi{10.1143/ptps.118.35}.

\bibitem{Skl96}
E.~K. Sklyanin,
\newblock \emph{Separation of variables in the quantum integrable models
  related to the {Y}angian {$Y[sl(3)]$}},
\newblock Journal of Mathematical Sciences \textbf{80}(3), 1861 (1996),
\newblock \doi{10.1007/bf02362784}.

\bibitem{BabBS96}
O.~Babelon, D.~Bernard and F.~A. Smirnov,
\newblock \emph{Quantization of solitons and the restricted sine-{G}ordon
  model},
\newblock Communications in Mathematical Physics \textbf{182}(2), 319 (1996),
\newblock \doi{10.1007/bf02517893}.

\bibitem{Smi98a}
F.~A. Smirnov,
\newblock \emph{Structure of matrix elements in the quantum {Toda} chain},
\newblock Journal of Physics A: Mathematical and General \textbf{31}(44), 8953
  (1998),
\newblock \doi{10.1088/0305-4470/31/44/019}.

\bibitem{Smi01}
F.~A. Smirnov,
\newblock \emph{Separation of variables for quantum integrable models related
  to ${U}_q(\widehat{sl}_n)$},
\newblock In \emph{{MathPhys} Odyssey 2001}, pp. 455--465. Birkhäuser Boston,
\newblock \doi{10.1007/978-1-4612-0087-1_17} (2002).

\bibitem{DerKM01}
S.~Derkachov, G.~Korchemsky and A.~Manashov,
\newblock \emph{Non-compact {Heisenberg} spin magnets from high-energy {QCD}},
\newblock Nuclear Physics B \textbf{617}(1-3), 375 (2001),
\newblock \doi{10.1016/s0550-3213(01)00457-6}.

\bibitem{DerKM03}
S.~Derkachov, G.~P. Korchemsky and A.~N. Manashov,
\newblock \emph{Separation of variables for the quantum $sl(2,\mathbb{R})$ spin
  chain},
\newblock Journal of High Energy Physics \textbf{2003}(07), 047 (2003),
\newblock \doi{10.1088/1126-6708/2003/07/047}.

\bibitem{DerKM03b}
S.~Derkachov, G.~P. Korchemsky and A.~N. Manashov,
\newblock \emph{Baxter $\mathbb{Q}$-operator and separation of variables for
  the open $sl(2,\mathbb{R})$ spin chain},
\newblock Journal of High Energy Physics \textbf{2003}(10), 053 (2003),
\newblock \doi{10.1088/1126-6708/2003/10/053}.

\bibitem{BytT06}
A.~G. Bytsko and J.~Teschner,
\newblock \emph{Quantization of models with non-compact quantum group symmetry:
  modular {XXZ} magnet and lattice sinh-{G}ordon model},
\newblock Journal of Physics A: Mathematical and General \textbf{39}(41), 12927
  (2006),
\newblock \doi{10.1088/0305-4470/39/41/s11}.

\bibitem{vonGIPS06}
G.~von Gehlen, N.~Iorgov, S.~Pakuliak and V.~Shadura,
\newblock \emph{The {Baxter-Bazhanov-Stroganov} model: separation of variables
  and the {Baxter} equation},
\newblock Journal of Physics A: Mathematical and General \textbf{39}(23), 7257
  (2006),
\newblock \doi{10.1088/0305-4470/39/23/006}.

\bibitem{FraSW08}
H.~Frahm, A.~Seel and T.~Wirth,
\newblock \emph{Separation of variables in the open {XXX} chain},
\newblock Nuclear Physics B \textbf{802}(3), 351 (2008),
\newblock \doi{10.1016/j.nuclphysb.2008.04.008}.

\bibitem{MukTV09}
E.~Mukhin, V.~Tarasov and A.~Varchenko,
\newblock \emph{On separation of variables and completeness of the {Bethe}
  ansatz for quantum $\mathfrak{gl}_n$ {Gaudin} model},
\newblock Glasgow Mathematical Journal \textbf{51}(A), 137 (2009),
\newblock \doi{10.1017/s0017089508004850}.

\bibitem{MukTV09a}
E.~Mukhin, V.~Tarasov and A.~Varchenko,
\newblock \emph{Schubert calculus and representations of the general linear
  group},
\newblock Journal of the American Mathematical Society \textbf{22}(4), 909
  (2009),
\newblock \doi{10.1090/s0894-0347-09-00640-7}.

\bibitem{MukTV09c}
E.~Mukhin, V.~Tarasov and A.~Varchenko,
\newblock \emph{The {B.} and {M.} {Shapiro} conjecture in real algebraic
  geometry and the {Bethe} ansatz},
\newblock Annals of Mathematics \textbf{170}(2), 863 (2009),
\newblock \doi{10.4007/annals.2009.170.863}.

\bibitem{AmiFOW10}
L.~Amico, H.~Frahm, A.~Osterloh and T.~Wirth,
\newblock \emph{Separation of variables for integrable spin-boson models},
\newblock Nuclear Physics B \textbf{839}(3), 604 (2010),
\newblock \doi{10.1016/j.nuclphysb.2010.07.005}.

\bibitem{NicT10}
G.~Niccoli and J.~Teschner,
\newblock \emph{The sine-{G}ordon model revisited: I},
\newblock Journal of Statistical Mechanics: Theory and Experiment
  \textbf{2010}(09), P09014 (2010),
\newblock \doi{10.1088/1742-5468/2010/09/p09014}.

\bibitem{Nic10a}
G.~Niccoli,
\newblock \emph{Reconstruction of {Baxter}-operator from {Sklyanin} {SOV} for
  cyclic representations of integrable quantum models},
\newblock Nuclear Physics B \textbf{835}(3), 263 (2010),
\newblock \doi{10.1016/j.nuclphysb.2010.03.009}.

\bibitem{Nic11}
G.~Niccoli,
\newblock \emph{Completeness of {Bethe} ansatz by {Sklyanin} {SOV} for cyclic
  representations of integrable quantum models},
\newblock Journal of High Energy Physics \textbf{2011}(3) (2011),
\newblock \doi{10.1007/jhep03(2011)123}.

\bibitem{FraGSW11}
H.~Frahm, J.~H. Grelik, A.~Seel and T.~Wirth,
\newblock \emph{Functional {Bethe} ansatz methods for the open {XXX} chain},
\newblock Journal of Physics A: Mathematical and Theoretical \textbf{44}(1),
  015001 (2010),
\newblock \doi{10.1088/1751-8113/44/1/015001}.

\bibitem{GroMN12}
N.~Grosjean, J.~M. Maillet and G.~Niccoli,
\newblock \emph{On the form factors of local operators in the lattice
  sine{\textendash}{G}ordon model},
\newblock Journal of Statistical Mechanics: Theory and Experiment
  \textbf{2012}(10), P10006 (2012),
\newblock \doi{10.1088/1742-5468/2012/10/p10006}.

\bibitem{GroN12}
N.~Grosjean and G.~Niccoli,
\newblock \emph{The $\tau_2$-model and the chiral {Potts} model revisited:
  completeness of {Bethe} equations from {Sklyanin's} {SoV} method},
\newblock Journal of Statistical Mechanics: Theory and Experiment
  \textbf{2012}(11), P11005 (2012),
\newblock \doi{10.1088/1742-5468/2012/11/p11005}.

\bibitem{Nic12}
G.~Niccoli,
\newblock \emph{Non-diagonal open spin-1/2 {XXZ} quantum chains by separation
  of variables: complete spectrum and matrix elements of some quasi-local
  operators},
\newblock Journal of Statistical Mechanics: Theory and Experiment
  \textbf{2012}(10), P10025 (2012),
\newblock \doi{10.1088/1742-5468/2012/10/p10025}.

\bibitem{Nic13}
G.~Niccoli,
\newblock \emph{Antiperiodic spin-1/2 {XXZ} quantum chains by separation of
  variables: Complete spectrum and form factors},
\newblock Nuclear Physics B \textbf{870}(2), 397 (2013),
\newblock \doi{10.1016/j.nuclphysb.2013.01.017}.

\bibitem{Nic13a}
G.~Niccoli,
\newblock \emph{An antiperiodic dynamical six-vertex model: I. complete
  spectrum by {SOV}, matrix elements of the identity on separate states and
  connections to the periodic eight-vertex model},
\newblock Journal of Physics A: Mathematical and Theoretical \textbf{46}(7),
  075003 (2013),
\newblock \doi{10.1088/1751-8113/46/7/075003}.

\bibitem{Nic13b}
G.~Niccoli,
\newblock \emph{Form factors and complete spectrum of {XXX} anti-periodic
  higher spin chains by quantum separation of variables},
\newblock Journal of Mathematical Physics \textbf{54}(5), 053516 (2013),
\newblock \doi{10.1063/1.4807078}.

\bibitem{GroMN14}
N.~Grosjean, J.-M. Maillet and G.~Niccoli,
\newblock \emph{On the form factors of local operators in the
  {Bazhanov-Stroganov} and chiral {Potts} models},
\newblock Annales Henri Poincar{\'{e}} \textbf{16}(5), 1103 (2014),
\newblock \doi{10.1007/s00023-014-0358-9}.

\bibitem{FalN14}
S.~Faldella and G.~Niccoli,
\newblock \emph{{SoV} approach for integrable quantum models associated with
  general representations on spin-1/2 chains of the 8-vertex reflection
  algebra},
\newblock Journal of Physics A: Mathematical and Theoretical \textbf{47}(11),
  115202 (2014),
\newblock \doi{10.1088/1751-8113/47/11/115202}.

\bibitem{FalKN14}
S.~Faldella, N.~Kitanine and G.~Niccoli,
\newblock \emph{The complete spectrum and scalar products for the open spin-1/2
  {XXZ} quantum chains with non-diagonal boundary terms},
\newblock Journal of Statistical Mechanics: Theory and Experiment
  \textbf{2014}(1), P01011 (2014),
\newblock \doi{10.1088/1742-5468/2014/01/p01011}.

\bibitem{KitMN14}
N.~Kitanine, J.~M. Maillet and G.~Niccoli,
\newblock \emph{Open spin chains with generic integrable boundaries: {Baxter}
  equation and {Bethe} ansatz completeness from separation of variables},
\newblock Journal of Statistical Mechanics: Theory and Experiment
  \textbf{2014}(5), P05015 (2014),
\newblock \doi{10.1088/1742-5468/2014/05/p05015}.

\bibitem{NicT15}
G.~Niccoli and V.~Terras,
\newblock \emph{Antiperiodic {XXZ} chains with arbitrary spins: Complete
  eigenstate construction by functional equations in separation of variables},
\newblock Letters in Mathematical Physics \textbf{105}(7), 989 (2015),
\newblock \doi{10.1007/s11005-015-0759-9}.

\bibitem{LevNT15}
D.~Levy-Bencheton, G.~Niccoli and V.~Terras,
\newblock \emph{Antiperiodic dynamical 6-vertex model by separation of
  variables {II}: functional equations and form factors},
\newblock Journal of Statistical Mechanics: Theory and Experiment
  \textbf{2016}(3), 033110 (2016),
\newblock \doi{10.1088/1742-5468/2016/03/033110}.

\bibitem{NicT16}
G.~Niccoli and V.~Terras,
\newblock \emph{The eight-vertex model with quasi-periodic boundary
  conditions},
\newblock Journal of Physics A: Mathematical and Theoretical \textbf{49}(4),
  044001 (2015),
\newblock \doi{10.1088/1751-8113/49/4/044001}.

\bibitem{KitMNT16}
N.~Kitanine, J.~M. Maillet, G.~Niccoli and V.~Terras,
\newblock \emph{On determinant representations of scalar products and form
  factors in the {SoV} approach: the {XXX} case},
\newblock Journal of Physics A: Mathematical and Theoretical \textbf{49}(10),
  104002 (2016),
\newblock \doi{10.1088/1751-8113/49/10/104002}.

\bibitem{MarS16}
D.~Martin and F.~Smirnov,
\newblock \emph{Problems with using separated variables for computing
  expectation values for higher ranks},
\newblock Letters in Mathematical Physics \textbf{106}(4), 469 (2016),
\newblock \doi{10.1007/s11005-016-0823-0}.

\bibitem{KitMNT17}
N.~Kitanine, J.~M. Maillet, G.~Niccoli and V.~Terras,
\newblock \emph{The open {XXX} spin chain in the {SoV} framework: scalar
  product of separate states},
\newblock Journal of Physics A: Mathematical and Theoretical \textbf{50}(22),
  224001 (2017),
\newblock \doi{10.1088/1751-8121/aa6cc9}.

\bibitem{MaiNP17}
J.~M. Maillet, G.~Niccoli and B.~Pezelier,
\newblock \emph{Transfer matrix spectrum for cyclic representations of the
  6-vertex reflection algebra {I}},
\newblock {SciPost} Physics \textbf{2}(1) (2017),
\newblock \doi{10.21468/scipostphys.2.1.009}.

\bibitem{MaiNP18}
J.~M. Maillet, G.~Niccoli and B.~Pezelier,
\newblock \emph{Transfer matrix spectrum for cyclic representations of the
  6-vertex reflection algebra {II}},
\newblock SciPost Phys. \textbf{5}, 26 (2018),
\newblock \doi{10.21468/SciPostPhys.5.3.026}.

\bibitem{H28}
W.~Heisenberg,
\newblock \emph{{Zur Theorie des Ferromagnetismus}},
\newblock Zeitschrift für Physik \textbf{49}(9-10), 619 (1928),
\newblock \doi{10.1007/bf01328601}.

\bibitem{B31}
H.~Bethe,
\newblock \emph{{Zur Theorie der Metalle}},
\newblock Zeitschrift für Physik \textbf{71}(3-4), 205 (1931),
\newblock \doi{10.1007/bf01341708}.

\bibitem{H28c}
R.~Orbach,
\newblock \emph{Linear antiferromagnetic chain with anisotropic coupling},
\newblock Physical Review \textbf{112}(2), 309 (1958),
\newblock \doi{10.1103/physrev.112.309}.

\bibitem{W59}
L.~R. Walker,
\newblock \emph{Antiferromagnetic linear chain},
\newblock Physical Review \textbf{116}(5), 1089 (1959),
\newblock \doi{10.1103/physrev.116.1089}.

\bibitem{Ba72}
R.~J. Baxter,
\newblock \emph{{Partition function of the Eight-Vertex lattice model}},
\newblock Annals of Physics \textbf{70}(1), 193 (1972),
\newblock \doi{10.1016/0003-4916(72)90335-1}.

\bibitem{Bax73-tot}
R.~Baxter,
\newblock \emph{Eight-vertex model in lattice statistics and one-dimensional
  anisotropic {Heisenberg} chain. i, {II}, {III}},
\newblock Annals of Physics \textbf{76}(1), 1 (1973),
\newblock \doi{10.1016/0003-4916(73)90440-5}.

\bibitem{Pau35}
L.~Pauling,
\newblock \emph{The structure and entropy of ice and of other crystals with
  some randomness of atomic arrangement},
\newblock Journal of the American Chemical Society \textbf{57}(12), 2680
  (1935),
\newblock \doi{10.1021/ja01315a102}.

\bibitem{Lie67c}
E.~H. Lieb,
\newblock \emph{Residual entropy of square ice},
\newblock Physical Review \textbf{162}(1), 162 (1967),
\newblock \doi{10.1103/physrev.162.162}.

\bibitem{Sut70}
B.~Sutherland,
\newblock \emph{Two-dimensional {H}ydrogen bonded crystals without the ice
  rule},
\newblock Journal of Mathematical Physics \textbf{11}(11), 3183 (1970),
\newblock \doi{10.1063/1.1665111}.

\bibitem{Fan70}
C.~Fan and F.~Y. Wu,
\newblock \emph{General lattice model of phase transitions},
\newblock Physical Review B \textbf{2}(3), 723 (1970),
\newblock \doi{10.1103/physrevb.2.723}.

\bibitem{Bax71a}
R.~J. Baxter,
\newblock \emph{Eight-vertex model in lattice statistics},
\newblock Physical Review Letters \textbf{26}(14), 832 (1971),
\newblock \doi{10.1103/physrevlett.26.832}.

\bibitem{BaxBook}
R.~J. Baxter,
\newblock \emph{{Exactly Solved Models in Statistical Mechanics}},
\newblock In \emph{Series on Advances in Statistical Mechanics}, pp. 5--63.
  World Scientific,
\newblock \doi{10.1142/9789814415255_0002} (1985).

\bibitem{Lie67}
E.~H. Lieb,
\newblock \emph{Exact solution of the problem of the entropy of two-dimensional
  ice},
\newblock Physical Review Letters \textbf{18}(17), 692 (1967),
\newblock \doi{10.1103/physrevlett.18.692}.

\bibitem{Lie67a}
E.~H. Lieb,
\newblock \emph{Exact solution of the $f$-model of an antiferroelectric},
\newblock Physical Review Letters \textbf{18}(24), 1046 (1967),
\newblock \doi{10.1103/physrevlett.18.1046}.

\bibitem{Lie67b}
E.~H. Lieb,
\newblock \emph{{Exact Solution of the Two-Dimensional {Slater} {KDP} Model of
  a Ferroelectric}},
\newblock Physical Review Letters \textbf{19}(3), 108 (1967),
\newblock \doi{10.1103/physrevlett.19.108}.

\bibitem{Sut67}
B.~Sutherland,
\newblock \emph{{Exact Solution of a Two-Dimensional Model for Hydrogen-Bonded
  Crystals}},
\newblock Physical Review Letters \textbf{19}(3), 103 (1967),
\newblock \doi{10.1103/physrevlett.19.103}.

\bibitem{McC68}
B.~M. McCoy and T.~T. Wu,
\newblock \emph{{Hydrogen-bonded crystals and the anisotropic Heisenberg
  chain}},
\newblock Il Nuovo Cimento B Series 10 \textbf{56}(2), 311 (1968),
\newblock \doi{10.1007/bf02710156}.

\bibitem{Ba72bis}
R.~J. Baxter,
\newblock \emph{{One-dimensional anisotropic Heisenberg chain}},
\newblock Annals of Physics \textbf{70}(2), 323 (1972),
\newblock \doi{10.1016/0003-4916(72)90270-9}.

\bibitem{KuRe83}
P.~P. Kulish and N.~Y. Reshetikhin,
\newblock \emph{Quantum linear problem for the {sine-Gordon} equation and
  higher representations},
\newblock Journal of Soviet Mathematics \textbf{23}(4), 2435 (1983),
\newblock \doi{10.1007/bf01084171}.

\bibitem{KR86}
P.~P. Kulish and N.~Y. Reshetikhin,
\newblock \emph{{GL}3-invariant solutions of the {Yang-Baxter} equation and
  associated quantum systems},
\newblock Journal of Soviet Mathematics \textbf{34}(5), 1948 (1986),
\newblock \doi{10.1007/bf01095104}.

\bibitem{J85}
M.~Jimbo,
\newblock \emph{A $q$-difference analogue of {$U(g)$} and the {Yang-Baxter}
  equation},
\newblock Letters in Mathematical Physics \textbf{10}(1), 63 (1985),
\newblock \doi{10.1007/bf00704588}.

\bibitem{D87}
V.~G. Drinfel'd,
\newblock \emph{Quantum groups},
\newblock Journal of Soviet mathematics \textbf{41}(2), 898 (1988),
\newblock \doi{10.1007/BF01247086}.

\bibitem{JIMBO1992256}
M.~Jimbo, K.~Miki, T.~Miwa and A.~Nakayashiki,
\newblock \emph{Correlation functions of the {XXZ} model for $\delta < -1$},
\newblock Physics Letters A \textbf{168}(4), 256 (1992),
\newblock \doi{10.1016/0375-9601(92)91128-e}.

\bibitem{JM95}
M.~Jimbo and T.~Miwa,
\newblock \emph{{Algebraic Analysis of Solvable Lattice Models}},
\newblock American Mathematical Society,
\newblock \doi{10.1090/cbms/085} (1994).

\bibitem{JimKKKM95-1}
M.~Jimbo, R.~Kedem, T.~Kojima, H.~Konno and T.~Miwa,
\newblock \emph{{XXZ} chain with a boundary},
\newblock Nuclear Physics B \textbf{441}(3), 437 (1995),
\newblock \doi{10.1016/0550-3213(95)00062-w}.

\bibitem{JimKKMW95-2}
M.~Jimbo, R.~Kedem, H.~Konno, T.~Miwa and R.~Weston,
\newblock \emph{Difference equations in spin chains with a boundary},
\newblock Nuclear Physics B \textbf{448}(3), 429 (1995),
\newblock \doi{10.1016/0550-3213(95)00218-h}.

\bibitem{KitMT99}
N.~Kitanine, J.~Maillet and V.~Terras,
\newblock \emph{Form factors of the {XXZ} {H}eisenberg finite chain},
\newblock Nuclear Physics B \textbf{554}(3), 647 (1999),
\newblock \doi{10.1016/s0550-3213(99)00295-3}.

\bibitem{MaiT00}
J.~Maillet and V.~Terras,
\newblock \emph{On the quantum inverse scattering problem},
\newblock Nuclear Physics B \textbf{575}(3), 627 (2000),
\newblock \doi{10.1016/s0550-3213(00)00097-3}.

\bibitem{KMT00}
N.~Kitanine, J.~Maillet and V.~Terras,
\newblock \emph{Correlation functions of the {XXZ} {H}eisenberg spin- chain in
  a magnetic field},
\newblock Nuclear Physics B \textbf{567}(3), 554 (2000),
\newblock \doi{10.1016/s0550-3213(99)00619-7}.

\bibitem{KMST02}
N.~Kitanine, J.~Maillet, N.~Slavnov and V.~Terras,
\newblock \emph{Spin{\textendash}spin correlation functions of the {XXZ}-
  {H}eisenberg chain in a magnetic field},
\newblock Nuclear Physics B \textbf{641}(3), 487 (2002),
\newblock \doi{10.1016/s0550-3213(02)00583-7}.

\bibitem{KMST05}
N.~Kitanine, J.~Maillet, N.~Slavnov and V.~Terras,
\newblock \emph{Dynamical correlation functions of the spin- chain},
\newblock Nuclear Physics B \textbf{729}(3), 558 (2005),
\newblock \doi{10.1016/j.nuclphysb.2005.08.046}.

\bibitem{KMST05b}
N.~Kitanine, J.~Maillet, N.~Slavnov and V.~Terras,
\newblock \emph{Master equation for spin{\textendash}spin correlation functions
  of the {XXZ} chain},
\newblock Nuclear Physics B \textbf{712}(3), 600 (2005),
\newblock \doi{10.1016/j.nuclphysb.2005.01.050}.

\bibitem{KitKMNST07}
N.~Kitanine, K.~K. Kozlowski, J.~M. Maillet, G.~Niccoli, N.~A. Slavnov and
  V.~Terras,
\newblock \emph{Correlation functions of the open {XXZ} chain: {I}},
\newblock Journal of Statistical Mechanics: Theory and Experiment
  \textbf{2007}(10), P10009 (2007),
\newblock \doi{10.1088/1742-5468/2007/10/p10009}.

\bibitem{KitKMNST08}
N.~Kitanine, K.~K. Kozlowski, J.~M. Maillet, G.~Niccoli, N.~A. Slavnov and
  V.~Terras,
\newblock \emph{Correlation functions of the open {XXZ} chain: {II}},
\newblock Journal of Statistical Mechanics: Theory and Experiment
  \textbf{2008}(07), P07010 (2008),
\newblock \doi{10.1088/1742-5468/2008/07/p07010}.

\bibitem{BasK2014}
P.~Baseilhac and T.~Kojima,
\newblock \emph{Correlation functions of the half-infinite {XXZ} spin chain
  with a triangular boundary},
\newblock Nuclear Physics B \textbf{880}, 378 (2014),
\newblock \doi{10.1016/j.nuclphysb.2014.01.011}.

\bibitem{CHM05}
J.-S. Caux, R.~Hagemans and J.~M. Maillet,
\newblock \emph{Computation of dynamical correlation functions of {H}eisenberg
  chains: the gapless anisotropic regime},
\newblock Journal of Statistical Mechanics: Theory and Experiment
  \textbf{2005}(09), P09003 (2005),
\newblock \doi{10.1088/1742-5468/2005/09/p09003}.

\bibitem{CM05}
J.-S. Caux and J.~M. Maillet,
\newblock \emph{Computation of dynamical correlation functions of {H}eisenberg
  chains in a magnetic field},
\newblock Physical Review Letters \textbf{95}(7) (2005),
\newblock \doi{10.1103/physrevlett.95.077201}.

\bibitem{Ken2012}
M.~Kenzelmann, R.~Coldea, D.~A. Tennant, D.~Visser, M.~Hofmann, P.~Smeibidl and
  Z.~Tylczynski,
\newblock \emph{Order-to-disorder transition in the {XY}-like quantum magnet
  {Cs2CoCl4} induced by non-commuting applied fields},
\newblock Physical Review B \textbf{65}(14) (2002),
\newblock \doi{10.1103/physrevb.65.144432}.

\bibitem{KitMST02a}
N.~Kitanine, J.~M. Maillet, N.~A. Slavnov and V.~Terras,
\newblock \emph{Large distance asymptotic behavior of the emptiness formation
  probability of {the XXZ spin}-1/2 {H}eisenberg chain},
\newblock Journal of Physics A: Mathematical and General \textbf{35}(49), L753
  (2002),
\newblock \doi{10.1088/0305-4470/35/49/102}.

\bibitem{KKMST09}
N.~Kitanine, K.~K. Kozlowski, J.~M. Maillet, N.~A. Slavnov and V.~Terras,
\newblock \emph{Algebraic {B}ethe ansatz approach to the asymptotic behavior of
  correlation functions},
\newblock Journal of Statistical Mechanics: Theory and Experiment
  \textbf{2009}(04), P04003 (2009),
\newblock \doi{10.1088/1742-5468/2009/04/p04003}.

\bibitem{KKMST09b}
N.~Kitanine, K.~K. Kozlowski, J.~M. Maillet, N.~A. Slavnov and V.~Terras,
\newblock \emph{On the thermodynamic limit of form factors in the massless
  {XXZ} {H}eisenberg chain},
\newblock Journal of Mathematical Physics \textbf{50}(9), 095209 (2009),
\newblock \doi{10.1063/1.3136683}.

\bibitem{KitKMST11}
N.~Kitanine, K.~K. Kozlowski, J.~M. Maillet, N.~A. Slavnov and V.~Terras,
\newblock \emph{A form factor approach to the asymptotic behavior of
  correlation functions in critical models},
\newblock Journal of Statistical Mechanics: Theory and Experiment
  \textbf{2011}(12), P12010 (2011),
\newblock \doi{10.1088/1742-5468/2011/12/p12010}.

\bibitem{KozMS11}
K.~K. Kozlowski, J.~M. Maillet and N.~A. Slavnov,
\newblock \emph{Long-distance behavior of temperature correlation functions in
  the one-dimensional {Bose} gas},
\newblock Journal of Statistical Mechanics: Theory and Experiment
  \textbf{2011}(03), P03018 (2011),
\newblock \doi{10.1088/1742-5468/2011/03/p03018}.

\bibitem{KitKMST12}
N.~Kitanine, K.~K. Kozlowski, J.~M. Maillet, N.~A. Slavnov and V.~Terras,
\newblock \emph{Form factor approach to dynamical correlation functions in
  critical models},
\newblock Journal of Statistical Mechanics: Theory and Experiment
  \textbf{2012}(09), P09001 (2012),
\newblock \doi{10.1088/1742-5468/2012/09/p09001}.

\bibitem{KozM2015}
K.~K. Kozlowski and J.~M. Maillet,
\newblock \emph{Microscopic approach to a class of 1{D} quantum critical
  models},
\newblock Journal of Physics A: Mathematical and Theoretical \textbf{48}(48),
  484004 (2015),
\newblock \doi{10.1088/1751-8113/48/48/484004}.

\bibitem{D98}
B.~Derrida,
\newblock \emph{An exactly soluble non-equilibrium system: The asymmetric
  simple exclusion process},
\newblock Physics Reports \textbf{301}(1-3), 65 (1998),
\newblock \doi{10.1016/s0370-1573(98)00006-4}.

\bibitem{Sch00}
G.~Schütz,
\newblock \emph{Exactly solvable models for many-body systems far from
  equilibrium},
\newblock In \emph{Phase Transitions and Critical Phenomena}, pp. 1--251.
  Elsevier,
\newblock \doi{10.1016/s1062-7901(01)80015-x} (2001).

\bibitem{-ShiW97b}
M.~Shiroishi and M.~Wadati,
\newblock \emph{Integrable boundary conditions for the one-dimensional
  {H}ubbard model},
\newblock Journal of the Physical Society of Japan \textbf{66}(8), 2288 (1997),
\newblock \doi{10.1143/jpsj.66.2288}.

\bibitem{AlcDHR94}
F.~Alcaraz, M.~Droz, M.~Henkel and V.~Rittenberg,
\newblock \emph{Reaction-diffusion processes, critical dynamics, and quantum
  chains},
\newblock Annals of Physics \textbf{230}(2), 250 (1994),
\newblock \doi{10.1006/aphy.1994.1026}.

\bibitem{-Baj}
Z.~Bajnok,
\newblock \emph{Equivalences between spin models induced by defects},
\newblock Journal of Statistical Mechanics: Theory and Experiment
  \textbf{2006}(06), P06010 (2006),
\newblock \doi{10.1088/1742-5468/2006/06/p06010}.

\bibitem{deGierE05}
J.~de~Gier and F.~H.~L. Essler,
\newblock \emph{Bethe ansatz solution of the asymmetric exclusion process with
  open boundaries},
\newblock Physical Review Letters \textbf{95}(24) (2005),
\newblock \doi{10.1103/physrevlett.95.240601}.

\bibitem{deGierE06}
J.~de~Gier and F.~H.~L. Essler,
\newblock \emph{Exact spectral gaps of the asymmetric exclusion process with
  open boundaries},
\newblock Journal of Statistical Mechanics: Theory and Experiment
  \textbf{2006}(12), P12011 (2006),
\newblock \doi{10.1088/1742-5468/2006/12/p12011}.

\bibitem{SirPA09}
J.~Sirker, R.~G. Pereira and I.~Affleck,
\newblock \emph{Diffusion and ballistic transport in one-dimensional quantum
  systems},
\newblock Physical Review Letters \textbf{103}(21) (2009),
\newblock \doi{10.1103/physrevlett.103.216602}.

\bibitem{Pro11}
T.~Prosen,
\newblock \emph{Open {XXZ} spin chain: Nonequilibrium steady state and a strict
  bound on ballistic transport},
\newblock Physical Review Letters \textbf{106}(21) (2011),
\newblock \doi{10.1103/physrevlett.106.217206}.

\bibitem{JorSGME08}
R.~J{\"o}rdens, N.~Strohmaier, K.~G{\"o}nter, H.~Moritz and T.~Esslinger,
\newblock \emph{A {Mott} insulator of fermionic atoms in an optical lattice},
\newblock Nature \textbf{455}(7210), 204 (2008),
\newblock \doi{10.1038/nature07244}.

\bibitem{Nature_Hub13}
\emph{The {Hubbard} model at half a century},
\newblock Nature Physics \textbf{9}, 523 EP  (2013),
\newblock \doi{10.1038/nphys2759}.

\bibitem{LieW68}
E.~H. Lieb and F.~Y. Wu,
\newblock \emph{Absence of {Mott }transition in an exact solution of the
  short-range, one-band model in one dimension},
\newblock Physical Review Letters \textbf{20}(25), 1445 (1968),
\newblock \doi{10.1103/physrevlett.20.1445}.

\bibitem{LiebW03}
E.~H. Lieb and F.~Wu,
\newblock \emph{The one-dimensional {H}ubbard model: a reminiscence},
\newblock Physica A: Statistical Mechanics and its Applications
  \textbf{321}(1), 1  (2003),
\newblock \doi{https://doi.org/10.1016/S0378-4371(02)01785-5},
\newblock Statphys-Taiwan-2002: Lattice Models and Complex Systems.

\bibitem{Woy82}
F.~Woynarovich,
\newblock \emph{Excitations with complex wavenumbers in a {Hubbard} chain. {I}.
  states with one pair of complex wave numbers},
\newblock Journal of Physics C: Solid State Physics \textbf{15}(1), 85 (1982),
\newblock \doi{10.1088/0022-3719/15/1/007}.

\bibitem{Sha86}
B.~S. Shastry,
\newblock \emph{Exact integrability of the one-dimensional {Hubbard} model},
\newblock Physical Review Letters \textbf{56}(23), 2453 (1986),
\newblock \doi{10.1103/physrevlett.56.2453}.

\bibitem{Sha86a}
B.~S. Shastry,
\newblock \emph{Infinite conservation laws in the one-dimensional {Hubbard}
  model},
\newblock Physical Review Letters \textbf{56}(15), 1529 (1986),
\newblock \doi{10.1103/physrevlett.56.1529}.

\bibitem{Sha88}
B.~S. Shastry,
\newblock \emph{Decorated star-triangle relations and exact integrability of
  the one-dimensional {Hubbard} model},
\newblock Journal of Statistical Physics \textbf{50}(1-2), 57 (1988),
\newblock \doi{10.1007/bf01022987}.

\bibitem{ShiW95}
M.~Shiroishi and M.~Wadati,
\newblock \emph{{Yang-Baxter} equation for the {R}-matrix of the
  one-dimensional {Hubbard} model},
\newblock Journal of the Physical Society of Japan \textbf{64}(1), 57 (1995),
\newblock \doi{10.1143/jpsj.64.57}.

\bibitem{MarR97}
P.~B. Ramos and M.~J. Martins,
\newblock \emph{Algebraic {Bethe} ansatz approach for the one-dimensional
  {Hubbard} model},
\newblock Journal of Physics A: Mathematical and General \textbf{30}(7), L195
  (1997),
\newblock \doi{10.1088/0305-4470/30/7/009}.

\bibitem{MarR98}
M.~Martins and P.~Ramos,
\newblock \emph{The quantum inverse scattering method for {Hubbard}-like
  models},
\newblock Nuclear Physics B \textbf{522}(3), 413 (1998),
\newblock \doi{10.1016/s0550-3213(98)00199-0}.

\bibitem{YueD97}
R.~Yue and T.~Deguchi,
\newblock \emph{Analytic {Bethe} ansatz for {1D} {Hubbard} model and twisted
  coupled {XY} model},
\newblock Journal of Physics A: Mathematical and General \textbf{30}(3), 849
  (1997),
\newblock \doi{10.1088/0305-4470/30/3/011}.

\bibitem{JutKS98}
G.~J{\"u}ttner, A.~Kl{\"u}mper and J.~Suzuki,
\newblock \emph{The {Hubbard} chain at finite temperatures: ab initio
  calculations of {Tomonaga}-{Luttinger} liquid properties},
\newblock Nuclear Physics B \textbf{522}(3), 471 (1998),
\newblock \doi{10.1016/s0550-3213(98)00256-9}.

\bibitem{UglK94}
D.~Uglov and V.~Korepin,
\newblock \emph{The {Yangian} symmetry of the {Hubbard} model},
\newblock Physics Letters A \textbf{190}(3-4), 238 (1994),
\newblock \doi{10.1016/0375-9601(94)90748-x}.

\bibitem{GohI96}
F.~G{\"o}hmann and V.~Inozemtsev,
\newblock \emph{The {Yangian} symmetry of the {Hubbard} models with variable
  range hopping},
\newblock Physics Letters A \textbf{214}(3-4), 161 (1996),
\newblock \doi{10.1016/0375-9601(96)00174-0}.

\bibitem{MurG97}
S.~Murakami and F.~G{\"o}hmann,
\newblock \emph{Yangian symmetry and quantum inverse scattering method for the
  one-dimensional {Hubbard} model},
\newblock Physics Letters A \textbf{227}(3-4), 216 (1997),
\newblock \doi{10.1016/s0375-9601(96)00953-x}.

\bibitem{MurG98}
S.~Murakami and F.~G{\"o}hmann,
\newblock \emph{Algebraic solution of the {Hubbard} model on the infinite
  interval},
\newblock Nuclear Physics B \textbf{512}(3), 637 (1998),
\newblock \doi{10.1016/s0550-3213(97)00793-1}.

\bibitem{Bei15}
N.~Beisert, M.~de~Leeuw and P.~Nag,
\newblock \emph{Fusion for the one-dimensional {Hubbard} model},
\newblock Journal of Physics A: Mathematical and Theoretical \textbf{48}(32),
  324002 (2015),
\newblock \doi{10.1088/1751-8113/48/32/324002}.

\bibitem{CavCMT15}
A.~Cavagli{\`a}, M.~Cornagliotto, M.~Mattelliano and R.~Tateo,
\newblock \emph{A {Riemann}-{Hilbert} formulation for the finite temperature
  {Hubbard} model},
\newblock Journal of High Energy Physics \textbf{2015}(6), 1 (2015),
\newblock \doi{10.1007/jhep06(2015)015}.

\bibitem{EssK92}
F.~H.~L. Essler and V.~E. Korepin,
\newblock \emph{Higher conservation laws and algebraic {B}ethe ans{\"a}tze for
  the supersymmetric t-{J} model},
\newblock Phys. Rev. B \textbf{46}, 9147 (1992),
\newblock \doi{10.1103/PhysRevB.46.9147}.

\bibitem{PfaF96}
M.~P. Pfannmüller and H.~Frahm,
\newblock \emph{Algebraic {B}ethe ansatz for gl(2, 1) invariant 36-vertex
  models},
\newblock Nuclear Physics B \textbf{479}(3), 575  (1996),
\newblock \doi{https://doi.org/10.1016/0550-3213(96)00425-7}.

\bibitem{Goh02}
F.~Göhmann,
\newblock \emph{Algebraic {B}ethe ansatz for the gl(1|2) generalized model and
  {L}ieb–{W}u equations},
\newblock Nuclear Physics B \textbf{620}(3), 501  (2002),
\newblock \doi{https://doi.org/10.1016/S0550-3213(01)00497-7}.

\bibitem{Bei12}
N.~Beisert, C.~Ahn, L.~F. Alday, Z.~Bajnok, J.~M. Drummond, L.~Freyhult,
  N.~Gromov, R.~A. Janik, V.~Kazakov, T.~Klose, G.~P. Korchemsky,
  C.~Kristjansen \emph{et~al.},
\newblock \emph{Review of ads/cft integrability: An overview},
\newblock Letters in Mathematical Physics \textbf{99}(1), 3 (2012),
\newblock \doi{10.1007/s11005-011-0529-2}.

\bibitem{EscGSV11}
J.~Escobedo, N.~Gromov, A.~Sever and P.~Vieira,
\newblock \emph{Tailoring three-point functions and integrability},
\newblock Journal of High Energy Physics \textbf{2011}(9) (2011),
\newblock \doi{10.1007/jhep09(2011)028}.

\bibitem{GroV13}
N.~Gromov and P.~Vieira,
\newblock \emph{Quantum integrability for three-point functions of maximally
  supersymmetric {Yang-Mills} theory},
\newblock Physical Review Letters \textbf{111}(21) (2013),
\newblock \doi{10.1103/physrevlett.111.211601}.

\bibitem{VieW14}
P.~Vieira and T.~Wang,
\newblock \emph{Tailoring non-compact spin chains},
\newblock Journal of High Energy Physics \textbf{2014}(10) (2014),
\newblock \doi{10.1007/jhep10(2014)035}.

\bibitem{CaeF14}
J.~Caetano and T.~Fleury,
\newblock \emph{Three-point functions and $\mathfrak{su}(1\vert 1)$ spin
  chains},
\newblock Journal of High Energy Physics \textbf{2014}(9) (2014),
\newblock \doi{10.1007/jhep09(2014)173}.

\bibitem{BasKV15}
B.~Basso, S.~Komatsu and P.~Vieira,
\newblock \emph{Structure constants and integrable bootstrap in planar {N} = 4
  {SYM} theory},
\newblock arXiv preprint arXiv:1505.06745  (2015).

\bibitem{JiaKKS16}
Y.~Jiang, S.~Komatsu, I.~Kostov and D.~Serban,
\newblock \emph{The hexagon in the mirror: the three-point function in the
  {SoV} representation},
\newblock Journal of Physics A: Mathematical and Theoretical \textbf{49}(17),
  174007 (2016),
\newblock \doi{10.1088/1751-8113/49/17/174007}.

\bibitem{Rej12}
A.~Rej,
\newblock \emph{Review of {AdS}/{CFT} integrability, {Chapter} {I.3}:
  Long-range spin chains},
\newblock Letters in Mathematical Physics \textbf{99}(1), 85 (2012),
\newblock \doi{10.1007/s11005-011-0509-6}.

\bibitem{RejSS06}
A.~Rej, D.~Serban and M.~Staudacher,
\newblock \emph{Planar script n = 4 gauge theory and the {Hubbard} model},
\newblock Journal of High Energy Physics \textbf{2006}(03), 018 (2006),
\newblock \doi{10.1088/1126-6708/2006/03/018}.

\bibitem{Bei07}
N.~Beisert,
\newblock \emph{The analytic {Bethe} ansatz for a chain with centrally extended
  $su(2\vert 2)$ symmetry},
\newblock Journal of Statistical Mechanics: Theory and Experiment
  \textbf{2007}(01), P01017 (2007),
\newblock \doi{10.1088/1742-5468/2007/01/p01017}.

\bibitem{MarM07}
M.~Martins and C.~Melo,
\newblock \emph{The {Bethe} ansatz approach for factorizable centrally extended
  {S}-matrices},
\newblock Nuclear Physics B \textbf{785}(3), 246 (2007),
\newblock \doi{10.1016/j.nuclphysb.2007.05.021}.

\bibitem{GroKLV14}
N.~Gromov, V.~Kazakov, S.~Leurent and D.~Volin,
\newblock \emph{{Quantum} {Spectral} {Curve} for planar ${N}=4$
  {Super}-{Yang}-{Mills} theory},
\newblock Physical Review Letters \textbf{112}(1) (2014),
\newblock \doi{10.1103/physrevlett.112.011602}.

\bibitem{GroKLV15}
N.~Gromov, V.~Kazakov, S.~Leurent and D.~Volin,
\newblock \emph{{Quantum} {Spectral} {Curve} for arbitrary state/operator in
  {AdS5}/{CFT4}},
\newblock Journal of High Energy Physics \textbf{2015}(9) (2015),
\newblock \doi{10.1007/jhep09(2015)187}.

\bibitem{Gro17}
N.~Gromov,
\newblock \emph{Introduction to the spectrum of ${N}=4$ {SYM} and the {Quantum}
  {Spectral} {Curve}} (2017), \eprint{1708.03648}.

\bibitem{Lev20}
F.~Levkovich-Maslyuk,
\newblock \emph{{A review of the {AdS}/{CFT} {Quantum} {Spectral} {Curve}}},
\newblock J. Phys. A \textbf{53}(28), 283004 (2020),
\newblock \doi{10.1088/1751-8121/ab7137},
\newblock \eprint{1911.13065}.

\bibitem{Bei08}
N.~Beisert,
\newblock \emph{The $su(2 \vert 2)$ dynamic {$S$}-matrix},
\newblock Advances in Theoretical and Mathematical Physics \textbf{12}(5), 948
  (2008),
\newblock \doi{10.4310/atmp.2008.v12.n5.a1}.

\bibitem{AruFPZ07}
G.~Arutyunov, S.~Frolov, J.~Plefka and M.~Zamaklar,
\newblock \emph{The off-shell symmetry algebra of the light-cone {$AdS_5 \times
  S^5$} superstring},
\newblock Journal of Physics A: Mathematical and Theoretical \textbf{40}(13),
  3583 (2007),
\newblock \doi{10.1088/1751-8113/40/13/018}.

\bibitem{Bei06}
N.~Beisert,
\newblock \emph{The s-matrix of {AdS}/{CFT} and {Yangian} symmetry},
\newblock In \emph{Proceedings of {Bethe} {Ansatz}: 75 {years} {later}
  {\textemdash} {PoS}(Solvay)}. Sissa Medialab,
\newblock \doi{10.22323/1.038.0002} (2007).

\bibitem{BeiL14}
N.~Beisert and M.~de~Leeuw,
\newblock \emph{The rtt realization for the deformed $\mathfrak {sl}(2|2)$
  yangian},
\newblock Journal of Physics A: Mathematical and Theoretical \textbf{47}(30),
  305201 (2014),
\newblock \doi{10.1088/1751-8113/47/30/305201}.

\bibitem{RagS07}
E.~Ragoucy and G.~Satta,
\newblock \emph{Analytical {Bethe} {Ansatz} for closed and open {$gl(M\vert
  N)$} super-spin chains in arbitrary representations and for any {Dynkin}
  diagram},
\newblock Journal of High Energy Physics \textbf{2007}(9), 001 (2007),
\newblock \doi{10.1088/1126-6708/2007/09/001}.

\bibitem{GroLMS17}
N.~Gromov, F.~Levkovich-Maslyuk and G.~Sizov,
\newblock \emph{New construction of eigenstates and separation of variables for
  {$SU(N)$} quantum spin chains},
\newblock Journal of High Energy Physics \textbf{2017}(9) (2017),
\newblock \doi{10.1007/jhep09(2017)111}.

\bibitem{DerV18}
S.~Derkachov and P.~Valinevich,
\newblock \emph{Separation of variables for the quantum {$SL(3,\mathbb{C})$}
  spin magnet: eigenfunctions of {Sklyanin} {B}-operator},
\newblock arXiv preprint arXiv:1807.00302  (2018).

\bibitem{MaiN19}
J.~M. Maillet and G.~Niccoli,
\newblock \emph{{Complete spectrum of quantum integrable lattice models
  associated to $Y(gl(n))$ by separation of variables}},
\newblock SciPost Phys. \textbf{6}, 71 (2019),
\newblock \doi{10.21468/SciPostPhys.6.6.071}.

\bibitem{MaiN19a}
J.~M. Maillet and G.~Niccoli,
\newblock \emph{Complete spectrum of quantum integrable lattice models
  associated to $\mathcal{U}_{q} (\widehat{gl_{n}})$ by separation of
  variables},
\newblock Journal of Physics A: Mathematical and Theoretical \textbf{52}(31),
  315203 (2019),
\newblock \doi{10.1088/1751-8121/ab2930}.

\bibitem{MaiN18e}
J.~M. Maillet and G.~Niccoli,
\newblock \emph{On quantum separation of variables: beyond fundamental
  representations},
\newblock arXiv:1903.06618  (2019).

\bibitem{MaiN18b}
J.~M. Maillet and G.~Niccoli,
\newblock \emph{On separation of variables for reflection algebras},
\newblock Journal of Statistical Mechanics: Theory and Experiment
  \textbf{2019}(9), 094020 (2019),
\newblock \doi{10.1088/1742-5468/ab357a}.

\bibitem{RyaV18}
P.~Ryan and D.~Volin,
\newblock \emph{Separated variables and wave functions for rational {$gl(N)$}
  spin chains in the companion twist frame},
\newblock Journal of Mathematical Physics \textbf{60}(3), 032701 (2019),
\newblock \doi{10.1063/1.5085387}.

\bibitem{RyaV20}
P.~Ryan and D.~Volin,
\newblock \emph{Separation of variables for rational
  $\mathfrak{gl}(\mathsf{n})$ spin chains in any compact representation, via
  fusion, embedding morphism and {B\"acklund} flow} (2020),
  \eprint{2002.12341}.

\bibitem{LiaS18}
A.~Liashyk and N.~A. Slavnov,
\newblock \emph{On {Bethe} vectors in $gl(3)$-invariant integrable models},
\newblock Journal of High Energy Physics \textbf{2018}(6) (2018),
\newblock \doi{10.1007/jhep06(2018)018}.

\bibitem{GroF2018}
N.~Gromov and F.~Levkovich-Maslyuk,
\newblock \emph{New compact construction of eigenstates for supersymmetric spin
  chains},
\newblock Journal of High Energy Physics \textbf{2018}(9), 85 (2018),
\newblock \doi{10.1007/JHEP09(2018)085}.

\bibitem{KulR81}
P.~Kulish and N.~Y. Reshetikhin,
\newblock \emph{{Generalized Heisenberg ferromagnet and the Gross-Neveu
  model}},
\newblock Sov. Phys. JETP \textbf{53}(1), 108 (1981).

\bibitem{KulR83}
P.~P. Kulish and N.~Y. Reshetikhin,
\newblock \emph{Diagonalisation of {${GL}(N)$} invariant transfer matrices and
  quantum {N}-wave system ({Lee} model)},
\newblock Journal of Physics A: Mathematical and General \textbf{16}(16), L591
  (1983),
\newblock \doi{10.1088/0305-4470/16/16/001}.

\bibitem{BelR08}
S.~Belliard and E.~Ragoucy,
\newblock \emph{The nested {Bethe} ansatz for `all' closed spin chains},
\newblock Journal of Physics A: Mathematical and Theoretical \textbf{41}(29),
  295202 (2008),
\newblock \doi{10.1088/1751-8113/41/29/295202}.

\bibitem{PakRS18}
S.~Pakuliak, E.~Ragoucy and N.~Slavnov,
\newblock \emph{Nested {Algebraic} {Bethe} {Ansatz} in integrable models:
  recent results},
\newblock {SciPost} Physics Lecture Notes  (2018),
\newblock \doi{10.21468/scipostphyslectnotes.6}.

\bibitem{ProS20}
G.~P. Pronko and Y.~G. Stroganov,
\newblock \emph{Families of solutions of the nested {Bethe} ansatz for the
  $a_2$ spin chain},
\newblock Journal of Physics A: Mathematical and General \textbf{33}(46),
  8267–8273 (2000),
\newblock \doi{10.1088/0305-4470/33/46/309}.

\bibitem{BelPRS12}
S.~Belliard, S.~Pakuliak, E.~Ragoucy and N.~A. Slavnov,
\newblock \emph{Highest coefficient of scalar products in {$SU(3)$}-invariant
  integrable models},
\newblock Journal of Statistical Mechanics: Theory and Experiment
  \textbf{2012}(09), P09003 (2012),
\newblock \doi{10.1088/1742-5468/2012/09/p09003}.

\bibitem{BelPRS12a}
S.~Belliard, S.~Pakuliak, E.~Ragoucy and N.~A. Slavnov,
\newblock \emph{The algebraic {Bethe} ansatz for scalar products in
  {$SU(3)$}-invariant integrable models},
\newblock Journal of Statistical Mechanics: Theory and Experiment
  \textbf{2012}(10), P10017 (2012),
\newblock \doi{10.1088/1742-5468/2012/10/p10017}.

\bibitem{BelPRS13}
S.~Belliard, S.~Pakuliak, E.~Ragoucy and N.~A. Slavnov,
\newblock \emph{Bethe vectors of {$GL(3)$}-invariant integrable models},
\newblock Journal of Statistical Mechanics: Theory and Experiment
  \textbf{2013}(02), P02020 (2013),
\newblock \doi{10.1088/1742-5468/2013/02/p02020}.

\bibitem{BelPRS13a}
S.~Belliard, S.~Pakuliak, E.~Ragoucy and N.~A. Slavnov,
\newblock \emph{Form factors in {$SU(3)$}-invariant integrable models},
\newblock Journal of Statistical Mechanics: Theory and Experiment
  \textbf{2013}(04), P04033 (2013),
\newblock \doi{10.1088/1742-5468/2013/04/p04033}.

\bibitem{PakRS14}
S.~Pakuliak, E.~Ragoucy and N.~Slavnov,
\newblock \emph{Form factors in quantum integrable models with
  {$GL(3)$}-invariant {$R$}-matrix},
\newblock Nuclear Physics B \textbf{881}, 343 (2014),
\newblock \doi{https://doi.org/10.1016/j.nuclphysb.2014.02.014}.

\bibitem{PakRS14a}
S.~Z. Pakuliak, E.~Ragoucy and N.~A. Slavnov,
\newblock \emph{Determinant representations for form factors in quantum
  integrable models with the {$GL(3)$}-invariant {$R$}-matrix},
\newblock Theoretical and Mathematical Physics \textbf{181}(3), 1566 (2014),
\newblock \doi{10.1007/s11232-014-0236-0}.

\bibitem{PakRS15}
S.~Pakuliak, , E.~Ragoucy, N.~A. Slavnov and and,
\newblock \emph{{$GL(3)$}-based quantum integrable composite models. {I}.
  {Bethe} vectors},
\newblock Symmetry, Integrability and Geometry: Methods and Applications
  (2015),
\newblock \doi{10.3842/sigma.2015.063}.

\bibitem{PakRS15a}
S.~Pakuliak, , E.~Ragoucy, N.~A. Slavnov and and,
\newblock \emph{{$GL(3)$} -based quantum integrable composite models. {II}.
  form factors of local operators},
\newblock Symmetry, Integrability and Geometry: Methods and Applications
  (2015),
\newblock \doi{10.3842/sigma.2015.064}.

\bibitem{PakRS15b}
S.~Pakuliak, E.~Ragoucy and N.~A. Slavnov,
\newblock \emph{Form factors of local operators in a one-dimensional
  two-component {Bose} gas},
\newblock Journal of Physics A: Mathematical and Theoretical \textbf{48}(43),
  435001 (2015),
\newblock \doi{10.1088/1751-8113/48/43/435001}.

\bibitem{KulRS81}
P.~P. Kulish, N.~Y. Reshetikhin and E.~K. Sklyanin,
\newblock \emph{{Yang-Baxter} equation and representation theory: I},
\newblock Letters in Mathematical Physics \textbf{5}(5), 393 (1981),
\newblock \doi{10.1007/bf02285311}.

\bibitem{KulR82}
P.~P. Kulish and N.~Y. Reshetikhin,
\newblock \emph{{GL}3-invariant solutions of the {Yang-Baxter} equation and
  associated quantum systems},
\newblock Journal of Soviet Mathematics \textbf{34}(5), 1948 (1986),
\newblock \doi{10.1007/bf01095104}.

\bibitem{Res83}
N.~Y. Reshetikhin,
\newblock \emph{A method of functional equations in the theory of exactly
  solvable quantum systems},
\newblock Letters in Mathematical Physics \textbf{7}(3), 205 (1983),
\newblock \doi{10.1007/bf00400435}.

\bibitem{OgiW86}
E.~Ogievetsky and P.~Wiegmann,
\newblock \emph{Factorized {S}-matrix and the {Bethe} ansatz for simple {Lie}
  groups},
\newblock Physics Letters B \textbf{168}(4), 360 (1986),
\newblock \doi{10.1016/0370-2693(86)91644-8}.

\bibitem{ResW87}
N.~Reshetikhin and P.~Weigmann,
\newblock \emph{Towards the classification of completely integrable quantum
  field theories (the {B}ethe-ansatz associated with {D}ynkin diagrams and
  their automorphisms)},
\newblock Physics Letters B \textbf{189}(1-2), 125 (1987),
\newblock \doi{10.1016/0370-2693(87)91282-2}.

\bibitem{BazR90}
V.~V. Bazhanov and N.~Reshetikhin,
\newblock \emph{Restricted solid-on-solid models connected with simply laced
  algebras and conformal field theory},
\newblock Journal of Physics A: Mathematical and General \textbf{23}(9), 1477
  (1990),
\newblock \doi{10.1088/0305-4470/23/9/012}.

\bibitem{KluP92}
A.~Kl{\"u}mper and P.~A. Pearce,
\newblock \emph{Conformal weights of {RSOS} lattice models and their fusion
  hierarchies},
\newblock Physica A: Statistical Mechanics and its Applications
  \textbf{183}(3), 304 (1992),
\newblock \doi{10.1016/0378-4371(92)90149-k}.

\bibitem{KunN92}
A.~Kuniba and T.~Nakanishi,
\newblock \emph{Rogers dilogarithm in integrable systems},
\newblock In \emph{Differential Geometric Methods In Theoretical
  Physics-Proceedings Of The {XXI} International Conference}, p. 419. World
  Scientific (1993).

\bibitem{KunNS94}
A.~Kuniba, T.~Nakanishi and J.~Suzuki,
\newblock \emph{Functional relations in solvable lattice models {I}: Functional
  relations and representation theory},
\newblock International Journal of Modern Physics A \textbf{09}(30), 5215
  (1994),
\newblock \doi{10.1142/s0217751x94002119}.

\bibitem{KunNS11}
A.~Kuniba, T.~Nakanishi and J.~Suzuki,
\newblock \emph{T-systems and {Y}-systems in integrable systems},
\newblock Journal of Physics A: Mathematical and Theoretical \textbf{44}(10),
  103001 (2011),
\newblock \doi{10.1088/1751-8113/44/10/103001}.

\bibitem{Tsu97}
Z.~Tsuboi,
\newblock \emph{{Analytic {B}ethe ansatz and functional equations for {Lie}
  superalgebra $sl(r+1|s+1)$}},
\newblock Journal of Physics A: Mathematical and General \textbf{30}(22), 7975
  (1997),
\newblock \doi{10.1088/0305-4470/30/22/031}.

\bibitem{Tsu98}
Z.~Tsuboi,
\newblock \emph{{Analytic {B}ethe ansatz related to a one-parameter family of
  finite-dimensional representations of the {Lie} superalgebra $sl(r+1|s+1)$}},
\newblock Journal of Physics A: Mathematical and General \textbf{31}(24), 5485
  (1998),
\newblock \doi{10.1088/0305-4470/31/24/010}.

\bibitem{Tsu98a}
Z.~Tsuboi,
\newblock \emph{{Analytic {B}ethe ansatz and functional equations associated
  with any simple root systems of the {Lie} superalgebra $sl(r+1|s+1)$}},
\newblock Physica A: Statistical Mechanics and its Applications
  \textbf{252}(3-4), 565 (1998),
\newblock \doi{10.1016/s0378-4371(97)00625-0}.

\bibitem{Tsu06}
Z.~Tsuboi,
\newblock \emph{Nonlinear integral equations and high temperature expansion for
  the ${U}_q \left(\widehat{sl}(r+1|s+1)\right)$ {Perk}-{Schultz} model},
\newblock Nuclear Physics B \textbf{737}(3), 261 (2006),
\newblock \doi{10.1016/j.nuclphysb.2005.12.017}.

\bibitem{KazSZ08}
V.~Kazakov, A.~Sorin and A.~Zabrodin,
\newblock \emph{Supersymmetric {Bethe} ansatz and {Baxter} equations from
  discrete {Hirota} dynamics},
\newblock Nuclear Physics B \textbf{790}(3), 345 (2008),
\newblock \doi{10.1016/j.nuclphysb.2007.06.025}.

\bibitem{Zab08}
A.~V. Zabrodin,
\newblock \emph{B{\"a}cklund transformations for the difference {Hirota}
  equation and the supersymmetric {Bethe} ansatz},
\newblock Theoretical and Mathematical Physics \textbf{155}(1), 567 (2008),
\newblock \doi{10.1007/s11232-008-0047-2}.

\bibitem{KriLWZ97}
I.~Krichever, O.~Lipan, P.~Wiegmann and A.~Zabrodin,
\newblock \emph{Quantum integrable systems and elliptic solutions of classical
  discrete nonlinear equations},
\newblock In \emph{Low-Dimensional Applications of Quantum Field Theory}, pp.
  279--317. Springer {US},
\newblock \doi{10.1007/978-1-4899-1919-9_16} (1997).

\bibitem{Zab97}
A.~Zabrodin,
\newblock \emph{Discrete {Hirota's} equation in quantum integrable models},
\newblock International Journal of Modern Physics B \textbf{11}(26n27), 3125
  (1997),
\newblock \doi{10.1142/S0217979297001520}.

\bibitem{Zab98}
A.~V. Zabrodin,
\newblock \emph{Hirota equation and {Bethe} ansatz},
\newblock Theoretical and Mathematical Physics \textbf{116}(1), 782 (1998),
\newblock \doi{10.1007/bf02557123}.

\bibitem{KazLV16}
V.~Kazakov, S.~Leurent and D.~Volin,
\newblock \emph{T-system on {T}-hook: {Grassmannian} solution and twisted
  {Quantum} {Spectral} {Curve}},
\newblock Journal of High Energy Physics \textbf{2016}(12) (2016),
\newblock \doi{10.1007/jhep12(2016)044}.

\bibitem{MarV17}
C.~Marboe and D.~Volin,
\newblock \emph{Fast analytic solver of rational {Bethe} equations},
\newblock Journal of Physics A: Mathematical and Theoretical \textbf{50}(20),
  204002 (2017),
\newblock \doi{10.1088/1751-8121/aa6b88}.

\bibitem{MarV18}
C.~Marboe and D.~Volin,
\newblock \emph{The full spectrum of {AdS5}/{CFT4} {I}: representation theory
  and one-loop {Q}-system},
\newblock Journal of Physics A: Mathematical and Theoretical \textbf{51}(16),
  165401 (2018),
\newblock \doi{10.1088/1751-8121/aab34a}.

\bibitem{FraG2010}
A.~M. Grabinski and H.~Frahm,
\newblock \emph{Non-diagonal boundary conditions for $\mathfrak{gl}(1|1)$ super
  spin chains},
\newblock Journal of Physics A: Mathematical and Theoretical \textbf{43}(4),
  045207 (2010),
\newblock \doi{10.1088/1751-8113/43/4/045207}.

\bibitem{MukHL18}
C.~Huang, K.~Lu and E.~Mukhin,
\newblock \emph{Solutions of {$XXX$} bethe ansatz equation and rational
  difference operators},
\newblock Journal of Physics A: Mathematical and Theoretical \textbf{52}(37),
  375204 (2019),
\newblock \doi{10.1088/1751-8121/ab1960}.

\bibitem{HutLPRS16}
A.~Hutsalyuk, A.~Liashyk, S.~Z. Pakuliak, E.~Ragoucy and N.~A. Slavnov,
\newblock \emph{Multiple actions of the monodromy matrix in
  $gl(2\vert1)$-invariant integrable models},
\newblock Symmetry, Integrability and Geometry: Methods and Applications
  (2016),
\newblock \doi{10.3842/sigma.2016.099}.

\bibitem{HutLPRS17}
A.~A. Hutsalyuk, A.~N. Liashyk, S.~Z. Pakuliak, E.~Ragoucy and N.~A. Slavnov,
\newblock \emph{Current presentation for the super-{Yangian} double
  ${DY}\mathfrak{gl}(m\vert n)$ and {Bethe} vectors},
\newblock Russian Mathematical Surveys \textbf{72}(1), 33 (2017),
\newblock \doi{10.1070/rm9754}.

\bibitem{PakRS17}
S.~Z. Pakuliak, E.~Ragoucy and N.~A. Slavnov,
\newblock \emph{Bethe vectors for models based on the super-{Y}angian
  ${Y}(\mathfrak{gl}(m\vert n))$},
\newblock Journal of Integrable Systems \textbf{2}(1) (2017),
\newblock \doi{10.1093/integr/xyx001}.

\bibitem{HutLPRS17a}
A.~Hutsalyuk, A.~Liashyk, S.~Z. Pakuliak, E.~Ragoucy and N.~A. Slavnov,
\newblock \emph{Scalar products of {Bethe} vectors in models with
  $\mathfrak{gl}(2\vert 1)$ symmetry 2. determinant representation},
\newblock Journal of Physics A: Mathematical and Theoretical \textbf{50}(3),
  034004 (2016),
\newblock \doi{10.1088/1751-8121/50/3/034004}.

\bibitem{MukL19}
K.~Lu and E.~Mukhin,
\newblock \emph{On the supersymmetric xxx spin chains associated to
  $\mathfrak{gl}_{1|1}$} (2019), \eprint{1910.13360}.

\bibitem{CheLV20}
D.~Chernyak, S.~Leurent and D.~Volin,
\newblock \emph{Completeness of wronskian bethe equations for rational gl(m|n)
  spin chain} (2020), \eprint{2004.02865}.

\bibitem{MukTV13}
E.~Mukhin, V.~Tarasov and A.~Varchenko,
\newblock \emph{{Spaces} {of} {quasi}-{exponentials} {and} {representations}
  {of} {the} {Yangian} ${Y}(\mathfrak{gl}(n))$},
\newblock Transformation Groups \textbf{19}(3), 861 (2014),
\newblock \doi{10.1007/s00031-014-9275-8}.

\bibitem{Bax73}
R.~J. Baxter,
\newblock \emph{Spontaneous staggered polarization of the $f$-model},
\newblock Journal of Statistical Physics \textbf{9}(2), 145 (1973),
\newblock \doi{10.1007/bf01016845}.

\bibitem{Bax76}
R.~J. Baxter,
\newblock \emph{Corner transfer matrices of the eight-vertex model. {I}.
  low-temperature expansions and conjectured properties},
\newblock Journal of Statistical Physics \textbf{15}(6), 485 (1976),
\newblock \doi{10.1007/bf01020802}.

\bibitem{Bax77}
R.~J. Baxter,
\newblock \emph{Corner transfer matrices of the eight-vertex model. {II}. the
  {Ising} model case},
\newblock Journal of Statistical Physics \textbf{17}(1), 1 (1977),
\newblock \doi{10.1007/bf01089373}.

\bibitem{Bax78}
R.~J. Baxter,
\newblock \emph{Solvable eight-vertex model on an arbitrary planar lattice},
\newblock Philosophical Transactions of the Royal Society A: Mathematical,
  Physical and Engineering Sciences \textbf{289}(1359), 315 (1978),
\newblock \doi{10.1098/rsta.1978.0062}.

\bibitem{PasG92}
V.~Pasquier and H.~Saleur,
\newblock \emph{Common structures between finite systems and conformal field
  theories through quantum groups},
\newblock Nuclear Physics B \textbf{330}(2-3), 523 (1990),
\newblock \doi{10.1016/0550-3213(90)90122-t}.

\bibitem{BatBOY95}
M.~T. Batchelor, R.~J. Baxter, M.~J. O'Rourke and C.~M. Yung,
\newblock \emph{Exact solution and interfacial tension of the six-vertex model
  with anti-periodic boundary conditions},
\newblock Journal of Physics A: Mathematical and General \textbf{28}(10), 2759
  (1995),
\newblock \doi{10.1088/0305-4470/28/10/009}.

\bibitem{YunB95}
C.~Yung and M.~Batchelor,
\newblock \emph{Exact solution for the spin-$s$ {XXZ} quantum chain with
  non-diagonal twists},
\newblock Nuclear Physics B \textbf{446}(3), 461 (1995),
\newblock \doi{10.1016/0550-3213(95)00168-r}.

\bibitem{BazLZ97}
V.~V. Bazhanov, S.~L. Lukyanov and A.~B. Zamolodchikov,
\newblock \emph{Integrable structure of conformal field theory {II}.
  {Q-operator} and {DDV} equation},
\newblock Communications in Mathematical Physics \textbf{190}(2), 247 (1997),
\newblock \doi{10.1007/s002200050240}.

\bibitem{AntF97}
A.~Antonov and B.~Feigin,
\newblock \emph{Quantum group representations and the {Baxter} equation},
\newblock Physics Letters B \textbf{392}(1-2), 115 (1997),
\newblock \doi{10.1016/s0370-2693(96)01526-2}.

\bibitem{BazLZ99}
V.~V. Bazhanov, S.~L. Lukyanov and A.~B. Zamolodchikov,
\newblock \emph{Integrable structure of {Conformal Field Theory} {III}. the
  {Yang-Baxter} relation},
\newblock Communications in Mathematical Physics \textbf{200}(2), 297 (1999),
\newblock \doi{10.1007/s002200050531}.

\bibitem{Der99}
S.~Derkachov,
\newblock \emph{Baxter's {Q-operator} for the homogeneous {XXX} spin chain},
\newblock Journal of Physics A: Mathematical and General \textbf{32}(28), 5299
  (1999),
\newblock \doi{10.1088/0305-4470/32/28/309}.

\bibitem{Pro00}
G.~Pronko,
\newblock \emph{On {Baxter's} {Q-operator} for the {XXX} spin chain},
\newblock Communications in Mathematical Physics \textbf{212}(3), 687 (2000),
\newblock \doi{10.1007/s002200000235}.

\bibitem{Kor06}
C.~Korff,
\newblock \emph{A {Q-operator} for the twisted {XXX} model},
\newblock Journal of Physics A: Mathematical and General \textbf{39}(13), 3203
  (2006),
\newblock \doi{10.1088/0305-4470/39/13/002}.

\bibitem{Der08}
S.~Derkachov,
\newblock \emph{Factorization of the {R-matrix} and {B}axter's {Q-operator}},
\newblock Journal of Mathematical Sciences \textbf{151}(2), 2880 (2008),
\newblock \doi{10.1007/s10958-008-9010-x}.

\bibitem{BazLMS10}
V.~V. Bazhanov, T.~Lukowski, C.~Meneghelli and M.~Staudacher,
\newblock \emph{A shortcut to the {Q-operator}},
\newblock Journal of Statistical Mechanics: Theory and Experiment
  \textbf{2010}(11), P11002 (2010),
\newblock \doi{10.1088/1742-5468/2010/11/p11002}.

\bibitem{Man14}
V.~V. Mangazeev,
\newblock \emph{{Q-operators} in the six-vertex model},
\newblock Nuclear Physics B \textbf{886}, 166 (2014),
\newblock \doi{10.1016/j.nuclphysb.2014.06.024}.

\bibitem{BooGKNR10}
H.~Boos, F.~Göhmann, A.~Klümper, K.~S. Nirov and A.~V. Razumov,
\newblock \emph{Exercises with the {universalR}-matrix},
\newblock Journal of Physics A: Mathematical and Theoretical \textbf{43}(41),
  415208 (2010),
\newblock \doi{10.1088/1751-8113/43/41/415208}.

\bibitem{BooGKNR14}
H.~Boos, F.~Göhmann, A.~Klümper, K.~S. Nirov and A.~V. Razumov,
\newblock \emph{Quantum groups and functional relations for higher rank},
\newblock Journal of Physics A: Mathematical and Theoretical \textbf{47}(27),
  275201 (2014),
\newblock \doi{10.1088/1751-8113/47/27/275201}.

\bibitem{BooGKNR16}
H.~Boos, F.~Göhmann, A.~Klümper, K.~S. Nirov and A.~V. Razumov,
\newblock \emph{Oscillator versus pre-fundamental representations},
\newblock Journal of Mathematical Physics \textbf{57}(11), 111702 (2016),
\newblock \doi{10.1063/1.4966925}.

\bibitem{BooGKNR17}
H.~Boos, F.~Göhmann, A.~Klümper, K.~S. Nirov and A.~V. Razumov,
\newblock \emph{Oscillator versus pre-fundamental representations. {II}.
  arbitrary higher ranks},
\newblock Journal of Mathematical Physics \textbf{58}(9), 093504 (2017),
\newblock \doi{10.1063/1.5001336}.

\bibitem{MenT15}
C.~Meneghelli and J.~Teschner,
\newblock \emph{Integrable light-cone lattice discretizations from the
  universal {R}-matrix},
\newblock Advances in Theoretical and Mathematical Physics \textbf{21}(5), 1189
  (2017),
\newblock \doi{10.4310/atmp.2017.v21.n5.a3}.

\bibitem{FraLMS11}
R.~Frassek, T.~Łukowski, C.~Meneghelli and M.~Staudacher,
\newblock \emph{Oscillator construction of $su(n,m)$ $Q$-operators},
\newblock Nuclear Physics B \textbf{850}(1), 175–198 (2011),
\newblock \doi{10.1016/j.nuclphysb.2011.04.008}.

\bibitem{MaiNV20}
J.~M. Maillet, G.~Niccoli and L.~Vignoli,
\newblock \emph{On scalar products in higher rank quantum separation of
  variables} (2020), \eprint{2003.04281}.

\bibitem{CavGLM19}
A.~Cavagli{\`{a}}, N.~Gromov and F.~Levkovich-Maslyuk,
\newblock \emph{Separation of variables and scalar products at any rank},
\newblock Journal of High Energy Physics \textbf{2019}(9) (2019),
\newblock \doi{10.1007/jhep09(2019)052}.

\bibitem{GroLMRV19}
N.~Gromov, F.~Levkovich-Maslyuk, P.~Ryan and D.~Volin,
\newblock \emph{Dual separated variables and scalar products} (2019),
  \eprint{1910.13442}.

\bibitem{KAC19778}
V.~Kac,
\newblock \emph{Lie superalgebras},
\newblock Advances in Mathematics \textbf{26}(1), 8  (1977),
\newblock \doi{https://doi.org/10.1016/0001-8708(77)90017-2}.

\bibitem{musson2012lie}
I.~M. Musson,
\newblock \emph{Lie superalgebras and enveloping algebras}, vol. 131,
\newblock American Mathematical Soc.,
\newblock \doi{10.1090/gsm/131} (2012).

\bibitem{Gru00}
J.~Gruneberg,
\newblock \emph{On the construction and solution of
  {$U_q(gl(2,1;\mathbb{C}))$}-symmetric models},
\newblock Nuclear Physics. B \textbf{568} (2000),
\newblock \doi{10.1016/S0550-3213(99)00675-6}.

\bibitem{Naz91}
M.~Nazarov,
\newblock \emph{Quantum {Berezinian} and the classical {Capelli} identity},
\newblock Letters in Mathematical Physics \textbf{21}(2), 123 (1991),
\newblock \doi{10.1007/bf00401646}.

\bibitem{Gow07}
L.~Gow,
\newblock \emph{Yangians of {Lie} superalgebras},
\newblock Ph.D. thesis, University of Sydney (2007).

\bibitem{Bars1984}
I.~Bars,
\newblock \emph{Supergroups and Their Representations}, pp. 107--184,
\newblock Springer US,
\newblock \doi{10.1007/978-1-4757-0917-9_5} (1984).

\bibitem{Bar85}
I.~Bars,
\newblock \emph{Supergroups and superalgebras in physics},
\newblock Physica D: Nonlinear Phenomena \textbf{15}(1-2), 42 (1985),
\newblock \doi{10.1016/0167-2789(85)90147-2}.

\bibitem{alfaro1997orthogonality}
J.~Alfaro, R.~Medina and L.~F. Urrutia,
\newblock \emph{{Orthogonality relations and supercharacter formulas of
  $U(m|n)$ representations}},
\newblock Journal of Mathematical Physics \textbf{38}(10), 5319 (1997),
\newblock \doi{10.1063/1.531945}.

\bibitem{bars1983kac}
I.~Bars, B.~Morel and H.~Ruegg,
\newblock \emph{{Kac--Dynkin} diagrams and supertableaux},
\newblock Journal of mathematical physics \textbf{24}(9), 2253 (1983),
\newblock \doi{10.1063/1.525970}.

\bibitem{KazV08}
V.~Kazakov and P.~Vieira,
\newblock \emph{From characters to quantum (super)spin chains via fusion},
\newblock Journal of High Energy Physics \textbf{2008}(10), 050 (2008),
\newblock \doi{10.1088/1126-6708/2008/10/050}.

\bibitem{KazLT12}
V.~Kazakov, S.~Leurent and Z.~Tsuboi,
\newblock \emph{Baxter's {Q}-operators and operatorial {B}{\"a}cklund flow for
  quantum (super)-spin chains},
\newblock Communications in Mathematical Physics \textbf{311}(3), 787 (2012),
\newblock \doi{10.1007/s00220-012-1428-9}.

\bibitem{Tar94}
V.~Tarasov and A.~Varchenko,
\newblock \emph{{Completeness of Bethe vectors and difference equations with
  regular singular points}},
\newblock International Mathematics Research Notices \textbf{1995}(13), 637
  (1995),
\newblock \doi{10.1155/s1073792895000377}.

\bibitem{BazT08}
V.~V. Bazhanov and Z.~Tsuboi,
\newblock \emph{Baxter's {Q}-operators for supersymmetric spin chains},
\newblock Nuclear Physics B \textbf{805}(3), 451  (2008),
\newblock \doi{https://doi.org/10.1016/j.nuclphysb.2008.06.025}.

\bibitem{Tsu09}
Z.~Tsuboi,
\newblock \emph{Solutions of the {T}-system and {Baxter} equations for
  supersymmetric spin chains},
\newblock Nuclear Physics B \textbf{826}(3), 399  (2010),
\newblock \doi{https://doi.org/10.1016/j.nuclphysb.2009.08.009}.

\end{thebibliography}


}}

\end{document}